\newenvironment{construction}[1][]{%
\ifstrempty{#1}%
{\mdfsetup{%
frametitle={%
\tikz[baseline=(current bounding box.east),outer sep=0pt]
\node[anchor=east,rectangle,fill=black!20]
{\strut Construction};}}
}%
{\mdfsetup{%
frametitle={%
\tikz[baseline=(current bounding box.east),outer sep=0pt]
\node[anchor=east,rectangle,rounded corners,draw,fill=white]
{\strut~#1};}}%
}%
\mdfsetup{innertopmargin=5pt,linecolor=black,%
linewidth=.5pt,topline=true,
frametitleaboveskip=\dimexpr-\ht\strutbox\relax,}
\begin{mdframed}[]\relax%
}{\end{mdframed}}
\begin{document}

\title{Lossy Cryptography from Code-Based Assumptions\\
\vspace{.8em}
\large Dense-Sparse LPN: A New Subexponentially Hard LPN Variant in SZK}

\author{
   Quang Dao\thanks{Carnegie Mellon University. Email: \texttt{qvd@andrew.cmu.edu}} \and
   Aayush Jain\thanks{Carnegie Mellon University. Email: \texttt{aayushja@andrew.cmu.edu}}
}
\maketitle
 
\noteswarning 

\renewcommand{\subparagraph}[1]{\medskip\noindent\underline{\textit{#1}}}

\thispagestyle{empty}
\begin{abstract}

Over the past few decades, we have seen a proliferation of advanced cryptographic primitives with lossy or homomorphic properties built from various assumptions such as Quadratic Residuosity, Decisional Diffie-Hellman, and Learning with Errors. These primitives imply hard problems in the complexity class $\mathcal{SZK}$ (statistical zero-knowledge); as a consequence, they can only be based on assumptions that are broken in $\mathcal{BPP}^{\mathcal{SZK}}$. This poses a barrier for building advanced primitives from code-based assumptions, as the only known such assumption is Learning Parity with Noise (LPN) with an extremely low noise rate $\frac{\log^2 n}{n}$, which is broken in quasi-polynomial time.

In this work, we propose a new code-based assumption: Dense-Sparse LPN, that falls in the complexity class $\mathcal{BPP}^{\mathcal{SZK}}$ and is conjectured to be secure against subexponential time adversaries. Our assumption is a variant of LPN that is inspired by McEliece's cryptosystem and random $k\mbox{-}$XOR in average-case complexity. Roughly, the assumption states that
\[(\mathbf{T}\, \mathbf{M}, \mathbf{s} \,\mathbf{T}\, \mathbf{M} + \mathbf{e}) \quad \text{is indistinguishable from}\quad  (\mathbf{T} \,\mathbf{M}, \mathbf{u}),\]
for a random (dense) matrix $\mathbf{T}$, random sparse matrix $\mathbf{M}$, and sparse noise vector $\mathbf{e}$ drawn from the Bernoulli distribution with inverse polynomial noise probability.

We leverage our assumption to build lossy trapdoor functions (Peikert-Waters STOC 08). This gives the first post-quantum alternative to the lattice-based construction in the original paper. Lossy trapdoor functions, being a fundamental cryptographic tool, are known to enable a broad spectrum of both lossy and non-lossy cryptographic primitives; our construction thus implies these primitives in a generic manner. In particular, we achieve collision-resistant hash functions with plausible subexponential security, improving over a prior construction from LPN with noise rate $\frac{\log^2 n}{n}$ that is only quasi-polynomially secure.

\end{abstract}
\pagebreak
\tableofcontents
\thispagestyle{empty}
\pagebreak
\setcounter{page}{1}
\section{Introduction}\label{sec:intro}

Introduced in 2005, the Learning with Errors (LWE) assumption~\cite{STOC:Regev05} has emerged as a basis for designing post-quantum cryptography. LWE and its structured variants such as Ring-LWE~\cite{EC:LyuPeiReg10} or NTRU~\cite{HofPipSil98} are central to constructing a host of advanced cryptographic primitives including fully homomorphic encryption for classical~\cite{STOC:Gentry09,FOCS:BraVai11,C:GenSahWat13} and quantum computations~\cite{FOCS:Mahadev18b,C:Brakerski18}, attribute-based and other advanced encryption schemes~\cite{STOC:GorVaiWee13,C:GorVaiWee15}, non-interactive zero-knowledge~\cite{C:PeiShi19}, succinct arguments~\cite{FOCS:ChoJaiJin21}, and many other advances in classical~\cite{FOCS:GoyKopWat17,FOCS:WicZir17,STOC:GoyKopWat18,STOC:LinMooWic23} and quantum cryptography~\cite{FOCS:BCMVV18,FOCS:Mahadev18a}.

While LWE has proven to be surprisingly expressive in giving rise to advanced primitives, other post-quantum assumptions such as Learning Parity with Noise~\cite{C:BFKL93}, Isogenies~\cite{EPRINT:Couveignes06,EPRINT:RosSto06,AC:CLMPR18}, and Multivariate Quadratics~\cite{MQ-first-paper}, currently stand nowhere close in implying such advanced primitives, making LWE the single point of failure for designing advanced post-quantum cryptography. This state of affairs is highly unsatisfactory, since we would like to have some diversity in the assumptions implying a given primitive to hedge against unexpected cryptanalytic breakthroughs. Indeed, recent works~\cite{EC:CasDec23,EC:MMPPW23,EC:Robert23} have rendered the once-believed post-quantum assumption of SIDH classically broken in polynomial time.

This work aims to address a possible stagnation in terms of techniques and assumptions implying advanced post-quantum cryptography. This lack of versatility in assumptions for the most part can be attributed to the \emph{lack of techniques} in utilizing other post-quantum assumptions. The focus of this work lies in code-based cryptographic assumptions such as the Learning Parity with Noise (LPN) assumption~\cite{C:BFKL93} and its variants.

Learning Parity with Noise posits that random linear equations (with a planted secret solution) that is perturbed by sparse noise appears pseudorandom. Namely:
\begin{align*}
(\mat{A}, \mat{s} \cdot \mat{A}+\mat{e}) \approx_c  (\mat{A}, \mat{b}),
\end{align*}
where the coefficient matrix $\mat{A}$ is chosen at random from $\F^{n\times m}_{2}$, the secret $\mat{s}\leftarrow \F^{1\times n}_{2}$, $\mat{b}$ is chosen to be random vector in $\F^{1\times m}_{2}$ and the error vector $\mat{e} \in \F^{1\times m}_{2}$ is chosen so that each coordinate is i.i.d. sampled from the Bernoulli distribution with probability $\epsilon$. The problem is believed to be \emph{subexponentially secure}, meaning that subexponential $\exp(n^{O(1)})$-time adversaries have negligible distinguishing advantage when $\epsilon=O\left(\frac{1}{n^{\delta}}\right)$ for any constant $\delta \in (0,1)$.\footnote{A stronger version of subexponential security, which we do not consider in this work, also requires that the distinguishing advantage is an inverse subexponential $\exp(-n^{O(1)})$.} 

LPN is conceptually similar to LWE, in the sense that both posit pseudorandomness of planted random linear equations perturbed with noise. However, while for LWE the noise has low magnitude, for LPN it is sparse. One would expect that due to this similarity, LPN should imply a comparable variety of advanced primitives---yet this could not be any further from reality. On the one hand, recent works have leveraged the sparse noise structure to build specialized primitives such as homomorphic secret sharing~\cite{C:DIJL23}, or use LPN (over large fields) in combination with Bilinear Maps~\cite{STOC:MenVanOka91} and Goldreich's PRG~\cite{goldreich-prg} to build indistinguishability obfuscation~\cite{STOC:JaiLinSah21,EC:JaiLinSah22}.

On the other hand, despite almost three decades of research and in drastic contrast to LWE, we currently know only a handful of ways to leverage the LPN assumption. This is evident in the fact that aside from CPA/CCA secure public-key encryption schemes~\cite{FOCS:Alekhnovich03,AC:DotMulNas12,PKC:KilMasPie14,C:YuZha16} and UC-secure two-round oblivious transfer~\cite{EC:DGHMW20}, subexponentially secure variants of LPN alone are currently not known to imply any other primitives in Cryptomania~\cite{impagliazzo-five-worlds}.
Things seem to improve when one works with the quasi-polynomial time broken variant of the LPN assumption with very small noise probability $O(\frac{\log^2 n}{n})$, but even assuming this variant, very few primitives are known. These include collision-resistant and collapsing hash functions~\cite{EC:BLVW19,AC:YZWGL19,C:Zhandry22c}, identity-based encryption~\cite{EC:BLSV18}, and statistically-sender-private oblivious transfer~\cite{C:BitFre22}.

This brings us to our goal:

\begin{center}
{\bf Goal.} \emph{Devise new coding-theoretic techniques and assumptions for building advanced cryptography.}
\end{center}

To facilitate progress on the main goal above, we focus more on identifying properties of the assumption that could enable progress on the question, rather than focusing on specific primitives themselves. What makes assumptions such as LWE, Diffie-Hellman, Bilinear Maps, or Quadratic Reciprocity, special is that they can be used to design primitives with ``lossy'' or ``homomorphic'' properties, such as lossy trapdoor functions~\cite{STOC:PeiWat08} and linearly homomorphic encryption. Furthermore, the homomorphic/lossy properties of the assumption make them easier to work with to design other advanced Cryptomania primitives, such as attribute-based encryption or succinct arguments.

A key property that captures such assumptions is that they can be broken using an $\mathcal{SZK}$ oracle, where $\mathcal{SZK}$ is the complexity class of languages that have statistically-hiding zero-knowledge proofs. This ``$\mathcal{SZK}$-broken'' complexity class, known as $\BPP^{\mathcal{SZK}}$, consists of languages that can be decided efficiently using a \emph{statistical difference} ($\mathsf{SD}$) oracle~\cite{FOCS:SahVad97}. The $\mathsf{SD}$ oracle takes as input two polynomial sized distribution samplers $(\cD_0, \cD_1)$ (represented as randomized circuits), with the promise that either the statistical distance between the distributions is less than $\frac{1}{3}$ or it is more than $\frac{2}{3}$. The oracle then identifies which is the case.

This $\mathcal{SZK}$ regime indeed captures all of the assumptions mentioned above. For LPN, it is known \cite{EC:BLVW19} that the quasi-polynomial time broken variant with noise probability $O(\frac{\log^2 n}{n})$ can be broken with an $\mathcal{SZK}$ oracle, whereas subexponential time secure variants with inverse polynomial noise probability $\frac{1}{n^{\delta}}$ for $\delta\in (0,1)$ are currently not known to be in $\mathcal{SZK}$. This helps to explain why so little is known from LPN with inverse polynomial noise rate.

Therefore, to make progress in code-based cryptography, the first step would be to answer the following question:

\begin{center}
{\bf Question.} \emph{Is there a subexponentially-secure coding theoretic assumption with inverse polynomial noise probability that is in $\mathcal{SZK}$?}   
\end{center}

\subsection{Our Results}\label{sec:results}

We introduce a novel, well-motivated variant of LPN that we believe is secure against subexponential time algorithms. Unlike LPN where the matrix is chosen randomly, in our case it is a structured matrix. We work with a sufficiently small but inverse polynomial probability $\frac{1}{n^{\delta}}$ for a constant $\delta \in (0.5,1)$, a regime for which our assumption can be conjectured to be hard against subexponential $\exp(\tilde{O}(n^{1-\delta}))$-time adversaries.\footnote{However, an adversary's success probability in breaking our assumption is at least inverse quasi-polynomial; see~\Cref{sec:discussion-assumption} for more discussion.} Since our assumption implies primitives that can be broken by $\BPP^{\mathcal{SZK}}$,\footnote{It's a folklore result that lossy trapdoor functions, which we construct, lie in $\BPP^{\mathcal{SZK}}$; a formal proof can be found in~\cite[Appendix B]{TCC:FisRoh23}.}
our assumption indeed lies in $\BPP^{\mathcal{SZK}}$, making it the first plausible subexponential-time secure coding-theoretic assumption known to be in $\BPP^{\mathcal{SZK}}$.

\paragraph{New Assumption: Dense-Sparse LPN.} Our assumption borrows structural properties of two well-studied assumptions: the standard Learning Parity with Noise \cite{C:BFKL93} and the sparse Learning Parity with Noise problem~\cite{FOCS:Alekhnovich03}. Introduced in 2003 by Alekhnovich, Sparse LPN is exactly like standard LPN except that each column is chosen randomly among $k$-sparse vectors, where $k\geq 3$ is a constant. Sparse LPN is closely related to well-studied problems in the domain of constraint satisfaction and local pseudorandom generators~\cite{Gol00,CM01,STOC:Feige02,FOCS:MosShpTre03,FOCS:FeiKimOfe06,TCC:CEMT09,APPROX:BogQia09,STOC:AppBarWig10,TCC:AppBogRos12,BogdanovQ12,STOC:Applebaum12,SIAM:App13,CCC:OdoWit14,STOC:AppLov16,STOC:KMOW17,AC:CDMRR18,FOCS:AppKac19}, and when the number of samples satisfies $m= n^{\frac{k}{2}(1-\rho)}$ for any constant $0 < \rho < 1$,\footnote{When $m= \Omega(n^{k/2})$, due to the birthday bound two equations will repeat with constant probability, implying a trivial cheating strategy.}
it is believed to be subexponentially secure (provided the noise probability is a large enough inverse polynomial). Sparse LPN has also been shown to give rise to public-key encryption by Applebaum, Barak and Wigderson~\cite{STOC:AppBarWig10}, but not more advanced Cryptomania primitives.

Our assumption combines features from both LPN and Sparse LPN, and posits that LPN holds for the following \emph{Dense-Sparse} matrix distribution. We first sample a $k$-sparse matrix $\mat{M}\in \F^{n \times m}_{2}$ according to the distribution of coefficient matrix for the Sparse LPN assumption. We then sample a random (dense) matrix $\mat{T}\leftarrow \F^{n' \times n}_{2}$, where $n' = \alpha n$ for some constant $\alpha \in (0,1)$ (for simplicity, we set $\alpha=1/2$ in our paper). Finally, we give out $\mat{A}=\mat{T}\cdot \mat{M} \in \F^{n/2 \times m}_{2}$, and assume that random codewords of $\mat A$, perturbed by a Bernoulli noise vector $\mat e$ of inverse-polynomial noise rate, look pseudorandom. More formally, our assumption is stated as follows.

\begin{assumption}[Dense-Sparse LPN, informal]\label{ass:dense-sparse-lpn}
Let $k \ge 3$ be a constant, and consider parameters $n \in \N$, $m=m(n) < n^{k/2}$, and $\epsilon=\epsilon(n) < 1$. Let $\calM_{\sparse}$ be an efficiently sampleable ``good'' distribution over all $k$-sparse matrices in $\F_2^{n \times m}$.
We say that the \emph{$(n,m,k,\calM_{\sparse},\epsilon)$-Dense-Sparse LPN} assumption holds if the following two distributions are computationally indistinguishable:
\begin{align*}
    \{(\mat T \, \mat M, \mat s \, \mat T \, \mat M + \mat e)\}_{n \in \N} \approx_c \{(\mat T \, \mat M, \mat u)\}_{n \in \N},
\end{align*}
where $\mat T \gets \F_2^{n/2 \times n}$, $\mat M \gets \calM_{\sparse}$, $\mat s \gets \F_2^{1 \times n/2}$, $\mat e \gets \Ber(\epsilon)^{1 \times m}$, and $\mat u \gets \F_2^m$.
\end{assumption}
Looking ahead, our constructions will require us to assume Dense-Sparse LPN for an inverse polynomial noise rate $\epsilon = O(n^{-\delta})$ for some constant $\delta$ close to $1$, and the number of samples $m = \Omega(n^{1+\rho(\delta)})$ for a constant $\rho$ that depends on $\delta$. This parameter regime is plausibly secure against subexponential-time adversaries (see \Cref{sec:cryptanalysis} for details).

An important technical point in \Cref{ass:dense-sparse-lpn} is that of a ``good'' distribution of $k$-sparse matrices. This is due to the following reason: for $\mat M \in \F_2^{n \times m}$ chosen uniformly at random from the set of all $k$-sparse matrices, there is an \emph{inverse polynomial} probability of $O(m^2/n^k)$ that $\mat M$ has a vector $\mat x$ in its kernel of constant Hamming weight (so that $\mat M\, \mat x = \mat 0$). When this ``bad'' event happens, one cannot hope for distinguishing security to hold. Thus, since we want our distinguishing advantage to be negligible, we must sample $\mat M$ from another ``good'' distribution where this ``bad'' event happens with negligible probability; in particular, we will use the recent distribution constructed by Applebaum and Kachlon~\cite{FOCS:AppKac19}. We will expand on this in \Cref{sec:discussion-assumption}.

\paragraph{Connections to McEliece.} Our assumption can be viewed as applying the design principles of the classic McEliece~\cite{McEliece78} and Niederreiter~\cite{Niederreiter86} cryptosystems, which is to hide the sparse matrix $\mat M$ whose exposure would lead to an efficient attack in our parameter regime. In this sense, we follow a rich body of works on McEliece instantiated with different families of codes~\cite{mceliece-rm-codes,mceliece-convolutional-codes,BL05,wild-mceliece,PQCRYPTO:BerLanPet11,mceliece-ag-codes,EPRINT:MTSB12,mceliece-polar-codes,mceliece-polar-codes-2}. Nevertheless, there are two important distinctions between our assumption and prior McEliece variants. The first is that our variants are \emph{not} algebraically structured, unlike the original McEliece cryptosystem itself (which uses binary Goppa codes), or many other algebraic variants~\cite{mceliece-rm-codes,mceliece-ag-codes,BL05} or LDPC codes \cite{BC07,SCN:BalBodChi08} that have subsequently been broken~\cite{SS92-attack-mceliece-grs,BC07,EC:MinSho07,attack-mceliece-qc-codes,PQCRYPTO:Wieschebrink10,PQCRYPTO:LanTil13,EC:CouOtmTil14,PQCRYPTO:BCDOT16}.\footnote{In this sense, our assumption is related to the more combinatorial McEliece variant with \emph{medium-density parity check (MDPC)} codes~\cite{EPRINT:MTSB12}, which still remains secure to this day.} Secondly, we diverge from McEliece by making the masking matrix $\vec T$ \emph{compressing} of dimension $\alpha n \times n$ for any $\alpha < 1$, which is necessary for ensuring security in our setting. We expand more on this connection in \Cref{sec:crhf-overview}.

\paragraph{Cryptographic Applications.} We leverage Dense-Sparse LPN with inverse polynomial noise rate to build the following two primitives: a \emph{collision-resistant hash function} (in a simple and direct manner), and a \emph{lossy trapdoor function}.

Lossy Trapdoor Functions (LTDFs), introduced by Peikert and Waters in 2008~\cite{STOC:PeiWat08}, is a fundamental cryptographic tool that has found countless applications to building other cryptographic applications. LTDFs consist of a function family $F_{\fk}(\cdot )$ indexed by a public function key $\fk$, where the algorithm $\mathsf{Gen}$ that samples $\fk$ could sample keys in two modes. When the mode is injective, then the function $F_{\fk}(\cdot)$ is injective and can even be inverted uniquely using a trapdoor $\mathsf{td}$ generated by $\mathsf{Gen}$ at the same time of sampling $\fk$. In lossy mode, the range of the function $F_{\fk}(\cdot)$ is significantly smaller than the number of inputs. Equivalently, this also means that the conditional entropy in $x \in \{0,1\}^{\ell}$, given $y=F_{\fk}(x)$ for a random $x$ is large. In our setting we design such LTDFs for which the conditional entropy is at least $\Omega(\ell)$ where $\ell$ is the bit length of $x$. Finally, the two modes are required to be computationally indistinguishable, meaning that it is computationally hard to distinguish a random lossy key from a random injective key.

\begin{theorem}[informal]\label{thm:ltdf-informal}
Assuming Dense-Sparse LPN with inverse polynomial noise probability, there exists a construction of LTDF where the lossy mode loses any constant fraction $\Omega(\ell)$ of the input length $\ell$.

\end{theorem}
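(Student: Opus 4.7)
The plan is to follow the Peikert--Waters LTDF template, substituting LWE by Dense-Sparse LPN. The public key in injective mode will embed a hidden ``gadget'' that allows recovery of the input through a trapdoor; in lossy mode the gadget is absent and the key looks like pure Dense-Sparse LPN samples. Mode indistinguishability will then reduce in one step to \Cref{ass:dense-sparse-lpn}, while injectivity (with inversion) and lossiness will each be proved by separate coding-theoretic and linear-algebraic arguments.

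First I would specify the construction. Sample $\mat T \gets \F_2^{n/2 \times n}$, $\mat M \gets \calM_{\sparse}$, and let $\mat A = \mat T \mat M \in \F_2^{n/2 \times m}$. Then sample $\mat S \gets \F_2^{\ell' \times n/2}$ and $\mat E \gets \Ber(\epsilon)^{\ell' \times m}$ and form $\mat C = \mat S \mat A + \mat E + \tau \cdot \mat G$, where $\tau = 1$ in the injective mode and $\tau = 0$ in the lossy mode, and $\mat G \in \F_2^{\ell' \times m}$ is a gadget matrix supporting error-tolerant decoding. The function key is $\mathsf{fk} = (\mat A, \mat C)$, the trapdoor is $\mathsf{td} = \mat S$, and the evaluation on an input $\mat x$ drawn from a suitably chosen domain $\mathcal X \subseteq \F_2^m$ is $F_{\mathsf{fk}}(\mat x) = (\mat A \mat x,\, \mat C \mat x)$. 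Mode indistinguishability follows by a standard hybrid over the rows of $\mat S$: each row of $\mat S \mat A + \mat E$ is a Dense-Sparse LPN sample with public matrix $\mat A$, so by \Cref{ass:dense-sparse-lpn} the whole matrix $\mat C$ in lossy mode is pseudorandom, hence indistinguishable from $\mat C + \mat G$.

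For inversion in injective mode, given the trapdoor $\mat S$ and an output $(\mat y_1, \mat y_2)$, one computes $\mat y_2 - \mat S \mat y_1 = \mat E \mat x + \mat G \mat x$; choosing $\mat G$ as the generator of a binary code with good distance then recovers $\mat x$ as long as the effective noise $\mat E \mat x$ has small Hamming weight. For lossiness, the evaluation in lossy mode is a linear map of rank at most $n/2 + \operatorname{rank}(\mat E) \le n/2 + \ell'$; choosing $\ell'$ so that $n/2 + \ell'$ is at most $(1-c)\log|\mathcal X|$ gives lossiness of $\Omega(\ell)$. The main obstacle is to harmonize these requirements: for Dense-Sparse LPN hardness, $\epsilon$ is only an inverse polynomial and $m$ is at least $n^{1+\rho(\delta)}$, so $\mat E \mat x$ for a uniform input would have expected weight far beyond any code's decoding radius. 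Overcoming this likely requires restricting the input domain $\mathcal X$ (e.g., to sufficiently sparse vectors or to an encoded subspace), picking the gadget $\mat G$ and the underlying code accordingly, and tuning $(n, m, k, \epsilon, \ell')$ so that (i) $\mat E \mat x$ is correctable for $\mat x \in \mathcal X$, (ii) $\log|\mathcal X|$ exceeds the image entropy in lossy mode by a constant fraction, and (iii) Dense-Sparse LPN remains hard in the resulting regime. This simultaneous parameter calibration is the delicate core of the proof.
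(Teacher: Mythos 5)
Your construction template matches the paper's (same injective-vs-lossy key shape, same hidden gadget-plus-code matrix, same trapdoor $\mat S$, and you correctly flag that the input domain must be restricted and that the error $\mat E \mat x$ must be controlled for inversion). But the lossiness argument you sketch does not work, and it misses the single idea that makes Dense-Sparse LPN — as opposed to plain LPN — the right assumption here. You bound the number of images by the rank of the map, $\le n/2 + \operatorname{rank}(\mat E) \le n/2 + \ell'$, and ask to choose $\ell'$ so that $n/2 + \ell' \le (1-c)\log|\calX|$. This is impossible: for the scheme to have a correct injective mode one needs the input to be (regular) $t$-sparse with $t = n^{\delta}$, $\delta<1$, so $\log|\calX| \approx t\log(m/t) = O(n^{\delta}\log n) \ll n/2$; no nonnegative $\ell'$ satisfies your inequality. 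Equally, a rank bound is the wrong tool because $\mat A$ typically has full rank $n/2$, so the rank of the lossy map is essentially the output dimension and gives no compression at all.

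The missing idea is to count images \emph{using the factorization} $\mat A = \mat T\mat M$. Since $\widetilde{\mat x}=\spfy_{m,t}(\mat x)$ is $t$-sparse and $\mat M$ is $k$-sparse, the vector $\mat M\widetilde{\mat x}$ lies in $\calB_{\le}(n,kt)$; hence $\mat y_1 = \mat T(\mat M\widetilde{\mat x})$ takes at most $|\calB_{\le}(n,kt)|$ distinct values, which in the compression regime (\Cref{lem:compress}) is $< |\calX|^{1/D}$. Only then, for each fixed $\mat y_1$, does one argue that $\mat y_2 = \mat S\mat y_1 + \mat E\widetilde{\mat x}$ lies in a Hamming ball of radius $\gamma\ell$ around $\mat S\mat y_1$ (via a Chernoff bound on $\mat E\widetilde{\mat x}$, cf.\ \Cref{lem:ltdf-noise-bound}), contributing at most $|\calX|^{1/D'}$ more images, for total $< |\calX|^{1/D + 1/D'}$. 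Your proposal never uses the $\mat T\mat M$ decomposition in the lossiness step and therefore cannot recover this bound. Relatedly, your inversion sketch writes $\mat G\mat x$ as a codeword, but for a sparse $\mat x \in \calX \subset \F_2^m$ one needs the sparsification bijection and the identity $\mat G_{m,t}^\top \cdot \spfy_{m,t}(\mat x)=\mat x$ composed with an asymptotically good code $\mat C$; the point is that $\mat C^\top\mat G_{m,t}^\top \widetilde{\mat x}=\mat C^\top\mat x$ is a genuine codeword on the short message $\mat x\in\F_2^{t\log(m/t)}$, which your generic ``gadget $\mat G\in\F_2^{\ell'\times m}$'' does not deliver.
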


We give an example of how our parameters in \Cref{thm:ltdf-informal} can be concretely instantiated. In order for the lossy trapdoor function to lose (say) half of its entropy in lossy mode, we may set the sparsity parameter $k=6$, the number of samples $m=n^2$, and the error probability to be $\epsilon = O\left(n^{-10/11}\right)$. See \Cref{thm:ltdf} for the precise parameter regime required for LTDFs.

Lossy Trapdoor Functions are known from a number of quantum-broken assumptions such as Decisional Diffie-Hellman, Bilinear Maps, Quadratic Residuosity, Phi-Hiding, and Decisional Composite Residuosity (DCR). However, prior to our work, no post-quantum assumption barring LWE was known to imply lossy trapdoor functions, including LPN with noise probability $O(\frac{\log^2 n}{n})$ that is broken in quasi-polynomial time.

Since Lossy Trapdoor Functions are known to imply a number of lossy primitives, as a result we can generically realize those primitives from Dense-Sparse LPN. This list of primitives and applications include: collision-resistant hash functions and CCA secure encryption~\cite{STOC:PeiWat08}, dual-mode commitments and statistically-sender-private oblivious transfer~\cite{AC:HLOV11}, deterministic encryption~\cite{C:BolFehOne08}, trapdoor functions with many hardcore bits, analyzing OAEP \cite{C:KilOneSmi10}, hedged public-key encryption with bad randomness \cite{AC:BBNRSS09}, selective
opening security \cite{EC:BelHofYil09}, pseudo-entropy functions \cite{ICS:BraHasKal11}, point-function obfuscation \cite{C:Zhandry16}, computational extractors \cite{EC:DodVaiWic20,EC:GarKalKhu20}, incompressible encodings \cite{C:MorWic20}, and many more.

\subsection{Related Works}\label{sec:related-works}

\paragraph{LPN Variants and their Applications.} Recent works on \emph{pseudorandom correlation generators (PCGs)} \cite{CCS:BCGI18,C:BCGIKS19,C:BCGIKS20} and \emph{pseudorandom correlation functions (PCFs)}~\cite{FOCS:BCGIKS20} have proposed many novel variants of LPN with different matrix distributions~\cite{FOCS:BCGIKS20,C:CouRinRag21,C:BCGIKRS22,PKC:CouDuc23,C:RagRinTan23,C:BCCD23}. While these works are similar to ours in that they introduce new LPN variants, we introduce our variant (Dense-Sparse LPN) for a \emph{different} purpose: building more advanced lossy primitives in Cryptomania. In contrast, it is not known whether PCGs or PCFs imply public-key encryption or other Cryptomania primitives. 

\paragraph{Group Actions and SZK Primitives.} Besides lattices, certain assumptions on (suitable) group actions~\cite{C:BraYun90,EPRINT:Couveignes06} also imply primitives in $\mathcal{SZK}$, and are plausibly post-quantum secure against subexponential time adversaries. In more details, the authors of~\cite{AC:ADMP20} showed that group actions satisfying a \emph{weak pseudorandomness} property, such as those based on isogenies like CSIDH~\cite{AC:CLMPR18} or CSI-FiSh~\cite{AC:BeuKleVer19}, suffices for building a variety of $\mathcal{SZK}$ primitives such as hash proof system~\cite{EC:CraSho02}, dual-mode PKE~\cite{C:PeiVaiWat08}, malicious SSP-OT~\cite{SODA:NaoPin01,JC:HalKal12}, and more. However, their techniques do not seem to extend to building lossy trapdoor functions.

\paragraph{Future Directions.} A fascinating question left open in our work is whether we can construct similar Cryptomania primitives, though not known to generically follow from lossy trapdoor functions, from our Dense-Sparse LPN assumption. These primitives, which are known to be achievable from LPN with quasi-polynomial security, include laconic oblivious transfer~\cite{C:CDGGMP17,EC:BLSV18}, identity-based encryption~\cite{EC:BLSV18}, and (maliciously-secure) statistically-sender-private oblivious transfer~\cite{C:BitFre22}. In Section~\ref{sec:open-questions}, we sketch how our current assumption encounters roadblocks toward building these primitives.
\section{Overview of Techniques}\label{sec:tech-overview}

We now discuss the intuition for how the structural properties of Dense-Sparse LPN puts it in $\mathcal{SZK}$ and enables applications such as lossy trapdoor functions. We then discuss key points in the cryptanalysis of our assumption and end with some open questions.

\subsection{Collision-Resistant Hash Functions from Dense-Sparse LPN}\label{sec:crhf-overview}

\paragraph{Collision-Resistance from LPN.} In a nutshell, our progress on building lossy trapdoor functions from Dense-Sparse LPN is a result of achieving \emph{input compression} under a \emph{larger}, inverse polynomial noise rate. In both lattice-based and code-based cryptography, we consider the following hash function family:
\begin{align*}
h_{\mat{A}}(\mat x)= \mat{A} \cdot \mat{x},\qquad\text{ indexed by }\mat A \gets \calR^{n \times m}\text{ over some finite ring }\calR,
\end{align*}
and the input $\mat x$ comes from a ``low-norm'' distribution. In the case of LWE, we have $\mat A \gets \Z_q^{n \times m}$ for some modulus $q=q(n)$, and $\mat x \in \{0,1\}^m$. A simple calculation then shows that the number of inputs, which is $\abs{\{0,1\}^m} = 2^m$, is greater than the number of outputs, which is $\abs{\Z_q^n} =2^{n \log q}$, when $m > n \log q$. This is achievable even for exponential modulus $q = 2^{O(n)}$, for which we only need to set $m > O(n^2) = \poly(n)$.

In contrast, in the case of LPN, we have $\mat A \gets \F_2^{n \times m}$ and $\mat x \gets \calB(m,t)$, the Hamming ball of weight $t$ in $\F_2^m$. To achieve compression, we then need to ensure that
\begin{align}\label{eqn:compression-lpn}
    \text{number of inputs for }h_{\mat A} = \binom{m}{t} \quad\gg\quad 2^n = \text{number of outputs for }h_{\mat A}.
\end{align}
For $m = \poly(n)$, using the binomial approximations $\left(\frac{m}{t} \right)^t \le \binom{m}{t} \le \left(\frac{em}{t}\right)^t$, this is only possible when $t = \Omega(n / \log n)$. To see why this is problematic, we now sketch the proof (which can be found in~\cite{EC:BLVW19}) that $h_{\mat A}$ is collision-resistant, by a reduction to LPN with error probability $\epsilon=\epsilon(n)$. A collision $(\mat x_0, \mat x_1)$ for $h_{\mat A}$ gives us $\mat A \cdot (\mat x_0 - \mat x_1) = 0$, which implies that we have found a vector $\mat x = \mat x_0 - \mat x_1$ in the kernel of $\mat A$ that is at most $2t$-sparse. Such a vector can be used to detect bias in LPN samples $(\mat A, \mat b = \mat s \mat A + \mat e)$, since $\mat b \cdot \mat x = \mat e \cdot \mat x$ is distributed according to the Bernoulli distribution with error probability 
\begin{align*}
    \epsilon' \le \frac{1 - (1-\epsilon)^{2t}}{2},\quad\text{ and thus with bias }\quad 1-2\epsilon' = (1-\epsilon)^{2t} \ge 2^{-\Omega(\epsilon t)},
\end{align*}
by the \hyperref[lem:piling-up]{Piling-Up Lemma}. For this bias to be noticeable, compared to a uniform random bit $\mat b \cdot \mat x$ when $\mat b \gets \F_2^{m}$ is sampled randomly, we would need $\epsilon = O(\log n / t) = O(\log^2 n / n)$. With this error probability, LPN is broken in quasi-polynomial time $O(n^{\log n})$. More importantly, we cannot afford a lower error probability so that $\epsilon t = O(1)$, as LPN is fully broken with $\epsilon=O(\log n/n)$.\footnote{For $\epsilon=O(\log n /n)$, we can choose $n$ random coordinates of $\mat b = \mat s \mat A + \mat e$, which is error-free with noticeable probability, and then solve for $\mat s$.}

\paragraph{Achieving Higher Noise Rate with Sparse LPN.} Can we hope to achieve collision-resistance with larger error probability? Our key idea is that by changing the distribution of the matrix $\mat A$, we are able to \emph{reduce} the size of the output space, making compression possible at lower sparsity $t$ (and hence collision-resistance at higher noise rate $\epsilon$).
Indeed, if the matrix $\mat A$ is \emph{sparse} with each column having exactly $k$ ones, where $k$ is a constant or slightly super-constant, then the output space only consists of vectors $\mat y = \mat A \mat x \in \F_2^n$ that are at most $kt$-sparse. Thus, we achieve compression when
\begin{align*}
    \text{number of inputs for }h_{\mat A} = {m \choose t} \quad\gg \quad \sum_{s=0}^{kt} {n \choose s} = \text{number of outputs for }h_{\mat A}.
\end{align*}
Using binomial approximations and some straightforward calculations, the above is satisfied when
\begin{align*}
    \left(\frac{m}{t}\right)^t \gg kt \cdot \left(\frac{en}{kt}\right)^{kt} \quad\implies\quad t^{k-1} \gg \Omega\left( \frac{n^{k}}{m}\right).
\end{align*}
Thus, as soon as $t^{k-1}\gg \frac{n^{k}}{m}$, we will have $\mat{y}$ lose information on a random $t$-sparse $\mat{x}$. Let us try to understand how large $t$ needs to be. In the most conservative regime when $m=n^{1+\rho}$ for a small constant $\rho>0$, $t^{k-1}$ must be bigger than $n^{k-1-\rho}$, implying $t> n^{1-\frac{\rho}{k-1}}$ which approaches $n$ as $\rho$ approaches $0$. In the most aggressive setting when $m$ is close to $n^{\frac{k}{2}}$, $t^{k-1}$ must be bigger than $n^{\frac{k}{2}}$. This yields $t \approx \sqrt{n}$ if $k$ is a large enough constant.

More generally, if we wish to achieve a compression factor $D > 1$, meaning that the output length is a factor of $D$ smaller than the input length, then we would need
\begin{align}\label{eqn:compression-sparse-lpn}
    \left(\frac{m}{t}\right)^t > \left(kt \cdot \left(\frac{en}{kt}\right)^{kt}\right)^{D} \quad\implies\quad t^{D\cdot k-1} = \Omega \left(\frac{n^{D\cdot k}}{m}\right) .
\end{align}
We will refer to \Cref{eqn:compression-sparse-lpn} as the \emph{compression} equation. Similar to the above estimate, we can have $t=n^{\delta}$ be polynomially smaller than $n$ for a polynomial number of samples $m=n^{1+\rho_{k,D}(\delta)}$, where $\rho_{k,D}(\delta)$ is a constant related to $\delta$. This implies that collision-resistance can be achieved at an inverse polynomial error probability $\epsilon = O(1/t) = O(1/n^\delta)$. We call this the \emph{compression regime} of Sparse LPN.

\paragraph{Sparse LPN in its Compression Regime is Broken.}
Unfortunately, while Sparse LPN in the compression regime would imply collision-resistant hash functions, it is not secure. We give a new but simple attack, that in the compression regime with any factor $D > 1$, one can easily find $O(t)$ sparse vectors $\mat v \in \F_2^{m}$ so that $\mat{A}\mat{v}=\mat{0}$. Thus, unlike (dense) LPN, Sparse LPN can be broken in polynomial time even at sufficiently small but inverse-polynomial noise probability $\epsilon = O(n^{-\delta})$, and with (related) polynomial number of samples $m = n^{1+\rho_{k,D}(\delta)}$.

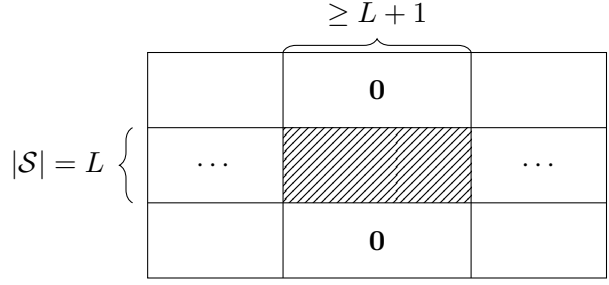
\begin{figure}
    \centering
    
    \begin{tikzpicture}
        \newcommand{\narrowcellwidth}{1.8cm}
        \newcommand{\widecellwidth}{2.5cm}
        \newcommand{\cellheight}{1cm}
    
        \draw (0,0) rectangle (2*\narrowcellwidth+\widecellwidth,3*\cellheight);
    
        \draw (\narrowcellwidth,0) -- (\narrowcellwidth,3*\cellheight);
        \draw (\narrowcellwidth+\widecellwidth,0) -- (\narrowcellwidth+\widecellwidth,3*\cellheight);
        \draw (0,\cellheight) -- (2*\narrowcellwidth+\widecellwidth,\cellheight);
        \draw (0,2*\cellheight) -- (2*\narrowcellwidth+\widecellwidth,2*\cellheight);
    
        \draw[decoration={brace,amplitude=5pt},decorate] (-0.2,\cellheight) -- (-0.2,2*\cellheight) node[midway,xshift=-1cm] {$|\calS| = L$};
    
        \draw[decoration={brace,amplitude=5pt},decorate] (\narrowcellwidth,3*\cellheight+0.2) -- (\narrowcellwidth+\widecellwidth,3*\cellheight+0.2) node[midway,yshift=0.5cm] {$\geq L+1$};
    
        \fill[pattern=north east lines] (\narrowcellwidth,\cellheight) rectangle (\narrowcellwidth+\widecellwidth,2*\cellheight);
    
        \node at (\narrowcellwidth+0.5*\widecellwidth,0.5*\cellheight) {$\mat 0$};
        \node at (\narrowcellwidth+0.5*\widecellwidth,2.5*\cellheight) {$\mat 0$};
        
        \node at (0.5*\narrowcellwidth,1.5*\cellheight) {$\cdots$}; 
        \node at (2*\narrowcellwidth+0.65*\widecellwidth,1.5*\cellheight) {$\cdots$}; 

    \end{tikzpicture}
    
    \caption{Attack against Sparse LPN with sparsity $k$, number of samples $m=\Omega(n \cdot (n/t)^{k-1})$, and noise rate $\epsilon = O(\log n/t)$. We focus on a set of rows $\calS$ of size $L = O(t)$, and find all columns that is non-zero only within rows contained in $T$. If we find at least $L+1$ such columns, then we can find a linear dependence between these columns and come up with a $\le L$-sparse vector $\mat x$ such that $\mat A \mat x = \mat 0$, which can be used to detect noticeable bias in the Sparse LPN samples.}
    \label{fig:sparse-lpn-attack}
\end{figure}

The attack can be described as follows. For simplicity, assume that each column of $\mat{A}$ is randomly and independently chosen from the set of all $k$-sparse columns. We want to find a set $\calS \subset [n]$ of size $L$ so that there are $L+1$ columns $\{ \mat{a}_1,\ldots,\mat{a}_{L+1}\}$ that are supported entirely in $\calS$. Namely, these columns take the value $0$ for indices in $[n]\setminus \calS$. Once we have found these columns, we can easily find some non-zero combinations combining $\{\mat{a}_1,\ldots,\mat{a}_{L+1}\}$ that sum to $0$ since they must be linearly dependent.

For what $L$ can we expect this to happen? We compute the probability that for an arbitrary set $S$ of size $L$ there exist $L+1$ column vectors supported entirely in $S$. Since the probability that a single vector is supported in $S$ is ${L\choose k}/{n \choose k}$, the expected number of such equations becomes roughly $m\cdot {L\choose k}/{n \choose k}$. Thus, for $m$ samples, we need to set $L$ so that
\begin{align}
m\cdot \frac{{L\choose k}}{{n \choose k}} \approx m \cdot \left(\frac{L}{n}\right)^k \gg L,\qquad\text{ which is satisfied when }L^{k-1} \approx \frac{n^{k}}{m}.
\end{align}
Therefore, $L$ is up to a constant multiple of $t$ satisfying the compression equation $t^{k-1}=\Omega(\frac{n^{k}}{m})$.

\paragraph{McEliece-style Wrapper to the Rescue.} Note that the above attack crucially relies on being able to identify the support, or locality pattern, of the column vectors. Thus, if we can mask these locality patterns, such as by applying a random linear transformation to the columns, then we can prevent our attack.

This idea of masking a matrix, whose exposure would lead to an efficient attack, goes back to the McEliece cryptosystem~\cite{McEliece78}, and is the motivation behind Dense-Sparse LPN. Recall that in McEliece and its variants, the public key consists of a ``randomized'' code generating matrix $\mat A = \mat S \mat G \mat P$, hiding a ``nice'' representation $\mat G \in \F_2^{n \times m}$ (such as a binary Goppa code) that enables efficient decoding. Here the masking matrices consist of a random \emph{square} matrix $\mat{S} \in \F_2^{n \times n}$ multiplied on the left and a permutation matrix $\mat{P} \in \F_2^{m \times m}$ multiplied on the right. The ciphertexts are then LPN samples with respect to this structured matrix.

Our Dense-Sparse LPN assumption makes several modifications to the McEliece template to suit our specific needs. Instead of an algebraically structured matrix $\mat G$, we work with $k$-sparse matrices $\mat{M}$ which do not have any algebraic structure. We also omit the permutation matrix $\mat{P}$ as the sparse matrix distribution can be made invariant under column permutations. Finally, we multiply with a random \emph{compressing} matrix $\mat T \in \F_2^{\frac{n}{2} \times n}$ to get $\mat A = \mat T \mat M$. This change serves two purposes: it ensures security, as a square matrix $\mat S \in \F_2^{n \times n}$ does not adequately mask $\mat M$ (see \Cref{sec:cryptanalysis} for the attack), and it aligns with our goal of using $\mat M$ for its lossy properties rather than for decoding.

We discuss further aspects of Dense-Sparse LPN in \Cref{sec:discussion-assumption}; for the moment, we shall go back to constructing collision-resistant hashes from our assumption.

\paragraph{Back to Collision-Resistance.} We now re-examine our hash function $h_{\mat{A}} = \mat A \mat x = \mat T \mat M \mat x$, which takes $t$-sparse inputs $\mat x \in \F^{m}_{2}$ and maps them to $\F^{\frac{n}{2}}_{2}$. When $n,m,t$ is chosen in the compression regime to satisfy \Cref{eqn:compression-sparse-lpn} for any factor $D > 1$, the mapping $h_{\mat M}$ sending $\mat x \mapsto \mat x' = \mat{M}\mat{x}$ admits collisions. Since $\mat x' \in \F_2^n$ is at most $kt$-sparse, which is lower than the threshold $O(n/\log n)$ for $\mat T$ to get collisions, it follows that, with overwhelming probability, all collisions of $h_{\mat{A}}$ come from collisions of $h_{\mat M}$.

Unfortunately, $h_{\mat A}$ is no longer compressing. Since the output of $h_{\mat A}$ is no longer sparse, we can only bound the output size by its length $\frac{n}{2}$, which is more than the input length of $\log_2 {m \choose t}=O(t\log n)$ as $t$ is polynomially smaller than $n$. 
To get around this issue, we multiply the function $h_{\mat A}$ with another compressing matrix $\mat{U}\in \F^{\ell \times \frac{n}{2}}_{2}$ for a suitable $\ell$. Equivalently, we now consider the hash family
\begin{align*}
    h_{\mat A'}(\mat x) = \mat A' \cdot \mat x \in \F_2^{\ell},\qquad\text{where }\quad \mat{A}'=\mat U\cdot \mat A = (\mat U \cdot \mat{T}) \cdot \mat{M} \in \F_2^{\ell \times m}.
\end{align*}
Note that $\mat V = \mat U \cdot \mat T \in \F_2^{\ell \times n}$ is identically distributed to a random matrix of the same dimensions. We will set $\ell$ so that two conditions are satisfied:
\begin{itemize}
    \item To ensure $h_{\mat A'}$ is compressing, we need $2^\ell \ll {m\choose t}$.
    \item To ensure $h_{\mat A'}$ is collision-resistant, we want the linear transformation $\mat y \mapsto \mat{V}\mat{y}$ to map different vectors $\mat y= \mat M \mat x$ that are at most $kt$-sparse to different elements of $\F_2^{\ell}$. This is satisfied with overwhelming probability if $2^\ell \gg {n \choose kt}^2 \approx \abs{\calB_{\le}(n,kt)}^2$ by a birthday bound, where $\calB_{\le}(n,kt)$ denotes the Hamming ball of radius $kt$.
\end{itemize}
We can meet both conditions by setting the compression factor $D > 2$, so that ${m\choose t} \gg {n \choose k \cdot t}^{2}$. We give a formal argument in \Cref{sec:CRHF}, where collision-resistance can be reduced to Dense-Sparse LPN with an inverse polynomial error probability $\epsilon = O(n^{-\delta})$. Finally, we note that while our later construction of lossy trapdoor functions generically implies collision-resistant hash functions~\cite{STOC:PeiWat08}, the hash construction we just sketched is more direct and conceptually simpler.

\subsection{Lossy Trapdoor Functions}\label{sec:ltdf-overview}

We now describe the main ideas behind our lossy trapdoor function construction, which on a high level builds upon the lattice-based template of~\cite{STOC:PeiWat08}. The core of our contribution lies in identifying that, for the same compression regime of Dense-Sparse LPN as sketched above, we can ensure both lossiness and invertibility depending on the mode being instantiated.

Our function key is of the following form. Given a matrix $\mat A = \mat T \mat M \in \F_2^{\frac{n}{2} \times m}$ drawn from the Dense-Sparse LPN matrix distribution, we generate samples $(\mat b_i=\mat s_i \mat A + \mat e_i)_{i=1}^\ell$ for a parameter $\ell=\ell(n)$ to be chosen later. Equivalently, we may concatenate these samples to get a matrix $\mat B = \mat S \mat A + \mat E \in \F_2^{\ell \times m}$. This matrix $\mat B$ will be given out as-is for the lossy mode, and for the injective mode, will be used to hide a \emph{(robust) compressed sensing} instance $\mat C \in \F_2^{\ell \times m}$ which allows for efficient inversion. To summarize, the function key is as follows:
\begin{align*}
    \fk = \begin{cases}
        (\mat A, \mat B = \mat S \mat A + \mat E) & \text{ if }\mode=\loss,\\
        (\mat A, \mat B = \mat S \mat A + \mat E + \mat C) & \text{ if }\mode=\inj,\text{ with }\td=\mat S.
    \end{cases}
\end{align*}
From the Dense-Sparse LPN assumption, it is clear that the two modes are computationally indistinguishable. Now, the input to the function will be $t$-sparse vectors $\mat x \in \F_2^{m}$, and function evaluation returns 
\begin{align*}
    \mat y = (\mat y_1,\mat y_2),\qquad\text{ where }\mat y_1 = \mat A \mat x \in \F_2^{\frac{n}{2}},\text{ and }\mat y_2 = \mat B \mat x \in \F_2^{\ell}.
\end{align*}
To invert in injective mode (with trapdoor $\mat S$), we compute $\mat y' = \mat y_2 - \mat S \mat y_1 = \mat C \mat x + \mat E \mat x$, then use the decoding guarantee of the compressed sensing matrix $\mat C$ to recover $\mat x$ in the presence of noise $\mat E \mat x$.

We now analyze the two modes to figure out the parameters that would ensure both lossiness and efficient inversion holds with $1-\negl(n)$ probability over the choice of the function key.

\paragraph{Lossy Mode.} We want to choose parameters so that both $\mat y_1 = \mat A \mat x$ and $\mat y_2 = (\mat S \mat A + \mat E) \mat x$ loses information about $\mat x$. We will reason separately about these two components as follows:
\begin{itemize}
    \item Using the decomposition $\mat A = \mat T \mat M$ for a random (dense) $\mat T$ and sparse $\mat M$, it suffices to have $\mat x' = \mat{M}\mat{x}$ lose information about $\mat{x}$. This is satisfied in the compression regime for Sparse LPN where \Cref{eqn:compression-sparse-lpn} requires $t = \Omega\left(n\cdot (n/m)^{\frac{1}{Dk-1}}\right)$ for a compression factor $D>1$.
    
    \item For a fixed value $\mat y_1$, we can see that $\mat y_2 = \mat S \mat y_1 + \mat E \mat x$ lies in a Hamming ball of radius $\norm{\mat E \mat x}_0$ around $\mat S \mat y_1$. If we work with error probability $\epsilon$, then $\mat E \gets \Ber(\epsilon)^{\ell \times m}$ and $\norm{\mat x}_0 = t$ implies that $\mat E \mat x \sim \Ber(\epsilon')^{\ell}$, where $\epsilon' = \frac{1-(1-2\epsilon)^t}{2} \le \epsilon t$. We want the size of this ball, which is roughly ${\ell \choose \epsilon t \ell}$, to be at most a $(D')^{th}$-root of the number of inputs which is ${m \choose t}$.
\end{itemize}
Putting things together, we have that
\begin{align}
    \abs{\{(\mat A \mat x, \mat B \mat x) \mid \norm{\mat x}_0 = t\}} < {m \choose t}^{1/D} \cdot {m \choose t}^{1/D'} = {m \choose t}^{1/D + 1/D'}.
\end{align}
Therefore, we can ensure that the output length is an arbitrarily small constant of the input length by setting $D,D'$ to be large enough, with error probability $\epsilon = O(1/t)$ and $\ell = \Theta(\log {m \choose t}) = \Theta(t \log n)$.

\paragraph{Injective Mode.} Do the parameters required for lossy mode also enable efficient inversion? To answer this, we need to design a matrix $\mat C \in \F_2^{\ell \times m}$ equipped with an efficient decoding algorithm that can recover a $t$-sparse vector $\mat x$ from $\mat y' = \mat C \mat x + \mat e \in \F_2^{\ell}$, where $\mat e = \mat E \mat x$ is a noise term that is constant-fraction sparse with overwhelming probability.

If our task were to recover a (dense) vector $\mat x$, then we can simply pick $\mat C$ to be (the transpose of) an error-correcting code. Then $\mat C \mat x$ is (the transpose of) a codeword, which is then perturbed in a constant number of entries to form $\vec y'$. Using an efficient decoding algorithm for $\mat C$ that can correct a constant fraction of errors, we can recover $\mat x$ from $\mat y'$.

However, in our case we have to recover a $t$-sparse vector $\mat x \in \F_2^{m}$. Our idea is to restrict the inversion process to only a \emph{special} subset of such $t$-sparse vectors, namely the ones that arise as the result of \emph{sparsifying} dense vectors $\mat z \in \F_2^{t \log(\frac{m}{t})}$. The sparsification process is as follows:
\begin{align*}
    \mat{z} = [\underbrace{\mat{z}_1}_{\log(\frac{m}{t})} \| \dots \| \underbrace{\mat{z}_t}_{\log(\frac{m}{t})} ] \quad\implies \quad 
    \spfy(\mat z) = [ \underbrace{(0,\dots,1,\dots,0)}_{\mat z_1\text{-th position}} \| \dots \| \underbrace{(0,\dots,1,\dots,0)}_{\mat z_t\text{-th position}}] \in \F_2^{m},
\end{align*}
where we interpret $\mat z_i \in \F_2^{\log(\frac{m}{t})}$ as a number in $\{0,\dots,\frac{m}{t}-1\}$ for all $i \in [t]$. This gives a bijection between binary vectors of length $t\log(\frac{m}{t})$, with \emph{regular} $t$-sparse vectors of length $m$.
We may also recover $\mat z$ from $\spfy(\mat z)$ by multiplying with a \emph{gadget matrix} $\mat G \in \F_2^{t\log(\frac{m}{t}) \times m}$ (whose formula can be found in \Cref{def:gadget-matrix}). In other words, we have $\mat G \cdot \spfy(\mat z) = \mat z$ for all $\mat z \in \F_2^{t\log(\frac{m}{t})}$. 
Note that these procedures have appeared in prior works~\cite{EC:BLVW19,AC:YZWGL19}, and can be seen as a code-based analogue of binary decomposition and the gadget matrix in lattice-based cryptography~\cite{EC:MicPei12}.

Given these tools, we can instantiate the injective mode as follows. We first slightly change the input space of our function to be regular $t$-sparse vectors $\mat x \in \F_2^m$, which are in bijection with $\mat z = \mat G \mat x \in \F_2^{t\log(\frac{m}{t})}$. We now set 
\begin{align*}
    \mat C = \mat C' \cdot \mat G,
\end{align*}
where $\mat C' \in \F_2^{\ell \times t\log(\frac{m}{t})}$ is the transpose of an error-correcting code with constant rate and distance~\cite{justesen-code}. Therefore, using the decoding of $\mat C'$ we may efficiently recover $\mat z$, and hence $\mat x=\spfy(\mat z)$, from $\mat y' = \mat C \mat x + \mat e = \mat C' \mat G \mat x + \mat e = \mat C' \mat z + \mat e$.

\paragraph{All-but-one LTDFs.} Our LTDF construction also generalize straightforwardly to realizing its \emph{all-but-one (ABO)} variant. In an ABO-LTDF, we have an exponential of branches $\calB=\F_2^{L}$ where one distinguished branch $b^*$ is lossy, and all other branches are injective. Additionally, given the function key $\fk$ it is difficult to find out which branch is distinguished. While there is a generic transformation from LTDFs to its all-but-one version~\cite{STOC:PeiWat08}, it presents a tradeoff between the number of branches supported and the degradation of the lossiness parameter.

We avoid this tradeoff by leveraging algebraic properties in our setting. In particular, we achieve ABO-LTDFs with number of branches $2^{L}$, where $L=t\log(\frac{m}{t})$, through a simple twist on our construction above. At a high level, we associate the branches $b \in \F_2^{t\log(\frac{m}{t})}$ with a matrix family $\mat H_b \in \F_2^{L \times L}$ such that $\mat H_b - \mat H_{b'}$ is invertible for all $b \ne b'$. Such a \emph{full-rank difference (FRD)} family of matrices has appeared in prior works~\cite{EC:AgrBonBoy10,PKC:KilMasPie14}, and we refer to \Cref{sec:ltdf} for the details of our ABO construction.

\paragraph{Why Low-Noise LPN does not Suffice.} As a final remark, let us argue why LPN with noise probability $\epsilon = O(\log^2 n / n)$ does not imply lossy trapdoor functions from our template above. The reason is that since $\epsilon t = O(\log n)$, the vector $\mat E \mat x$, with $\mat E \gets \Ber(\epsilon)^{\ell \times m}$ and $\norm{\mat x}_0 = t$, is Bernoulli distributed with error $\epsilon' = \frac{1}{2} - \frac{1}{\poly(n)}$. This requires setting $\ell = n^{1+\Omega(1)}$ for successful inversion in injective mode.\footnote{Error-correcting codes of relative distance $\frac{1}{2} - \frac{1}{\poly(n)}$ must have rate at most $1/\poly(n)$, so we have $\ell = n^{1+\Omega(1)}$.} However, such a large $\ell$ prevents any hope of achieving lossiness, as the Hamming ball around each $\mat y_1$ is simply too large:
\begin{align}
    {\ell \choose \epsilon' \ell} = 2^{n^{1+\Omega(1)}} \gg 2^{O(n)} = {m \choose t}.
\end{align}

\subsection{Discussion on Our Assumption}\label{sec:discussion-assumption}

We now discuss several aspects of our Dense-Sparse LPN assumption. Since our assumption combines aspects of both standard LPN and Sparse LPN, it also inherits some subtle considerations that arise in the context of choosing ``good'' matrices for Sparse LPN.

\paragraph{Sampling ``Good'' Matrices for Sparse LPN.}
In the Sparse LPN assumption, the matrix $\mat{M}$ is chosen in $\F^{n\times m}_{2}$, where $m \ll n^{\frac{k}{2}}$, such that every column is $k$-sparse where $k$ is some constant. For this parameter setting, there is an inverse polynomial probability of $\mat{M}$ having a constant-sparse vector $\mat x$ in its kernel, so that $\mat M \mat x = \mat 0$.\footnote{For instance, $\mat M$ has two identical columns with probability $O(\frac{m^2}{n^k})$.} When this ``bad'' event happens, an adversary can find $\mat x$ in polynomial time and distinguish Sparse LPN samples from random.

However, this ``bad'' event is only over the choice of matrix $\mat M$. Outside of this bad event, it is known that if $m = \tilde{O}(n^{1+(\frac{k}{2}-1)(1-\rho)})$ for some $\rho >0$, then with overwhelming probability, the minimum Hamming weight for a vector $\mat x$ that satisfies $\mat M \mat x = \mat 0$ is at least $O(n^{\rho})$ (see \Cref{lem:large-dual-distance} for details). Therefore, if we consider Sparse LPN with a large enough noise probability $\epsilon = O(n^{-\rho})$, giving an adversary subexponential time do not seem to increase its advantage beyond the probability of sampling a ``bad'' matrix.

We note that this issue is present in all prior cryptographic constructions relying on Sparse LPN. Applebaum, Barak and Wigderson~\cite{STOC:AppBarWig10} resolved this issue by weakening the indistinguishability advantage to be only $o(1)$; this suffices for their application of building public-key encryption, as they could rely on security amplification~\cite{C:HolRen05}. However, our goal is to build lossy primitives, which do not amplify well~\cite{TCC:PieRosSeg12}, so we must rely on a sampling $\mat M$ from a distribution that avoids sampling a ``bad'' matrix with overwhelming probability. We will use the following efficiently sampleable distribution from~\cite{FOCS:AppKac19} which has this property.

\begin{theorem}[informal]
    For every even $k \ge 6$, every $1 < c < k/4$, and every $0 < \gamma < k - 4c$, there exists an efficiently sampleable distribution of $k$-sparse matrices $\mat M \in \F_2^{n \times n^c}$ such that with overwhelming probability, every nonzero vector $\mat x$ in the kernel of $\mat M$ has Hamming weight at least $O(n^{\delta})$, where $\delta = \frac{k-4c-\gamma}{k-\gamma-4}$.
\end{theorem}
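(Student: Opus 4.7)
The plan is to invoke the distribution and analysis of Applebaum and Kachlon~\cite{FOCS:AppKac19}; the stated theorem is essentially a repackaging of their main combinatorial result, so the task reduces to sketching why their sampler achieves the required large dual-distance property. The sampler draws each of the $n^c$ columns of $\mat M$ independently and uniformly from the set of all $k$-sparse vectors in $\F_2^n$, and then performs rejection sampling to eliminate a small family of constant-size ``bad'' substructures---for instance, pairs of coinciding columns, or pairs of columns whose supports overlap heavily. Rejection sampling runs in polynomial time because each such bad event is locally testable and individually occurs with only inverse-polynomial probability.

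The core of the argument is a first-moment bound on the number of low-weight kernel vectors. Identifying each $k$-sparse column of $\mat M$ with a $k$-edge on the vertex set $[n]$, a nonzero $\mat x \in \F_2^{n^c}$ of Hamming weight $w$ lies in $\ker(\mat M)$ precisely when the sub-hypergraph on the $w$ hyperedges indexed by the support of $\mat x$ is \emph{even}, i.e., every vertex has even degree. A careful combinatorial count---breaking into cases by the size of the vertex footprint (which lies between $\lceil kw/(k-1)\rceil$ and $kw/2$) and applying Stirling---shows that the expected number of such weight-$w$ kernel vectors is $o(1)$ precisely when $w$ stays below $n^\delta$ for $\delta = \frac{k-4c-\gamma}{k-\gamma-4}$; the $\gamma$ slack absorbs lower-order Stirling terms and the loss from summing over footprint sizes. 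A union bound over $w$ in this range then yields the claim with overwhelming probability.

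The main obstacle---and the reason the rejection-sampling step cannot simply be dropped---is that the first-moment estimate is genuinely inadequate at very small weights. For $w=2$, the event that two columns coincide has probability $\Theta(n^{2c-k})$, which is inverse polynomial but far from negligible in the regime $c < k/4$; analogous issues arise for other constant-size kernel vectors. The tail bound above therefore gives a negligible failure probability only once $w$ is growing with $n$, and the rejection-sampling step is what surgically clears out the constant-weight dependencies so that the tail bound takes over cleanly on the remaining range. Verifying that the parameter window $k \ge 6$, $1 < c < k/4$, and $0 < \gamma < k-4c$ lines up with the admissible regime of~\cite{FOCS:AppKac19}---in particular that the bad events to be rejected are still $O(1)$ in number and each has constant-exponent probability, so the conditional sampler remains efficient---is then a routine check.
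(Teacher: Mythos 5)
The paper does not prove this statement at all: it imports it verbatim as Theorem~7.18 of Applebaum--Kachlon~\cite{FOCS:AppKac19} (see \Cref{thm:ak19-dist}), so your decision to reduce the claim to invoking AK19 matches the paper's approach exactly. The problem lies in the internal sketch you give of \emph{how} AK19's sampler works.

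Your sketch describes the sampler as ``draw columns uniformly at random, reject constant-size bad substructures, and rely on a first-moment bound for larger weights,'' and claims the tail bound then takes over cleanly. This cannot be right, because it would at best give an inverse-\emph{polynomial} failure probability, not a negligible one. The paper's own \Cref{lem:large-dual-distance} records that for a uniformly sampled $k$-sparse $\mat M$, the probability of dual distance at most $O(n^\delta)$ is $\Theta\bigl((k/n^\delta)^{k-2}\bigr)$---an inverse polynomial---and the mass behind this bound is spread across even sub-hypergraphs of every constant weight $w=2,3,\dots$, with the expected count at weight $w$ decaying roughly like $n^{-(k/2-c)w}$, i.e.~a fixed negative exponent per unit weight. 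Rejecting all bad structures of weight at most a fixed constant $W$ (the most one can afford, since detection requires enumerating $\binom{m}{w}$ subsets) leaves the weight-$(W{+}1)$ contribution, which is \emph{still} inverse polynomial; the first-moment bound only becomes negligible once $w=\omega(1)$, and there is no efficient brute-force way to scan for bad structures of super-constant size. Closing exactly this gap is the content of the AK19 paper: their sampler is a genuinely different, structured/iterative construction, not ``uniform plus rejection of $O(1)$-size collisions,'' and the distinctive failure bound $n^{-O(\log\log\log n/\log\log\log\log n)}$ recorded in the paper's formal statement is a signature of their recursive construction that would not emerge from the routine rejection-plus-union-bound argument you describe. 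As written, your sketch misattributes the theorem to an argument that would actually fail.
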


While this distribution does not sample matrix from all possible parameter regime of Sparse LPN (for instance, we have $m = n^c < n^{k/4}$), it suffices to instantiate the compression regime in \Cref{eqn:compression-sparse-lpn} with any compression factor $D>1$, where the noise probability remains inverse polynomial and we have plausible security against subexponential time attacks. Therefore, we can use this distribution to instantiate the sparse matrix $\mat M$ in our Dense-Sparse LPN assumption.

\paragraph{Almost-Subexponential Regime.} Alternatively, we can work with a $k$ that is mildly super-constant, such as $k = \log \log n$. In this setting, $m=n^c$ can be an arbitrary polynomial in $n$, and the ``bad'' matrix probability is now negligible, namely $n^{-O(\log \log n)}$. This allows us to sidestep the need for special distributions, and just sample $\mat M$ uniformly at random from the set of $k$-sparse matrices. \Cref{eqn:compression-sparse-lpn} now implies that we can achieve compression factor $D>1$ when
\begin{align}
    t = O\left( \left(\frac{n^{Dk}}{m}\right)^{\frac{1}{Dk-1}} \right) = n^{1 - O\left(\frac{1}{\log \log n}\right)}.
\end{align}
Thus, in our construction of LTDFs, we can rely on Dense-Sparse LPN with error probability $\epsilon = O(1/t) = n^{-O\left(\frac{1}{\log \log n}\right)}$ which is \emph{inverse almost-polynomial}. Note that this noise probability is still larger than the $O(\frac{\log^2 n}{n})$ noise rate of (standard) LPN. Therefore, our Dense-Sparse LPN assumption with this error rate is plausibly secure against \emph{almost-subexponential} $2^{n^{O\left(\frac{1}{\log \log n}\right)}}$-time attacks, unlike the quasi-polynomial security of (standard) LPN in its compression regime.

\paragraph{A Stronger Variant of Our Assumption.} Finally, we can consider making a stronger assumption that together with (standard) LPN implies Dense-Sparse LPN. In this variant, we now assume that the matrix $\mat A = \mat T \mat M $ from the Dense-Sparse matrix distribution is actually \emph{pseudorandom}.

\begin{assumption}[Dense-Sparse Indistinguishability]\label{ass:dense-sparse-pseudorandom}
For $\mat T \gets \F_2^{\frac{n}{2} \times n}$, $\mat M \gets \F_2^{n \times m}$ drawn from a \hyperref[def:good-dist]{good} distribution of $k$-sparse matrices, and $\mat U \gets \F_2^{\frac{n}{2} \times m}$, we have $\mat{T}\mat{M}\approx_c \mat{U}$.
\end{assumption}
By a simple hybrid argument, \Cref{ass:dense-sparse-pseudorandom} together with (standard) LPN indeed implies Dense-Sparse LPN with noise rate as small as $O(\frac{\log^2 n}{n})$. We do not see any better attack against this assumption than against Dense-Sparse LPN, and refer to \Cref{sec:cryptanalysis} for initial cryptanalysis regarding this assumption.

\subsection{Open Questions}\label{sec:open-questions}

At a high level, our Dense-Sparse LPN assumption is designed specifically so that we can harness the lossiness property of the $k$-sparse matrix $\mat{M}$, and do so without losing security by applying a random compressing matrix $\mat{T}$ before giving out $\mat A = \mat{T}\mat{M}$. In our construction of collision-resistant hash functions (CRHFs), we even managed to convert this lossiness into actual compression of the output. However, it seems difficult to achieve compression directly without giving up on other properties of the assumption.

Let us explain this difficulty by attempting to construct a (single-server) private information retrieval (PIR) scheme~\cite{FOCS:CGKS95,FOCS:KusOst97}.\footnote{It is not known whether lossy trapdoor functions imply PIR, nor is there a black-box separation between two primitives.} In PIR, we have a server holding a database $\mat D \in \F_2^n$, and a client who wishes to learn the $i$-th entry $\mat D_i$ of $\mat D$ without revealing the index $i$. A client will send a query $\mat q$ to the server, who then responds with a response $\mat r$. A non-trivial PIR scheme requires \emph{compactness}, namely that the response length is less than the database length.

We now consider the following candidate PIR scheme from Dense-Sparse LPN, which borrows from our construction of LTDFs and the template in~\cite{FOCS:KusOst97}. Assume for simplicity that the database $\mat D \in \F_2^m$ is $t$-sparse (the general case can be reduced to this setting by sparsification). To make a query on index $i \in [m]$, the client will send an encryption
\begin{align*}
    \mat q = (\mat A, \mat b = \mat s \mat A + \mat e + \mat u_i), \quad\text{ where }\quad \mat A = \mat T \mat M \in \F_2^{\frac{n}{2} \times m} \quad \text{ and }\quad  \mat u_i = \underbrace{(0,\dots,1,\dots,0)}_{i\text{-th position}}.
\end{align*}
The server will respond with the ciphertext applied to $\mat D$, i.e. with $\mat r = (\mat r_1,\mat r_2) =  (\mat A \mat D, \mat b \mat D) \in \F_2^{\frac{n}{2} + 1}$. The client then recovers $\mat D_i = \langle \mat D, \mat u_i\rangle$ by computing $\mat r_2 - \mat s \mat r_1 = \mat D \mat u_i + \mat D \mat e$. By sampling $\mat e$ from a Bernoulli distribution of probability $\epsilon = O(1/t)$, we can guarantee that the scheme has constant correctness, which can be amplified to negligible by $\omega(\log n)$ repetitions of $\mat b$. Client's query privacy also follows directly from Dense-Sparse LPN with noise rate $\epsilon$.

The problem with the scheme above is that it is not compact. Indeed, the response is of length $\frac{n}{2}+1$ which is greater than the database length $\log {m \choose t} \approx t \log(m/t)$. Here, we are in a similar situation to our CRHF construction, but we \emph{cannot} use the same trick to achieve compactness here. The reason is that, if we were to multiply $\mat r_1 = \mat A \mat D$ by a further compressing matrix $\mat U \in \F_2^{\ell \times \frac{n}{2}}$, then the client would receive $\mat U \mat A \mat D$. From there, the client must ``decompress'' $\mat A \mat D$ from $\mat U$, but this appears computationally infeasible since $\mat U$ does not come with a trapdoor for efficient inversion!


We note that the same difficulty above also prevents us from using Dense-Sparse LPN to build laconic oblivious transfer and identity-based encryption~\cite{EC:BLSV18}. Therefore, we leave as open question the task of overcoming these limitations, either by finding a way to efficiently compress and decompress $\mat A \mat x$ for a random $t$-sparse $\mat x$, or by proposing a different LPN variant that avoids this issue entirely.

\section{Preliminaries}\label{sec:prelim}

\paragraph{Notation.} Let $\mathbb{N}=\{1,2,\dots\}$ be the natural numbers, and define $[a,b]:=\{a,a+1,\dots,b\}$, $[n]:=[1,n]$. Our logarithms are in base $2$. For a finite set $S$, we write $x \sample S$ to denote uniformly sampling $x$ from $S$. We denote the security parameter by $\secparam$; our parameters depend on $\secparam$, e.g. $n=n(\secparam)$, and we often drop the explicit dependence.

We abbreviate PPT for probabilistic polynomial-time. Our adversaries are non-uniform PPT, or equivalently, polynomial-sized, ensembles $\calA=\{\calA_\secparam\}_{\secparam \in \N}$. We write $\negl(\secparam)$ to denote a negligible function in $\secparam$. Two ensembles of distributions $\{\calD_\lambda\}_{\lambda \in \N}$ and $\{\calD'_\lambda\}_{\lambda \in \N}$ are computationally indistinguishable if for any non-uniform PPT adversary $\calA$ there exists a negligible function $\negl$ such that $\calA$ can distinguish between the two distributions with probability at most $\negl(\lambda)$.

For $q \in \N$ that is a prime power, we write $\F_q$ to denote the finite field with $q$ elements, and $\F_q^\times$ to denote its non-zero elements.
We write vectors and matrices in boldcase, e.g. $\mat v \in \F_q^m$ and $\mat A \in \F_q^{n \times m}$. Given $\mat v \in \F_q^m$, we define the Hamming weight $\wt(\mat v)$, also denoted $\norm{\mat v}_0$, to be the number of non-zero entries of $\mat v$.

\paragraph{Bernoulli Distribution.} We denote the Bernoulli distribution over a finite field $\F_q$ with noise rate $\epsilon \in (0,1)$ by $\Ber(\F_q,\epsilon)$; this distribution gives $0$ with probability $1 - \epsilon$, and a random non-zero element of $\F_q$ with probability $\epsilon$. We write $e \sim \Ber(\F_q,\epsilon)$ to denote that $e$ comes from the corresponding Bernoulli distribution. When $q=2$, we omit $\F_q$ and simply write $\Ber(\epsilon)$.

\begin{definition}[Bias]\label{def:bias}
    Let $\F$ be a finite field. Given a distribution $\dist$ over $\F^m$ and a vector $\mat u \in \F^m$, we define the \emph{bias} of $\dist$ with respect to $\mat u$ to be 
    \begin{align*}
        \bias_{\mat u}(\dist) := \left| \E_{\mat x \sim \calD}\left[\langle \mat u, \mat x\rangle\right] - \frac{1}{\abs{\F}}\right| .
    \end{align*}
    The \emph{bias} of $\dist$ is defined as $\bias(\dist) = \max_{\mat u \ne \mathbf{0}}\bias_{\mat u}(\dist)$.
\end{definition}

\begin{lemma}[Bias of the Bernoulli distribution]\label{lem:bias-bernoulli}
    For any finite field $\F_q$, noise rate $\epsilon \in (0,1)$, and $d \in \N$, consider the noise distribution $\dist_{m,n}=(\Ber(\F_q,\epsilon))^m$. Then $\epsilon_d = \left(1-\frac{q}{q-1}\epsilon\right)^d$.
\end{lemma}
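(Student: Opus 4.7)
The plan is to compute the bias via additive characters of $\F_q$, which turns the inner product $\langle \mat u, \mat x\rangle$ into a product of independent single-coordinate expectations. Let $d$ denote the Hamming weight of the test vector $\mat u \in \F_q^m$, with support $S \subseteq [m]$ of size $d$; the statement claims that $\bias_{\mat u}(\dist) = |1 - \tfrac{q}{q-1}\epsilon|^d$, so the computation is driven entirely by the $d$ non-zero coordinates of $\mat u$.

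First, I would handle the single-coordinate case. For a non-trivial additive character $\chi$ of $\F_q$ and $x \sim \Ber(\F_q,\epsilon)$, a direct calculation gives
\[
\E[\chi(x)] = (1-\epsilon)\,\chi(0) + \frac{\epsilon}{q-1}\sum_{a \in \F_q^\times}\chi(a) = (1-\epsilon) - \frac{\epsilon}{q-1} = 1 - \frac{q}{q-1}\epsilon,
\]
using $\sum_{a \in \F_q}\chi(a) = 0$ to rewrite the sum over $\F_q^\times$ as $-\chi(0) = -1$. Next, for $\mat x \sim (\Ber(\F_q,\epsilon))^m$ and $\mat u$ with support $S$, independence of the $x_i$ yields
\[
\E[\chi(\langle \mat u,\mat x\rangle)] = \prod_{i \in S} \E[\chi(u_i x_i)].
\]
For each $i \in S$, the random variable $u_i x_i$ is again distributed as $\Ber(\F_q,\epsilon)$, since multiplication by the nonzero scalar $u_i$ fixes $0$ and permutes $\F_q^\times$ uniformly. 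Applying the single-coordinate formula to each factor gives the claimed $(1 - \tfrac{q}{q-1}\epsilon)^d$.

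Finally, I would tie this character-sum computation back to the bias definition given in the paper. For $q=2$ the connection is immediate, since $\chi(y) = (-1)^y$ and $\E[\chi(\langle \mat u,\mat x\rangle)] = 1 - 2\Pr[\langle \mat u,\mat x\rangle = 1]$, which up to a harmless factor of $2$ matches $|\E[\langle \mat u,\mat x\rangle] - 1/2|$. For general $q$ one interprets the bias as the maximum over non-trivial characters $\chi$ of $|\E[\chi(\langle \mat u,\mat x\rangle)]|$, which all yield the same magnitude here because each factor depends on $\chi$ only through the fact that $\chi$ is non-trivial. The main subtlety, and the only real obstacle, is this reconciliation of the stated bias definition with the Fourier form used in the proof — the underlying computation is otherwise a short piling-up style argument.
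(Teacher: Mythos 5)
The paper states this lemma without proof, as a standard preliminary fact, so there is no house argument against which to compare. Your character-theoretic computation is correct: the single-coordinate evaluation $\E[\chi(x)] = 1-\tfrac{q}{q-1}\epsilon$, the observation that $u_i x_i$ is again $\Ber(\F_q,\epsilon)$-distributed whenever $u_i \ne 0$, and the factorization over coordinates by independence are all sound, and together they give the claimed product formula for a $d$-sparse $\mat u$.

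You are also right that the statement as written is imprecise: $\epsilon_d$ is never defined, and $\left(1-\tfrac{q}{q-1}\epsilon\right)^d$ does not literally equal $\bias_{\mat u}(\dist)$ from \Cref{def:bias} (for $q=2$ the piling-up lemma gives $\bias_{\mat u} = \tfrac{1}{2}(1-2\epsilon)^d$, off by a factor of two). The intended meaning, which you adopt, is that $\epsilon_d$ is the Fourier coefficient $\E[\chi(\langle\mat u,\mat x\rangle)]$, equivalently the parameter determined by $\langle\mat u,\mat x\rangle \sim \Ber(\F_q,\epsilon')$ with $1-\tfrac{q}{q-1}\epsilon' = \epsilon_d$. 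This is exactly how the lemma is later invoked in \Cref{lem:ltdf-noise-bound}, where the paper writes $\epsilon' = \tfrac{1-(1-2\epsilon)^t}{2}$ for a $t$-sparse inner product over $\F_2$. So the obstacle you flagged is a genuine imprecision in the lemma statement, not a gap in your argument, and the proof itself is complete.
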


As a special case when $q=2$, we have the piling-up lemma.

\begin{lemma}[Piling-Up Lemma]\label{lem:piling-up}
    For any $\epsilon \in (0,1)$, we have that 
    \[\Pr\left[ \sum_{i=1}^\ell e_i = 1  \;\middle|\; e_1,\dots,e_\ell \gets \Ber(\epsilon)\right] = \frac{1 - (1-2\epsilon)^\ell}{2} < \min\left(\epsilon \ell,\frac{1}{2} - 2^{-4\epsilon \ell-1}\right).\]
\end{lemma}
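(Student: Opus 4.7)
The plan is to first establish the exact closed-form probability, and then derive the two upper bounds separately by elementary inequalities.

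For the equality, I will pass to the $\pm 1$ encoding. Define $X_i := (-1)^{e_i}$, so that $X_i = 1$ with probability $1-\epsilon$ and $X_i = -1$ with probability $\epsilon$; hence $\E[X_i] = 1 - 2\epsilon$. Since addition modulo $2$ corresponds to multiplication of signs, we have $\prod_{i=1}^\ell X_i = (-1)^{\sum_i e_i}$, which equals $-1$ exactly when $\sum_i e_i \equiv 1 \pmod 2$. Independence of the $e_i$ gives
\[
\E\!\left[\prod_{i=1}^\ell X_i\right] = \prod_{i=1}^\ell \E[X_i] = (1-2\epsilon)^\ell,
\]
and writing this expectation as $(+1)\cdot \Pr[\text{sum}\equiv 0] + (-1)\cdot \Pr[\text{sum}\equiv 1]$ and solving yields $\Pr[\sum e_i = 1] = \tfrac{1-(1-2\epsilon)^\ell}{2}$, as claimed.

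For the first upper bound, I will argue that $\{\sum_i e_i \equiv 1 \pmod 2\} \subseteq \{\exists i : e_i = 1\}$, since if every $e_i = 0$ the sum is $0$. A union bound then gives $\Pr[\sum e_i \equiv 1] \le \sum_i \Pr[e_i = 1] = \epsilon\ell$. (Equivalently, Bernoulli's inequality $(1-2\epsilon)^\ell \ge 1 - 2\epsilon\ell$ applied to the exact formula works for $\epsilon \le 1/2$.) For the second upper bound, it suffices to show $(1-2\epsilon)^\ell \ge 2^{-4\epsilon\ell}$, which after taking $\log_2$ reduces to showing the single-variable inequality $\log_2(1-2\epsilon) \ge -4\epsilon$. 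I would verify this by checking the function $f(\epsilon) := \log_2(1-2\epsilon) + 4\epsilon$ vanishes at $\epsilon = 0$ and is non-negative on the relevant range; concretely, $f'(\epsilon) = 4 - \tfrac{2}{(1-2\epsilon)\ln 2}$ shows that $f$ rises from $0$, attains a positive maximum near $\epsilon \approx 0.14$, and returns to $0$ at $\epsilon = 1/4$, ensuring $f(\epsilon) \ge 0$ on $[0,1/4]$. Plugging back gives $\tfrac{1-(1-2\epsilon)^\ell}{2} \le \tfrac{1}{2} - \tfrac{1}{2}\cdot 2^{-4\epsilon\ell} = \tfrac{1}{2} - 2^{-4\epsilon\ell - 1}$.

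The main subtlety is that the second bound is only meaningful in the regime where $\epsilon$ is sufficiently small (say $\epsilon \le 1/4$), which is precisely the LPN-relevant regime anyway; outside that range the first bound $\epsilon\ell$ is already loose and the lemma is typically applied with small $\epsilon$ and growing $\ell$. I would therefore explicitly restrict to $\epsilon \in (0, 1/4]$ when invoking the second estimate, or else note that both bounds are to be interpreted as non-strict whenever the stated range of $\epsilon$ makes the strict inequality fail at boundary values. No deep ingredient is needed beyond the Fourier-sign trick and one scalar convexity check.
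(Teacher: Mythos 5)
The paper states this lemma without proof, treating it as a standard fact (the classical piling-up lemma together with two routine calculus bounds), so there is no paper proof to compare against. Your argument is correct and is the standard one: the $\pm 1$ sign transform plus independence gives the exact formula, the union bound (or Bernoulli's inequality) gives the $\epsilon\ell$ bound, and the $2^{-4\epsilon\ell-1}$ bound reduces to the scalar inequality $\log_2(1-2\epsilon) \ge -4\epsilon$, which you verify by a first-derivative sign check on $[0,1/4]$.

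Two small observations worth recording. First, you are right that the second bound genuinely requires $\epsilon \le 1/4$: the scalar inequality $\log_2(1-2\epsilon) + 4\epsilon \ge 0$ is an equality at $\epsilon = 0$ and $\epsilon = 1/4$, is strictly positive between them, and fails beyond $\epsilon = 1/4$ (and is vacuous for $\epsilon \ge 1/2$ since $(1-2\epsilon)^\ell$ is then nonpositive for odd $\ell$). The paper's blanket ``for any $\epsilon \in (0,1)$'' is therefore slightly overstated, though harmlessly so: everywhere the lemma is invoked, $\epsilon$ is $o(1)$. Second, the strict inequality $<$ also fails at $\ell = 1$ for the $\epsilon\ell$ bound (both sides equal $\epsilon$); again irrelevant to the paper's uses, where $\ell$ is a growing sparsity parameter. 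Your proof is otherwise complete; you could tighten the write-up by noting that on $(0,1/4]$ the union-bound argument and the Bernoulli-inequality argument both apply without caveat, so the only branch needing a range restriction is the second bound.
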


\paragraph{Tail Bounds.} We also state some standard tail bounds for binary variables.

\begin{lemma}[Chernoff/Hoeffding bound]\label{lem:chernoff-bound}
    Let $X_1,\dots,X_n \in \{0,1\}$ be i.i.d random variables with mean at most $\epsilon$. Then for every $\kappa > 1$,
    \[\Pr[X_1 + \dots + X_n > (1+\kappa)\epsilon n] \le e^{-2\kappa^2 \epsilon n}.\]
\end{lemma}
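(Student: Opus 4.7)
The plan is to prove the bound via the standard moment generating function (Chernoff) technique. Let $S = X_1 + \dots + X_n$ and write $\mu = \E[S] \le \epsilon n$. For any parameter $t > 0$, Markov's inequality applied to the non-negative random variable $e^{tS}$ yields
\[
\Pr[S > (1+\kappa)\epsilon n] \;=\; \Pr\left[e^{tS} > e^{t(1+\kappa)\epsilon n}\right] \;\le\; \frac{\E[e^{tS}]}{e^{t(1+\kappa)\epsilon n}}.
\]
Since the $X_i$ are independent, $\E[e^{tS}] = \prod_{i=1}^n \E[e^{tX_i}]$. Each $X_i \in \{0,1\}$ has mean $p_i \le \epsilon$, so
\[
\E[e^{tX_i}] = 1 + p_i(e^t - 1) \le 1 + \epsilon(e^t - 1) \le \exp\bigl(\epsilon(e^t - 1)\bigr),
\]
where the last step uses $1 + x \le e^x$. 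Combining these ingredients gives the master inequality
\[
\Pr[S > (1+\kappa)\epsilon n] \;\le\; \exp\bigl(n\epsilon(e^t - 1) - t(1+\kappa)\epsilon n\bigr).
\]

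The next step is to optimize over $t > 0$. Setting the derivative of the exponent to zero gives $t^\star = \ln(1+\kappa)$, and substituting back yields the classical Chernoff bound
\[
\Pr[S > (1+\kappa)\epsilon n] \;\le\; \exp\bigl(-\epsilon n \cdot \phi(\kappa)\bigr), \qquad \phi(\kappa) := (1+\kappa)\ln(1+\kappa) - \kappa.
\]
From here, the remaining task is purely analytical: show that $\phi(\kappa) \ge 2\kappa^2$ in the appropriate regime so that the exponent is $\le -2\kappa^2 \epsilon n$ as claimed. Alternatively, one can use a suboptimal but cleaner choice of $t$ (for instance $t = \ln(1+\kappa)$ truncated, or a constant depending on the target) that directly produces the factor $2\kappa^2$; this is a matter of bookkeeping with the elementary inequality $e^t - 1 - t(1+\kappa) \le -2\kappa^2/\epsilon^{-1}$ in the relevant range.

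The main (only) obstacle is this final analytic step of massaging the Chernoff exponent into the specific form $-2\kappa^2 \epsilon n$. The MGF machinery itself is entirely mechanical; the choice of $t$ and the subsequent inequality manipulation are what determines the precise constant. Since the lemma is a standard textbook fact stated for use as a black box later in the paper, I would simply cite a reference such as Mitzenmacher-Upfal or Dubhashi-Panconesi rather than grinding through the optimization in full.
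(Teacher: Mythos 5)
Your MGF derivation is correct and standard: it yields the tight exponential-moment bound $\exp(-\epsilon n\,\phi(\kappa))$ with $\phi(\kappa)=(1+\kappa)\ln(1+\kappa)-\kappa$ after optimizing at $t^\star=\ln(1+\kappa)$. The trouble is with the step you flag as remaining ``bookkeeping'': the inequality $\phi(\kappa)\ge 2\kappa^2$ is \emph{false} for every $\kappa>0$. For instance $\phi(1)=2\ln 2-1\approx 0.386$ while $2\cdot 1^2=2$, and asymptotically $\phi(\kappa)=\Theta(\kappa\log\kappa)$ whereas $2\kappa^2$ is quadratic. Since $t^\star$ already minimizes the exponent, no other choice of $t$ and no more careful manipulation can recover the factor $2\kappa^2$; this is not a routine simplification you can defer to a textbook, but a genuine obstruction.

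The underlying issue is that the lemma as printed is not a correct statement. The exponent $2\kappa^2\epsilon n$ strictly exceeds the optimal large-deviation rate $n\cdot D\bigl((1+\kappa)\epsilon \,\big\|\, \epsilon\bigr)\approx \epsilon n\,\phi(\kappa)$ for small $\epsilon$, so by Cram\'er's lower bound the claimed tail estimate actually fails for large $n$. The standard textbook forms are $\exp\bigl(-\kappa^2\epsilon n/(2+\kappa)\bigr)$ or $\exp(-\kappa\epsilon n/3)$ (multiplicative Chernoff for $\kappa\ge 1$), and $\exp(-2\kappa^2\epsilon^2 n)$ (Hoeffding applied to the additive deviation $\kappa\epsilon n$); the constant $2$ in the paper's version matches Hoeffding, suggesting a dropped square on $\epsilon$. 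The paper itself gives no proof of this lemma, so there is nothing to reconcile your attempt against, but you should not expect to find the stated bound in Mitzenmacher--Upfal or Dubhashi--Panconesi, because it is not true. Any of the correct forms would serve in the downstream noise-weight bound after adjusting the constant $\alpha$, so the construction survives, but the lemma statement itself needs repair.
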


\paragraph{Binomial Approximation.} We will use the following basic approximations of the binomial coefficient, which can be found in e.g.~\cite{cormen2022introduction}. Namely, for any $1 \le k \le n/2$, we have
\begin{equation}\label{eqn:binomial-approx}
    \left(\frac{n}{k}\right)^k \le {n \choose k} \le \left(\frac{en}{k}\right)^k.
\end{equation}

\paragraph{Hamming Balls.} Given $n,w \in \N$ with $w \le n$, we define the following sets:
\begin{itemize}
    \item $\calB_{\le}(n,w) = \{ \mat x \in \{0,1\}^n \mid \wt(\mat x) \le w\}$,
    \item $\calB(n,w) = \{\mat x \in \{0,1\}^n \mid \wt(\mat x) = w\}$,
    \item $\calB_\reg(n,w) = \{\mat x = \mat x_1 \| \dots \| \mat x_{w} \in \{0,1\}^n \mid \mat x_i \in \{0,1\}^{n/w} \,\land\, \wt(\mat x_i) = 1 \; \forall\, i \in [w]\}$, for any $w$ dividing $n$.
\end{itemize}
Their sizes are as follows.
\begin{lemma}\label{lem:hamming-ball-sizes}
    For any $w < n/3$, we have the following:
    \begin{itemize}
        \item $\abs{\calB_{\le}(n,w)} = \sum_{t=0}^w {n \choose t} \in \left( {n \choose w}, 2{n \choose w}\right)$.
        \item $\abs{\calB(n,w)} = {n \choose w} \in \left( 2^{w \log (n/w)}, 2^{w \log(en/w)} \right)$.
        \item $\abs{\calB_\reg(n,w)} = \left(\frac{n}{w}\right)^{w} = 2^{w\log(n/w)}$.
    \end{itemize}
    Thus, we have that $\abs{\calB_\reg(n,w)} < \abs{\calB(n,w)} < \abs{\calB_{\le}(n,w)} < 2^{w \log e + 1} \cdot \abs{\calB_{\reg}(n,w)}$.
\end{lemma}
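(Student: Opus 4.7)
The plan is to verify each size identity directly from the definition, then derive the two sandwich bounds on $\abs{\calB_{\le}(n,w)}$ and $\abs{\calB(n,w)}$ using the binomial approximation \eqref{eqn:binomial-approx} already recorded in the excerpt, and finally chain the resulting inequalities. None of the steps should be deep; the only place that requires a small argument beyond unpacking definitions is the upper bound $\abs{\calB_{\le}(n,w)} < 2\binom{n}{w}$, which relies on the hypothesis $w < n/3$.

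First I would handle the three equalities. By definition, $\calB(n,w)$ is the set of binary strings of length $n$ with exactly $w$ ones, so $\abs{\calB(n,w)} = \binom{n}{w}$; applying \eqref{eqn:binomial-approx} to both sides immediately gives the two-sided bound $2^{w\log(n/w)} \le \binom{n}{w} \le 2^{w\log(en/w)}$. Next, $\calB_{\reg}(n,w)$ consists of concatenations of $w$ blocks of length $n/w$, each containing a single $1$; the blocks are independent, so $\abs{\calB_{\reg}(n,w)} = (n/w)^w = 2^{w\log(n/w)}$. Finally, $\calB_{\le}(n,w)$ is the disjoint union of the sets of vectors of weight exactly $t$ for $t=0,\dots,w$, which gives $\abs{\calB_{\le}(n,w)} = \sum_{t=0}^{w}\binom{n}{t}$.

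Next I would prove the two bounds $\binom{n}{w} < \abs{\calB_{\le}(n,w)} < 2\binom{n}{w}$. The lower bound is just the $t=w$ term of the sum (for $w\ge 1$). For the upper bound, I use the ratio
\[
\frac{\binom{n}{t-1}}{\binom{n}{t}} = \frac{t}{n-t+1} \le \frac{w}{n-w+1} < \frac{1}{2}
\]
for all $1 \le t \le w$, where the last inequality uses $w < n/3$. Iterating, $\binom{n}{w-j} < 2^{-j}\binom{n}{w}$, so $\sum_{t=0}^{w-1}\binom{n}{t} < \binom{n}{w}\sum_{j\ge 1} 2^{-j} = \binom{n}{w}$, and summing in the $t=w$ term yields $\abs{\calB_{\le}(n,w)} < 2\binom{n}{w}$.

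Finally I would put everything together to get the chain. The inclusions $\calB_{\reg}(n,w) \subseteq \calB(n,w) \subseteq \calB_{\le}(n,w)$ are immediate from the definitions, giving the first two inequalities. For the last one, combine the already-proved $\abs{\calB_{\le}(n,w)} < 2\binom{n}{w}$ with the upper binomial bound $\binom{n}{w} \le (en/w)^w = e^{w}(n/w)^w = 2^{w\log e}\abs{\calB_{\reg}(n,w)}$ to obtain $\abs{\calB_{\le}(n,w)} < 2^{w\log e + 1}\abs{\calB_{\reg}(n,w)}$. The only mild obstacle is ensuring that the hypothesis $w < n/3$ is invoked at exactly the right spot (the geometric-series bound on the binomial tail); everything else is bookkeeping.
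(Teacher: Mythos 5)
Your proof is correct and follows essentially the same route as the paper: the only nontrivial step is the bound $\sum_{t=0}^w \binom{n}{t} < 2\binom{n}{w}$, and you establish it exactly as the paper does, by bounding the ratio $\binom{n}{t-1}/\binom{n}{t} = \tfrac{t}{n-t+1} < \tfrac{1}{2}$ (using $w < n/3$) and summing a geometric series. The remaining identities and the final chain are direct from the definitions and the binomial estimate \eqref{eqn:binomial-approx}, as you observe.
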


\begin{proof}
    The only non-trivial claim is that $\sum_{t=0}^w {n \choose t} < 2 {n \choose w}$. This follows from the fact that ${n \choose t-1} = \frac{t}{n-t+1} {n \choose t} < \frac{1}{2} {n \choose t}$ for all $t < n/3$, and hence $\sum_{t=0}^w {n \choose t} < \left(1+\frac{1}{2}+\left(\frac{1}{2}\right)^2+\dots \right) {n \choose w} = 2 {n \choose w}$.
\end{proof}

\paragraph{Sparsification and the Gadget Matrix.} We describe an LPN analogue to the LWE gadget matrix~\cite{EC:MicPei12}, based on the idea of sparsification.

Given any $s \in \N$, we denote by $\bin_s: [0,2^s-1] \to \{0,1\}^s$ the binary decomposition function, and $\bin_s^{-1}: \{0,1\}^s \to [0,2^s-1]$ its inverse. Given any $\mat v \in \{0,1\}^s$, we define $\ind(\mat v)$ to be the vector of length $2^s$ (indexed from $0$) consisting of all zeros, except for the $\bin_s^{-1}(\mat v)$-th entry being $1$.

\begin{definition}\label{def:gadget-matrix}
    Let $n,w \in \N$ with $w \le n$, and define $\tilde{w} = w \log(n/w)$. Given $\mat x \in \{0,1\}^{\tilde{w}}$ divided into $w$ blocks of size $s = \log(n/w)$, so that $\mat x = \mat x_1 \| \dots \| \mat x_w$, we define its \emph{$(n,w)$-sparsification} $\spfy_{n,w}(\mat x)$ to be \[\spfy_{n,w}(\mat x) := \ind(\mat x_1) \| \dots \| \ind(\mat x_{w}) \in \{0,1\}^n.\] 
We also define the \emph{$(n,w)$-gadget matrix} $\mat G_{n,w}$ to be 
\[\mat G_{n,w} := \mat g_s \otimes \mat I_w \in \{0,1\}^{w \times n},\quad\text{where }\mat g_s := \left[\bin_s(0) \| \bin_s(1) \| \dots \| \bin_s(2^s-1)\right],\]
and $\mat I_{w}$ is the identity matrix of dimension $w$. For convenience, we also refer to $\spfy_{n,w}$ as $\mat G_{n,w}^{-1}$, and omit the subscripts $n,w$ when they are clear from context. 
\end{definition}
We can easily check that 
\begin{align}\label{eqn:gadget-inversion}
    \mat G_{n,w}\cdot \mat G_{n,w}^{-1}(\mat x) = \mat G_{n,w} \cdot \spfy_{n,w}(\mat x) = \mat x.
\end{align}
Note that for the dimensions of $\mat G_{n,w}$ to satisfy $n = \poly(w)$, we need $s \le c \log w$ for some constant $c$. We also get the following identity for any $\mat x, \mat y \in \{0,1\}^{\tilde{w}}$:
\begin{align}\label{eqn:gadget-inner-product}
    \langle \mat x, \mat y \rangle = \left\langle \mat x, \mat G_{n,w} \cdot \spfy_{n,w}(\mat y) \right\rangle = \left\langle \mat G_{n,w}^\top \cdot \mat x, \spfy_{n,w}(\mat y) \right\rangle.
\end{align}


\subsection{Coding Theory}

\begin{definition}[Dual Distance]\label{def:dual-distance}
Let $\F_q$ be a finite field and $n < m \in \N$. The \emph{dual distance} of a matrix $\mat A \in \F_q^{n \times m}$, denoted $\dd(\mat A)$, is defined to be minimum sparsity of a vector $\mat x \in \F_q^m$ in the kernel of $\mat A$. In other words, we define $\dd(\mat A) = \min\{\wt(\mat x) \mid \mat A \mat x = \mat 0\}$.
\end{definition}

\begin{definition}[$q$-ary Entropy]
    For any $x \in (0,1)$, we define the $q$-ary entropy function to be $H_q(x) = x\log_q(q-1) - x\log_q x - (1-x)\log_q (1-x)$. We denote binary entropy by $H(x)$.
\end{definition}

The following standard results can be found in coding theory textbooks, e.g.~\cite{web-coding-book}.

\begin{lemma}[Entropy Approximation]\label{lem:entropy-approx}
    For any $\delta \in (0,1)$, we have that $\delta \log(1/\delta) < H(\delta) < \delta \log(4/\delta)$.
\end{lemma}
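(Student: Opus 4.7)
The plan is to rewrite the binary entropy in the form
\[H(\delta) = \delta \log(1/\delta) + (1-\delta) \log(1/(1-\delta)),\]
which is obtained directly from $-\delta\log\delta - (1-\delta)\log(1-\delta)$. The lower bound $H(\delta) > \delta \log(1/\delta)$ is then immediate, since the second summand is strictly positive on $(0,1)$.

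For the upper bound, I would subtract $\delta\log(1/\delta)$ from both sides of the target inequality. Because $\delta\log(4/\delta) - \delta\log(1/\delta) = \delta \log 4 = 2\delta$, this reduces the problem to showing $(1-\delta)\log(1/(1-\delta)) < 2\delta$ for every $\delta \in (0,1)$. Converting the base-$2$ logarithm to the natural one, the claim becomes $-(1-\delta)\ln(1-\delta) < 2\delta \ln 2$. I would obtain it from the elementary integral estimate
\[-\ln(1-\delta) \;=\; \int_0^{\delta} \frac{dt}{1-t} \;<\; \int_0^{\delta} \frac{dt}{1-\delta} \;=\; \frac{\delta}{1-\delta},\]
where the strict inequality holds because $1/(1-t) < 1/(1-\delta)$ on the positive-measure set $[0,\delta)$. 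Multiplying through by $1-\delta$ yields $-(1-\delta)\ln(1-\delta) < \delta$, and since $\ln 2 > 1/2$ we have $\delta < 2\delta \ln 2$, completing the argument.

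There is essentially no obstacle here; the lemma is a standard calculus exercise. The only minor judgment call is to select a bound on $-\ln(1-\delta)$ that is slack enough to absorb the factor $\log 4 = 2$ in the target inequality, and the integral estimate does so with considerable room to spare (producing the constant $1/\ln 2 \approx 1.44$ in place of $2$). For completeness I would also remark that strictness of the lower bound holds on the open interval $(0,1)$ precisely because $1-\delta > 0$ and $\log(1/(1-\delta)) > 0$ both hold strictly there.
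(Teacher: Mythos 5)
Your proof is correct. The paper does not actually supply a proof of this lemma --- it lists it among ``standard results [that] can be found in coding theory textbooks'' with a citation --- so there is nothing to compare against, but your argument is sound: the decomposition $H(\delta) = \delta\log(1/\delta) + (1-\delta)\log(1/(1-\delta))$ gives the lower bound immediately, and the integral estimate $-\ln(1-\delta) < \delta/(1-\delta)$ cleanly yields $(1-\delta)\log(1/(1-\delta)) < \delta/\ln 2 < 2\delta$, which is exactly the slack the constant $\log 4 = 2$ provides.
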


\begin{lemma}[Gilbert-Varshamov Bound]\label{lem:G-V}
    Let $\F_q$ be a finite field. For every $\delta \in (0,1-1/q)$ and every $\epsilon \in (0,1-H_q(\delta))$, letting $k=\lfloor (1-H_q(\delta)-\epsilon)\cdot n \rfloor$, a random matrix $\mat G \gets \F_q^{k \times n}$ generates a $q$-ary linear code of distance at least $\delta n$ with probability at least $1-q^{-\epsilon n}$.

    Equivalently, for $\ell = \lceil (H_q(\delta) + \epsilon) \cdot n \rceil$, a random matrix $\mat H \gets \F_q^{\ell \times n}$ is a parity-check matrix for a $q$-ary linear code of distance at least $\delta n$ with probability at least $1-q^{-\epsilon n}$.
\end{lemma}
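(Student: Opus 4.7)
The plan is a textbook first-moment / union bound argument. I will prove the parity-check formulation first, since there the $q$-ary entropy volume bound enters most naturally, and then read off the generator-matrix formulation in an essentially dual fashion.

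For the parity-check form, the key observation is that $\mat H \gets \F_q^{\ell \times n}$ defines a code of distance less than $\delta n$ iff some nonzero $\mat x \in \F_q^n$ with $\wt(\mat x) < \delta n$ lies in its kernel. Since the entries of $\mat H$ are i.i.d.\ uniform in $\F_q$, for any fixed nonzero $\mat x$ the product $\mat H \mat x$ is uniformly distributed in $\F_q^\ell$, so $\Pr[\mat H \mat x = \mat 0] = q^{-\ell}$. A union bound over all nonzero vectors of weight strictly less than $\delta n$ then gives failure probability at most $N_q(n, \delta n) \cdot q^{-\ell}$, where $N_q(n, \delta n) := \sum_{w=0}^{\lceil \delta n\rceil - 1}\binom{n}{w}(q-1)^w$ denotes the size of the $q$-ary Hamming ball of radius $\delta n$.

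The next step is to invoke the classical $q$-ary volume bound $N_q(n, \delta n) \le q^{H_q(\delta) n}$, valid for $\delta \in (0, 1 - 1/q)$; this is the single substantive inequality in the argument, and I would quote it from a standard coding-theory reference such as~\cite{web-coding-book}. Substituting it together with the choice $\ell = \lceil (H_q(\delta) + \epsilon) n\rceil$ yields failure probability at most $q^{H_q(\delta) n - \ell} \le q^{-\epsilon n}$, which establishes the parity-check claim.

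For the generator form, $\mat G \gets \F_q^{k \times n}$ fails to produce a code of distance at least $\delta n$ iff some nonzero $\mat m \in \F_q^k$ yields a codeword $\mat m \mat G$ of weight less than $\delta n$. Uniformity of $\mat G$ and the fact that $\mat m \ne \mat 0$ again make $\mat m \mat G$ uniform in $\F_q^n$, so $\Pr[\wt(\mat m \mat G) < \delta n] = N_q(n, \delta n)/q^n \le q^{(H_q(\delta) - 1) n}$. Union-bounding over the fewer than $q^k$ nonzero messages and using $k = \lfloor (1 - H_q(\delta) - \epsilon) n \rfloor$ gives the claimed $q^{-\epsilon n}$ bound. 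The only potential obstacle in the whole proof is the entropy volume bound itself; the rest is routine bookkeeping with floors and ceilings to ensure the $-\epsilon n$ in the exponent comes out exactly, and I would handle it in one line by absorbing the rounding into the slack already provided by $\epsilon$.
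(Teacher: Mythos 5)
Your proof is correct and is the standard first-moment argument found in coding-theory textbooks; the paper does not prove \Cref{lem:G-V} itself but instead cites~\cite{web-coding-book} for it, so your write-up is exactly the kind of routine verification being delegated to a reference. The only thing worth noting is that in both branches your union bound implicitly also charges the event that $\mat H$ (resp.\ $\mat G$) is rank-deficient as a failure — for $\mat G$ this happens because $\wt(\mat m \mat G)=0<\delta n$ whenever $\mat m\ne\mat 0$ lies in the left kernel — which is the right thing to do and makes the conclusion hold under either reading of ``generates a code of distance at least $\delta n$.''
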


\begin{lemma}[Asymptotically Good Codes]\label{lem:asymp-good-codes}
    For some constants $\delta,\rho > 0$, there exists an explicit construction of a family of binary codes $\{\mat C_n\}_{n \in \N}$ with block length $N(n)$ (bounded by some fixed polynomial in $n$), rate $\rho$, and supporting efficient error correction for up to $\delta n$ errors.
\end{lemma}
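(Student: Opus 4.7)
The plan is to instantiate the family by the classical Justesen construction, which is already cited in the LTDF overview. Concretely, I would fix a rate $R_{\out} \in (0,1)$ and take the outer code to be a Reed--Solomon code $\mathcal{C}_{\out}$ of rate $R_{\out}$ over $\F_{2^m}$, where $m = \Theta(\log n)$; this gives block length $n_{\out} = 2^m - 1 = \Theta(n)$ and relative distance $\delta_{\out} = 1 - R_{\out}$, and it admits efficient encoding as well as unique decoding up to $\delta_{\out}/2$ fraction of errors (Berlekamp--Massey), and in fact efficient decoding in the GMD sense with erasures and errors.

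Next, I would concatenate with the Wozencraft ensemble of inner binary linear codes of block length $2m$ and rate $1/2$: for each of the $n_{\out}$ outer symbols one picks a (deterministic, indexed) element of the ensemble. A classical counting argument of Justesen shows that a constant fraction of the chosen inner codes meet the Gilbert--Varshamov bound, so the overall concatenated code has positive constant rate $\rho = R_{\out} \cdot \tfrac{1}{2}$, block length $N(n) = 2m \cdot n_{\out} = \Theta(n \log n)$, and relative minimum distance $\delta > 0$ bounded below by a fixed constant depending only on $R_{\out}$. Both the encoder and the concatenation structure are fully explicit and computable in polynomial time.

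For decoding, I would use Forney's Generalized Minimum Distance (GMD) decoder: brute-force maximum-likelihood decode each of the $n_{\out}$ inner blocks in time $2^{O(m)} = \poly(n)$, producing a symbol in $\F_{2^m}$ together with a reliability weight, and then feed the resulting weighted vector into the Reed--Solomon errors-and-erasures decoder. A standard analysis shows this corrects up to $\delta \cdot N(n)/2$ Hamming errors in $\poly(n)$ time, which is the desired $\delta' n$ error-correction guarantee for a suitable constant $\delta' > 0$ (after renaming variables so that $n$ denotes the block length).

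There is no real obstacle here beyond bookkeeping; the lemma is a textbook fact and the only decisions are (i) which explicit construction to cite and (ii) whether one tolerates block length $\Theta(n \log n)$ from Justesen or prefers a strictly linear block length, which can be obtained e.g.\ by Spielman-style expander codes or by the Guruswami--Indyk near-MDS constructions. Either way the three required properties---explicitness, constant rate, and polynomial-time correction of a constant fraction of errors---are immediate from well-known work, so for the purposes of this paper simply invoking Justesen's construction with GMD decoding suffices.
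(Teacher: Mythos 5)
The paper does not actually supply a proof of this lemma: it is stated as a standard coding-theoretic fact (preceded by "The following standard results can be found in coding theory textbooks"), and when the paper uses such codes in the LTDF overview it cites Justesen's construction directly. Your proof is a correct and complete account of exactly that construction — Reed–Solomon outer code, Wozencraft ensemble inner codes, the Justesen counting argument for a constant fraction of GV-meeting inner codes, and Forney's GMD decoding — so it matches the approach the paper implicitly relies on; the closing remark on Spielman/Guruswami–Indyk alternatives for strictly linear block length is a fine observation but unnecessary, since the lemma only asks for polynomially bounded block length.
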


\section{Our Code-Based Assumption: Dense-Sparse LPN}\label{sec:assumption}

In this section, we define our main assumption, Dense-Sparse LPN. To do so, we will first define a general LPN assumption with an arbitrary distribution of coefficient matrix.

\begin{definition}[Decisional $(\calM,\epsilon)$-LPN]\label{def:lpn}
    Let $n \in \N$ be the dimension, $m=m(n)$ be the number of samples, $\epsilon=\epsilon(n) \in (0,1)$ be the noise rate, and $q=q(n)$ be a prime power. Given an efficiently sampleable distribution $\calM = \calM(n,m,\F_q)$ over matrices in $\F_q^{n \times m}$, we say that the \emph{$(\calM,\epsilon)$-LPN assumption} is $(T(n),\delta(n))$-hard if for all adversary $\calA$ running in time at most $T$, the following holds:
    \begin{align*}
        \Adv^{\LPN}_{n,m,q,\calM,\epsilon}(\calA) := \left|\begin{aligned}
            &\Pr\left[\calA(\mat A, \mat s \mat A + \mat e) = 1 \;\middle|\; \mat A \gets \calM, \mat s \gets \F_q^{1 \times n}, \mat e \gets \Ber(\F_q,\epsilon)^{1 \times m}\right] \\
            - \; &\Pr\left[\calA(\mat A, \mat u) = 1 \;\middle|\; \mat A \gets \calM, \mat u \gets \F_q^{1 \times m}\right]
        \end{aligned}\right| \le \delta.
    \end{align*}
    We say that $(\calM,\epsilon)\text{-}\LPN$ is \emph{polynomially hard} if for every polynomial $p(n)$, there exists a negligible function $\negl(n)$ such that it is $(p(n),\negl(n))$-hard. Similarly, $(\calM,\epsilon)\text{-}\LPN$ is \emph{subexponentially hard} if there exists a constant $0 < c < 1$ and a negligible function $\negl(n)$ such that it is $(2^{n^c},\negl(n))$-hard.
\end{definition}

\begin{definition}[Decisional $(\calM,\epsilon)$-Dual LPN]\label{def:dual-lpn}
    Consider $n,m,q,\calM,\epsilon$ as in \Cref{def:lpn}. We say that the \emph{$(\calM,\epsilon)$-dual LPN assumption} is $(T(n),\delta(n))$-hard if for all adversary $\calA$ running in time at most $T$, the following holds:
    \begin{align*}
        \Adv^{\mathsf{dual}\text{-}\LPN}_{n,m,q,\calM,\epsilon}(\calA) := \left|\begin{aligned}
            &\Pr\left[\calA(\mat H, \mat H \mat e) = 1 \;\middle|\; \mat H \gets \calM, \mat e \gets \Ber(\F_q,\epsilon)^{1 \times m}\right] \\
            - \; &\Pr\left[\calA(\mat H, \mat u) = 1 \;\middle|\; \mat H \gets \calM, \mat u \gets \F_q^{1 \times m}\right]
            \end{aligned}\right| \le \delta.
    \end{align*}
    We may define polynomial or subexponential hardness of $(\calM,\epsilon)\text{-}\mathsf{dual}\text{-}\LPN$ similar to~\Cref{def:lpn}.
\end{definition}

\begin{remark}
    For the rest of the paper, we will work over the binary field (so that $q=2$). We note that our assumptions and constructions can be straightforwardly generalized to work with any constant $q=O(1)$.
\end{remark}

When $\calM$ is the uniform distribution over $\F_2^{n \times m}$, we recover the (standard) LPN assumption~\cite{C:BFKL93} (with Dual-LPN equivalent to LPN). We now define variants of LPN with different matrix distributions $\calM$.

\begin{definition}[Sparse LPN]\label{def:sparse-lpn}
    Let $k \in \N$ be a constant, and consider parameters $n \in \N$, $m=m(n) < n^{k/2}$, and $\epsilon=\epsilon(n) < 1$. Denote by $\SpMat(n,m,k)$ the set of matrices $\mat A \in \F_2^{n \times m}$ such that each column of $\mat A$ has exactly $k$ non-zero entries.

    We define the \emph{$(n,m,k,\epsilon)$-sparse LPN ($\sLPN$) assumption} to be the following: there exists an efficiently sampleable distribution $\calM_{\sparse}$ over $\SpMat(n,m,k)$ such that the \emph{$(\calM_{\sparse},\epsilon)\text{-}\LPN$} assumption holds.
\end{definition}

Note that the above assumption does not specify the exact distribution $\calM_{\sparse}$ of sparse matrices. This is because the ``canonical'' distribution $\calM_{\unif}$ of sampling $k$-sparse columns independently and uniformly at random do \emph{not} suffice for the above assumption. The reason is that there is a noticeable probability, of roughly $O(m^2/n^k)$, for sampling a ``bad'' matrix $\mat A \gets \calM_{\unif}$ with small dual distance, which would break the assumption by giving the adversary a noticeable distinguishing advantage.

Motivated by this discussion, we lay out a necessary criteria for the sparse matrix distribution $\calM_{\sparse}$ to satisfy the Sparse LPN assumption.

\begin{definition}[Sparse matrices with $\omega(1)$-dual distance]\label{def:good-dist}
    For every $n \in \N$, $k=k(n)$, $m=m(n) < n^{k/2}$ and $d = \omega(1)$, define $\SpMat(n,m,k,d) = \{\mat A \in \SpMat(n,m,k) \mid \dd(\mat A) \ge d\}$ to be the subset of $\SpMat(n,m,k)$ consisting of matrices with super-constant dual distance of at least $d$.

    We say that an efficiently sampleable distribution $\calM_{\sparse}$ over $\SpMat(n,m,k)$ is \emph{$(d,\delta)$-good} if \[\Pr[\mat A \not\in \SpMat(n,m,k,d) \mid \mat A \gets \calM_{\sparse}] \le \delta.\] We say that $\calM_{\sparse}$ is \emph{good} if it is $(d,\delta)$-good for some $d=\omega(1)$ and $\delta=\negl(n)$.
\end{definition}

We may make the \emph{conjecture} that any such good distribution $\calM_{\sparse}$ would give rise to a secure $\SLPN$ assumption. A more conservative assumption would be to assume $\SLPN$ assumption holds for \emph{specific} good distributions that have been constructed in the literature. In our work, we will use the following distribution by Applebaum and Kachlon~\cite{FOCS:AppKac19}.

\begin{theorem}[Theorem 7.18~\cite{FOCS:AppKac19}, adapted]\label{thm:ak19-dist}
    For every even $k \ge 6$, every $1 < c < k/4$ with $\gamma = k - 4c$, there exists an efficiently computable, $\left(O(n^{\delta}),n^{-O\left(\frac{\log \log \log n}{\log \log \log \log n}\right)}\right)$-\hyperref[def:good-dist]{good} distribution over $\SpMat(n,m,k)$, where $m = n^c$ and $\delta = \frac{k - 4c - \gamma}{k - \gamma - 4}$. We call this the \emph{AK19 distribution}.
\end{theorem}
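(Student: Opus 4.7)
The plan is to follow the Applebaum--Kachlon construction, which starts from a naive distribution of $k$-sparse matrices and applies a careful randomized refinement to push the failure probability below any inverse polynomial.

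First I would analyze the \emph{naive} distribution $\calM_{\unif}$ in which each column of $\mat M \in \F_2^{n \times m}$ is drawn independently and uniformly at random from $\calB(n,k)$. For a fixed nonzero vector $\mat x \in \F_2^m$ of weight $w$, the corresponding columns sum to $\mat 0$ iff every row of the $n \times w$ submatrix has even parity. A standard counting estimate gives that this event has probability at most roughly $\bigl(wk/n\bigr)^{wk/2}$, since the union of the supports of $w$ random $k$-sparse columns must cancel in pairs. Taking a union bound over all $\binom{m}{w}$ weight-$w$ choices yields
\begin{equation*}
    \Pr_{\mat M \gets \calM_{\unif}}[\dd(\mat M) < d] \;\le\; \sum_{w=2}^{d-1}\binom{m}{w}\!\left(\tfrac{wk}{n}\right)^{\!wk/2}.
\end{equation*}
Substituting $m = n^c$ and applying the binomial approximation from \Cref{eqn:binomial-approx}, the dominant $w$-th term is bounded by $n^{-w \cdot \Omega(k/2 - c - 1)}$. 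Setting $\gamma = k - 4c$ and solving for the largest $d$ such that this sum is still at most an inverse polynomial in $n$ yields the stated threshold $d = \Omega(n^{\delta})$ with $\delta = (k-4c-\gamma)/(k-\gamma-4)$.

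The previous step only yields polynomial failure probability, which is exactly the problem flagged in \Cref{sec:discussion-assumption}. To push this down to the almost-negligible $n^{-O(\log\log\log n / \log\log\log\log n)}$ bound claimed in the theorem, I would adopt the AK19 \emph{iterated resampling / conditioning} construction. Concretely, one samples $\mat M$ from $\calM_{\unif}$, then efficiently enumerates all candidate kernel vectors of weight below the threshold $d$ (there are at most polynomially many in expectation, and each can be detected by enumerating small column subsets in time $m^{O(d/\log n)}$, which is polynomial since $d = n^{\delta}$ is only polynomial and one only needs to detect short dependencies among very few columns at each round). Whenever a short dependency is found, the columns participating in it are \emph{resampled} afresh from $\calB(n,k)$, and the check is repeated. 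A potential-function argument of the Moser--Tardos / algorithmic Lov\'{a}sz Local Lemma style shows that this process terminates in expected $O(\log\log\log n / \log\log\log\log n)$ iterations, each reducing the failure probability by a super-constant factor, with the output distribution remaining efficiently samplable and statistically close to the conditional distribution on $\SpMat(n,m,k,d)$.

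The main obstacle is the last step: making the resampling converge fast enough while controlling how the distribution drifts away from uniform. Naively, resampling can re-introduce short dependencies involving the freshly drawn columns, so one needs the dependency-witness hypergraph to have bounded degree so that the potential strictly decreases. This is where the restriction $1 < c < k/4$ and the slack parameter $\gamma$ matter: they ensure the expected number of short dependencies starts small enough that each resampling round shrinks it geometrically, rather than amplifying it. The quantitative rate of convergence is exactly what produces the unusual failure bound $n^{-O(\log\log\log n / \log\log\log\log n)}$ rather than a cleaner $n^{-\omega(1)}$, and tracking this explicitly via the AK19 potential argument is the delicate technical heart of the theorem.
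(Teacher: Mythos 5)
This theorem is imported from Applebaum and Kachlon (FOCS 2019, Theorem~7.18); the paper you are reading does not prove it, so there is no in-paper argument to compare your sketch against. Your first step --- the union bound over the plain $k$-sparse distribution giving inverse-polynomial failure probability --- is a sound account of the folklore fact also recorded as~\Cref{lem:large-dual-distance}, but the second step has a concrete error and an unfounded guess.

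The error: you claim short kernel vectors can be located ``in time $m^{O(d/\log n)}$, which is polynomial since $d = n^{\delta}$ is only polynomial.'' With $m = n^{c}$ and $d = n^{\delta}$, that running time is $n^{\Theta(n^{\delta}/\log n)} = 2^{\Theta(n^{\delta})}$, which is exponential, not polynomial; you have conflated ``$d$ is polynomially bounded'' with ``$m^{d}$ is polynomially bounded.'' Finding weight-$n^{\delta}$ kernel vectors of a sparse matrix is, to current knowledge, computationally hard --- the security analysis in~\Cref{sec:cryptanalysis} depends on precisely this hardness --- so your detect-and-resample loop cannot be run in polynomial time at the weight scale the theorem asks for, and some other mechanism must be producing the efficiency. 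The guess: you attribute the construction to a Moser--Tardos / algorithmic Lov\'asz Local Lemma procedure, but nothing in this paper supports that attribution, and the exponent $\Theta(\log\log\log n / \log\log\log\log n)$ does not look like an LLL convergence rate; it is the signature of a recursive bootstrapping/composition with a carefully tuned number of levels. To actually verify or reconstruct the theorem you would need to consult Section~7 of the AK19 paper rather than reverse-engineer it from the statement.
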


Note that the AK19 distribution does not give us all possible parameter range for $k$-sparse matrices. In particular, the number of samples $m$ is limited to be at most $n^{k/4}$. However, the parameter regime that the AK19 distribution supports overlap with the compression regime of \Cref{lem:compress} for any constant compression factor $D > 1$. Therefore, we can indeed use this distribution to instantiate our schemes and assumptions.


\begin{definition}[Dense-Sparse LPN]\label{def:dense-sparse-lpn}
    Let $k \in \N$, $\alpha \in (0,1)$ be constants, and consider parameters $n \in \N$, $m=m(n) < n^{k/2}$, and $\epsilon=\epsilon(n) < 1$. Let $\calM_{\sparse}$ be a \hyperref[def:good-dist]{good} distribution over $\SpMat(n,m,k)$.
    We define the \emph{$(n,m,k,\calM_{\sparse},\epsilon)$-Dense-Sparse LPN ($\DSLPN$)} assumption to be the $(\calM,\epsilon)$-LPN assumption, where $\calM$ is the following distribution:
    \begin{align*}
        \calM = \{\mat T \cdot \mat M \mid \mat T \gets \F_2^{\alpha n \times n}, \mat M \gets \calM_{\sparse}\}.
    \end{align*}
    In other words, we say that Dense-Sparse LPN is $(T(n),\delta(n))$-hard if the following holds for every adversary $\calA$ running in time at most $T$:
    \begin{align*}
        \Adv^{\DSLPN}_{n,m,k,\calM_\sparse,\epsilon}(\calA) := \left| \Pr\left[\calA(\mat A, \mat s \mat A + \mat e) = 1\right] - \Pr\left[\calA(\mat A, \mat u) = 1\right] \right| \le \delta,
    \end{align*}
    where $\mat T \gets \F_2^{\alpha n \times n}$, $\mat M\gets \calM_{\sparse}$, $\mat A = \mat T \cdot \mat M$, $\mat s \gets \F_2^{1 \times \alpha n}$, $\mat e \gets \Ber(\epsilon)^{1 \times m}$, and $\mat u \gets \F_2^m$.
\end{definition}

To simplify notation, we will take $\alpha=1/2$ for all of our constructions, though the assumption plausibly holds for any constant $\alpha \in (0,1)$. We provide detailed cryptanalysis of Dense-Sparse LPN in \Cref{sec:cryptanalysis}. 

\begin{remark}[Dense-Sparse Dual-LPN]\label{rem:dense-sparse-dual-lpn}
    It is plausible that the dual version of Dense-Sparse LPN also holds, namely that \begin{align*}
        \{(\mat T \mat M, \mat T \mat M \mat e)\} \approx_c \{(\mat T \mat M, \mat u)\} \quad \text{ for }\quad\text{$\mat T \gets \F_2^{\alpha n \times n}$, $\mat M \gets \calM_{\sparse}$, $\mat e \gets \Ber(\epsilon)^{m \times 1}$, and $\mat u \gets \F_2^{\alpha n \times 1}$}.
    \end{align*}
    This is because $\mat M \mat e$ is roughly $O(kt)$-sparse, so that plausibly $(\mat T,\mat T \mat M \mat e) \approx_c (\mat T,\mat u)$ by dual-LPN for the random matrix $\mat T \in \F_2^{\alpha n \times n}$.
    
    Note that this is not yet a rigorous argument since the sparse vector $\mat M \mat e$ is very far from being Bernoulli distributed.
    Since we do not use Dense-Sparse dual-LPN in our constructions, we leave a more rigorous cryptanalysis and further applications of this assumption to future work.
\end{remark}

\paragraph{Compression Regime.} We give here the parameter regime that allows for the function \[f_{\mat M}: \calB_{\reg}(m,t) \to \Ball_{\le}(n,kt)\quad \text{ defined by } \quad f_{\mat M}(\mat x) = \mat M \cdot \mat x,\quad\text{ for any }\quad \mat M \in \SpMat(n,m,k),\] to be compressing (by some constant factor $D$ in the exponent).

\begin{lemma}\label{lem:compress}
    Given constant $k \in \N, k \ge 3$ and compression factor $D > 1$, define $\delta_{(\ref{lem:compress})}(k,D) := 1-\frac{k/2-1}{Dk-1}$. For any $\delta \in (\delta_{(\ref{lem:compress})}(k,D),1)$, any $n \in \N$ large enough, and any $m<n^{k/2}$, letting $t=n^\delta$, we have the following:
    \begin{align}\label{eqn:compression-ratio}
        \abs{\Ball_\reg(m,t)} = 2^{t \log(m/t)} > \left(2^{kt\log (en/kt) + 1}\right)^D > \abs{\Ball_{\le}(n,kt)}^D
    \end{align}
    whenever $m > m_{(\ref{lem:compress})}(n,k,D,\delta) := n^{1+(Dk-1)(1-\delta)}$.
\end{lemma}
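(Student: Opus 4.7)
The statement is a binomial-arithmetic check rather than a deep result, so my plan is simply to verify each of the two inequalities in \Cref{eqn:compression-ratio} separately, and to confirm that the range $\delta > \delta_{(\ref{lem:compress})}(k,D)$ is exactly what is needed to reconcile the prescribed threshold on $m$ with the hypothesis $m < n^{k/2}$.

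For the right inequality $(2^{kt\log(en/kt)+1})^D > \abs{\Ball_{\le}(n,kt)}^D$, I would invoke \Cref{lem:hamming-ball-sizes} directly: it gives $\abs{\Ball_{\le}(n,kt)} < 2\binom{n}{kt} \le 2(en/kt)^{kt} = 2^{kt\log(en/kt)+1}$, which is valid since $kt = kn^{\delta} < n/3$ for large $n$ (as $\delta < 1$), and then raising both sides to the $D$-th power yields the claim.

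For the left inequality, taking base-$2$ logarithms reduces it to $t\log(m/t) > D\bigl(kt\log(en/kt) + 1\bigr)$. Since the LHS is strictly increasing in $m$, it suffices to check the inequality at the threshold $m_0 := n^{1+(Dk-1)(1-\delta)}$. Plugging in $t = n^{\delta}$, a short computation gives $t\log(m_0/t) = Dk(1-\delta)\, n^{\delta}\log n$ and $Dkt\log(en/kt) = Dk\,n^{\delta}\bigl[(1-\delta)\log n + \log(e/k)\bigr]$. The leading $Dk(1-\delta)\,n^{\delta}\log n$ terms cancel exactly, leaving on the LHS a surplus of $Dk\,n^{\delta}\log(k/e) - D$. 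Since $k \ge 3 > e$, we have $\log(k/e) > 0$, so this surplus is $\Omega(n^{\delta})$ and dominates the additive $D$ for all sufficiently large $n$; the inequality therefore holds at $m_0$, and \emph{a fortiori} for every $m > m_0$.

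The range $\delta > \delta_{(\ref{lem:compress})}(k,D) = 1 - (k/2-1)/(Dk-1)$ is obtained by solving $1 + (Dk-1)(1-\delta) < k/2$, which ensures $m_0 < n^{k/2}$ so that our threshold is consistent with the standing hypothesis. I do not anticipate any real obstacle: the computation is elementary once the substitutions are made, and the only minor subtlety is the exact cancellation of the $(1-\delta)\log n$ terms that makes the inequality tight up to the $\log(k/e)$ slack coming from the $k \ge 3$ assumption.
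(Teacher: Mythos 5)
Your proof is correct and follows essentially the same route as the paper's: both reduce to the log inequality $t\log(m/t) > D(kt\log(en/kt)+1)$, and both exploit that after substituting $t=n^{\delta}$ and $m=n^{1+(Dk-1)(1-\delta)}$ the leading $\log n$ terms cancel, with the $k\ge 3 > e$ assumption supplying the positive slack $\log(k/e)$ (the paper phrases this as $(e/k)^{Dk}<1$ rather than a surplus of $Dk\,n^{\delta}\log(k/e)-D$, but it is the same observation). The derivation of the $\delta$-range from $m_0 < n^{k/2}$ also matches the paper.
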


\begin{proof}
    Note that the last inequality in \Cref{eqn:compression-ratio} follows from \Cref{lem:hamming-ball-sizes}. Thus it suffices to show the central inequality.
    Taking the logarithms of both sides, we need to show that
    \begin{align*}
        t (\log m - \log t) > D\left(kt \log(en/kt) + 1\right).
    \end{align*}
    Dividing by $t$ and isolating out $\log(m)$, we get
    \begin{align*}
        \log m > D/t + \log(t) + Dk \log(en/kt),
    \end{align*}
    which is equivalent to
    \begin{align*}
        m > 2^{D/t} \cdot t \cdot \left(\frac{en}{kt}\right)^{Dk} = 2^{D/t} \cdot \left(\frac{e}{k}\right)^{Dk} \cdot \frac{n^{Dk}}{t^{Dk-1}}.
    \end{align*}
    Plugging in $t=n^\delta$ gives us
    \begin{align*}
        m > 2^{D/n^\delta} \cdot \left(\frac{e}{k}\right)^{Dk} \cdot n^{1+(Dk-1)(1-\delta)}.
    \end{align*}
    Since $\lim_{n \to \infty}2^{D/n^\delta} = 1$ and $e/k \le e/3 < 1$, for $n$ large enough we have \[2^{D/n^\delta} \cdot \left(\frac{e}{k}\right)^{Dk} < \left(\frac{e}{k}\right)^{-Dk} \cdot \left(\frac{e}{k}\right)^{Dk} = 1.\]
    Therefore, \Cref{eqn:compression-ratio} is satisfied when $m > n^{1+(Dk-1)(1-\delta)}$.
    Finally, since we need to impose the condition that $m < n^{k/2}$, it follows that $\delta$ is at most $1-\frac{k/2-1}{Dk-1} = \delta_{(\ref{lem:compress})}(k,D)$.
\end{proof}
\section{Collision-Resistant Hash Functions}\label{sec:CRHF}

In this section, we give a simple construction of collision-resistant hash functions (CRHFs) from our Dense-Sparse LPN assumption. While lossy trapdoor functions (LTDFs), which we construct in \Cref{sec:ltdf}, are known to imply CRHFs~\cite{STOC:PeiWat08}, the resulting construction is more complex than the one presented here (and require a slightly smaller noise rate $\epsilon = O(1/n^{\delta})$ instead of $\epsilon = O(\log n / n^{\delta})$).

Compared to prior LPN-based hashes~\cite{EC:BLVW19,AC:YZWGL19}, which rely on either inverse quasi-polynomial noise rate $\epsilon=O(\log^2 n / n)$ or sub-exponential hardness, our construction only requires polynomial hardness and sub-exponential noise rate $\epsilon = \widetilde{O}(n^{-\delta})$ for $\DSLPN$, where $\delta$ is a constant that can be arbitrarily close to $3/4$ (as $k \to \infty$).

\begin{definition}\label{def:CRHF}
    A \emph{collision-resistant hash function (CRHF)} family, with input length $m(\lambda)$ and output length $n(\lambda)$, is a tuple of PPT algorithms $\calH = (\Gen,\Hash)$ with the following properties:
 \begin{itemize}
     \item \textbf{Syntax:}
     \begin{itemize}
         \item $\Gen(1^\lambda) \to \sfk$. On input the security parameter $1^\lambda$, output a hash key $\sfk$.
         \item $\Hash(\sfk,\mat x)$. On input the hash key $\sfk$ and an input $\mat x \in \{0,1\}^{m(\lambda)}$, deterministically output $\sfh \in \{0,1\}^{n(\lambda)}$.
     \end{itemize}

     \item \textbf{Compression:} We have $m(\lambda) > n(\lambda) + 2 \lambda$ for all $\lambda \in \N$.

     \item \textbf{Collision-Resistance:} For all polynomial-size adversary $\calA$, the following probability is negligible:
     \begin{align*}
         \Adv^{\CRHF}(\calA) := \Pr\left[\begin{gathered}
            \mat x_1 \ne \mat x_2 \quad\land\\
            \Hash(\sfk,\mat x_1) = \Hash(\sfk,\mat x_2)
        \end{gathered} \; \middle| \; \begin{aligned}
            &\sfk \gets \Gen(1^\lambda) \\
            &(\mat x_1, \mat x_2) \gets \calA(\sfk)
        \end{aligned}\right] \le \negl(\lambda).
     \end{align*}
 \end{itemize}
\end{definition}

\begin{figure}
    \centering
    \begin{construction}[Collision-Resistant Hash Function from Dense-Sparse LPN]
        \textbf{Parameters.}
        \begin{itemize}
            \item Pick any constant $k \in \N, k \ge 3$, any constant $D > 2$, and any $\delta \in (\delta_{(\ref{lem:compress})}(k,D),1)$. Let $\rho = 1/2-1/D > 0$.
            \item Let $m > m_{(\ref{lem:compress})}(n,k,D,\delta)$, $t = n^\delta$, and $\epsilon = O(\log n / t)$. Denote $\tilde{t} = t \log(m/t)$. Choose $n = (2\lambda/\rho)^{1/\delta}$ so that $\rho\tilde{t} > \rho n^{\delta} > 2\lambda$. Let $\calM_{\sparse}$ be a \hyperref[def:good-dist]{good} distribution over $\SpMat(n,m,k)$.
        \end{itemize}

        \textbf{Construction.}
        \begin{itemize}
            \item $\Gen(1^\lambda) \to \sfk$. Sample $\mat H \gets \F_2^{(1-\rho)\tilde{t}\times n}$, and $\mat M \gets \calM_{\sparse}$. Return $\sfk=\mat A' = \mat H \cdot \mat M$.

            \item $\Hash(\sfk,\mat x) \to \sfh$. On input $\mat x \in \F_2^{\tilde{t}}$, return $\sfh = \mat A' \cdot \spfy_{m,t}(\mat x) \in \F_2^{(1-\rho)\tilde{t}}$.
        \end{itemize}
    \end{construction}
    \caption{CRHF construction}
    \label{fig:CRHF}
\end{figure}

\begin{theorem}\label{thm:CRHF-secure}
    Assuming the $(n,m,k,\calM_{\sparse},\epsilon)$-$\DSLPN$ assumption holds, where $n,m,k,\calM_{\sparse},\epsilon$ are defined as in \Cref{fig:CRHF}. Then \Cref{fig:CRHF} gives a construction of a CRHF family.
\end{theorem}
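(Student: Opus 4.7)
The plan is to verify compression directly from the parameter choices, then reduce collision-resistance to Dense-Sparse LPN via a standard bias-detection argument on a low-weight kernel vector extracted from any collision.

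\textbf{Compression.} The input length of $\Hash$ is $\tilde t$ and the output length is $(1-\rho)\tilde t$, so the compression requirement $\tilde t > (1-\rho)\tilde t + 2\lambda$ reduces to $\rho \tilde t > 2\lambda$. This follows from the parameter choice $n \ge (2\lambda/\rho)^{1/\delta}$, since $\rho \tilde t = \rho \, n^{\delta} \log(m/t) > 2\lambda$ whenever $\log(m/t) \ge 1$.

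\textbf{Reduction.} Suppose towards contradiction that a PPT adversary $\calA$ achieves non-negligible CRHF advantage $\mu(\lambda)$. I construct a distinguisher $\calB$ against the $(n,m,k,\calM_{\sparse},\epsilon)$-$\DSLPN$ assumption as follows. On input $(\mat A', \mat b) \in \F_2^{(1-\rho)\tilde t \times m} \times \F_2^m$, $\calB$ runs $\calA(\mat A')$ to obtain a candidate pair $(\mat x_1, \mat x_2)$. The matrix $\mat A' = \mat H \cdot \mat M$ is distributed exactly as the output of $\Gen(1^\lambda)$, so with probability $\mu(\lambda)$ this is a valid collision. In that case, $\calB$ forms
\[\mat v = \spfy_{m,t}(\mat x_1) + \spfy_{m,t}(\mat x_2) \in \F_2^m,\]
which is nonzero (because $\spfy_{m,t}$ is a bijection between $\F_2^{\tilde t}$ and regular $t$-sparse vectors), of Hamming weight at most $2t$ (each summand is $t$-sparse), and satisfies $\mat A' \mat v = 0$ (by the collision). $\calB$ then outputs $1$ iff $\langle \mat b, \mat v \rangle = 0$, and a uniformly random bit if no valid collision is produced.

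\textbf{Analysis.} If $\mat b = \mat s\,\mat A' + \mat e$, then $\langle \mat b, \mat v\rangle = \mat s\,\mat A'\,\mat v + \langle \mat e, \mat v\rangle = \langle \mat e, \mat v\rangle$, which by the Piling-Up Lemma (\Cref{lem:piling-up}) is Bernoulli with bias at least $(1 - 2\epsilon)^{2t}$. With the setting $\epsilon = O(\log n / t)$, this bias is $2^{-O(\log n)} = n^{-\Theta(1)}$, i.e.\ inverse polynomial. If instead $\mat b$ is uniform in $\F_2^m$, then $\langle \mat b, \mat v\rangle$ is uniform in $\F_2$ because $\mat v \ne 0$. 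Hence $\calB$ has distinguishing advantage at least $\mu(\lambda)\cdot n^{-\Theta(1)}/2$, which is non-negligible, contradicting polynomial hardness of $\DSLPN$.

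The main delicate step is the noise-rate calibration. Because a single $\DSLPN$ sample provides only one Bernoulli evaluation of $\langle \mat e, \mat v\rangle$, we need the product $\epsilon\cdot 2t$ to be only $O(\log n)$ so that the bias $(1-2\epsilon)^{2t}$ is already inverse polynomial. This is exactly why the construction uses $\epsilon = \widetilde O(n^{-\delta})$, a bit smaller than the $O(n^{-\delta})$ rate tolerated by the LTDF construction, yet still well inside the inverse-polynomial regime where $\DSLPN$ is plausibly subexponentially secure as analyzed in \Cref{sec:cryptanalysis}.
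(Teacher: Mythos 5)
Your compression argument and the final bias-detection sketch are on the right track, but your reduction has a genuine dimension mismatch that hides the central technical step of the proof.

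In the construction, the hash key is $\mat A' = \mat H \cdot \mat M \in \F_2^{(1-\rho)\tilde t \times m}$ with $\mat H \gets \F_2^{(1-\rho)\tilde t \times n}$, where $(1-\rho)\tilde t = O(n^\delta \log n) \ll n/2$. The $\DSLPN$ challenger, however, hands you $(\mat A, \mat b)$ with $\mat A = \mat T \mat M \in \F_2^{n/2 \times m}$ and $\mat b = \mat s\,\mat A + \mat e$. These have different row dimensions, so the reduction cannot simply feed the challenge matrix to $\calA$ as the hash key: it must \emph{simulate} the hash key, e.g.\ by sampling $\mat H' \gets \F_2^{(1-\rho)\tilde t \times n/2}$ and setting $\mat A' = \mat H' \mat A = (\mat H' \mat T)\mat M$, which is distributed correctly because $\mat H' \mat T$ is (essentially) a uniform $(1-\rho)\tilde t \times n$ matrix. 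Once this simulation is in place, a collision yields only $\mat A'\mat v = \mat H' \mat A\, \mat v = \mat 0$, \emph{not} $\mat A\mat v = \mat 0$. Your bias computation $\langle \mat b,\mat v\rangle = \mat s\,\mat A\,\mat v + \langle \mat e,\mat v\rangle$ therefore has a live $\mat s\,\mat A\,\mat v$ term, and if $\mat A\mat v \ne \mat 0$ this term is uniform over the choice of $\mat s$, wiping out the bias entirely. So the reduction as written gives no distinguishing advantage.

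Closing this gap requires the argument the paper runs as its hybrid $\Hyb_0 \to \Hyb_1$: show that with overwhelming probability over the random $\mat H$, the only at-most-$2kt$-sparse vector $\mat y \in \F_2^n$ with $\mat H\mat y = \mat 0$ is $\mat y = \mat 0$. Since $\mat v = \spfy_{m,t}(\mat x_1) + \spfy_{m,t}(\mat x_2)$ is at most $2t$-sparse and $\mat M$ is $k$-column-sparse, $\mat y = \mat M\mat v$ is at most $2kt$-sparse, so this forces $\mat M\mat v = \mat 0$ and hence $\mat A\mat v = \mat T\mat M\mat v = \mat 0$. This is exactly where the parameter choice $D > 2$ and $\rho = 1/2 - 1/D$ earns its keep: via the entropy bound $H(2kt/n)\cdot n < 2kt\log(2n/kt)$ and \Cref{lem:compress} one checks that $(1-\rho)\tilde t > H(2kt/n)n + 2\lambda$, so Gilbert--Varshamov (\Cref{lem:G-V}) gives a random $\mat H$ distance $> 2kt$ except with probability $2^{-2\lambda}$. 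You never invoke this, and nothing in your reduction substitutes for it, so the proof does not go through as stated.
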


\begin{proof}
    It is clear that we have compression, as the gap between input and output size is $\tilde{t} - (1-\rho)\tilde{t} = \rho\tilde{t} > 2\lambda$ by our parameter choice. It remains to show collision-resistance, which we show by going through the following hybrids.
    
    \begin{itemize}
        \item $\Hyb_0$: this is the CHRF game, where an adversary $\calA$ receives $\mat A' = \mat H \cdot \mat M$ and outputs $\mat x_1, \mat x_2 \in \F_2^{t\log(m/t)}$. $\calA$ wins, or equivalently $\Hyb_0$ returns $1$, if $\mat x_1 \ne \mat x_2$ and $\mat A' \cdot \spfy_{m,t}(\mat x_1) = \mat A' \cdot \spfy_{m,t}(\mat x_2)$; equivalently, when $\mat A' \cdot \mat x' = \mat 0$ where $\mat x' = \spfy_{m,t}(\mat x_1) - \spfy_{m,t}(\mat x_2)$.
        \item $\Hyb_1$: this is identical to $\Hyb_0$, except the adversary $\calA$ only wins if $\mat M \cdot \mat x' = \mat 0$ (and also $\mat x_1 \ne \mat x_2$). Thus, $\Hyb_2$ differs from $\Hyb_1$ only when $\mat M \cdot \mat x' \ne \mat 0$ but $\mat H \cdot (\mat M \cdot \mat x') = \mat 0$. Since $\mat y = \mat M \cdot \mat x' \in \F_2^n$ is at most $2kt$-sparse, this amounts to bounding the probability that a random parity-check matrix $\mat H \in \F_2^{(1-\rho)\tilde{t} \times n}$ has distance less than $2kt$.

        We will show that this probability is negligible by examining our parameter choices. By \Cref{lem:entropy-approx}, we have that
        \begin{align*}
            H\left(\frac{2kt}{n}\right)\cdot n &< 2kt \log\left(\frac{2n}{kt}\right).
        \end{align*}
        Since we choose $t=n^\delta$ and $m > m_{(\ref{lem:compress})}(n,k,D,\delta)$, we know from \Cref{lem:compress} that the following holds:
        \begin{align*}
            &\qquad\quad \tilde{t} = t \log\left(\frac{m}{t}\right) > D \left(kt \log\left(\frac{en}{kt}\right) + 1\right) > D kt \log\left(\frac{2n}{kt}\right) \\
            &\implies \frac{2}{D}\,\tilde{t} > 2 kt \log\left(\frac{2n}{kt}\right) > H\left(\frac{2kt}{n}\right)\, n\\
            &\implies (1-\rho)\tilde{t} = \left(\frac{2}{D} + \rho\right) \tilde{t} > H\left(\frac{2kt}{n}\right) n + 2\lambda.
        \end{align*}
        Therefore, by Gilbert-Varshamov (\Cref{lem:G-V}), $\mat H \in \F_2^{(1-\rho)\tilde{t} \times n}$ has distance less than $2kt$ with probability at most $2^{-2\lambda} = \negl(\lambda)$.
        
        \item $\Hyb_2$: this is identical to $\Hyb_1$, but instead of sampling $\mat H \gets \F_2^{2t\log(m/t) \times n}$, we sample $\mat H = \mat H' \cdot \mat T$ where $\mat H' \gets \F_2^{2t\log(m/t) \times n/2}$ and $\mat T \gets \F_2^{n/2 \times n}$. Since $\widetilde{t} < n/2 < n$, by basic linear algebra, it follows that $\mat H$ is identically distributed as in $\Hyb_1$.
        
        We now show that $\Pr[\Hyb_2 \text{ returns } 1] \le \negl(\lambda)$ assuming $\DSLPN$ holds. We do this by constructing an adversary $\calB$ against $\DSLPN$ from an adversary $\calA$ against $\Hyb_2$. $\calB$ receives $(\mat A, \mat b)$ where $\mat A = \mat T \cdot \mat M$ for $\mat T \gets \F_2^{n/2 \times n}$, $\mat M \gets \calM_{\sparse}$, and $\mat b \in \F_2^{1 \times m}$ is either uniformly random or is equal to $\mat s \mat A + \mat e$ for a random $\mat s \gets \F_2^{1 \times n/2}$ and $\mat e \gets \Ber(\epsilon)^{1 \times m}$. $\calB$ now samples $\mat H' \gets \F_2^{(1-\rho)\tilde{t} \times n/2}$, computes $\mat A' = \mat H' \cdot \mat A$, then runs $\calA$ on $\mat A'$ to get $(\mat x_1,\mat x_2)$ from $\calA$ and return $\ans:=\mat b \cdot (\spfy_{m,t}(\mat x_1) - \spfy_{m,t}(\mat x_2))$. Here $\ans=1$ indicates that $\calB$ received random $\mat b \gets \F_2^{1 \times m}$.

        We will analyze $\calB$'s success probability. If $\mat b$ is uniformly random, then $\ans$ is also uniformly random whenever $\calA$ wins (so that $\mat x_1 \ne \mat x_2$); thus, in this case we have \[\Pr[\ans = 1 \mid \mat b \gets \F_2^{m}] = \frac{1}{2}\cdot \Pr[\Hyb_2 \text{ returns } 1].\]
        If $\mat b = \mat s \mat A + \mat e$, then $\ans = (\mat s \mat A + \mat e) \cdot \mat x' = \mat e \cdot \mat x'$ whenever $\calA$ wins in $\Hyb_2$. Since $\mat x'$ is at most $2t$-sparse, by \Cref{lem:piling-up} we have 
        \begin{align*}
             \Pr[\ans = 1 \mid \mat b = \mat s \mat A + \mat e] &\le \Pr[\mat e \cdot \mat x' = 1] \cdot \Pr[\Hyb_2 \text{ returns } 1]\\
             &\le \left(\frac{1 - (1-2\epsilon)^{2t}}{2}\right) \cdot \Pr[\Hyb_2 \text{ returns } 1]\\
             &\le \left(\frac{1}{2} - 2^{-8\epsilon t - 1}\right) \cdot \Pr[\Hyb_2 \text{ returns } 1]\\
             &= \left(\frac{1}{2} - \frac{1}{\poly(\lambda)}\right) \cdot \Pr[\Hyb_2 \text{ returns } 1].
        \end{align*}
        The last inequality is due to our choice of parameter $\epsilon = O(\log n/t)$, which implies $2^{-8\epsilon t - 1} = 1/\poly(n) = 1 / \poly(\lambda)$. Putting everything together, we have
        \begin{align*}
            \Adv^{\DSLPN}(\calB) &= \abs{\Pr[\ans = 1 \mid \mat b = \mat s \mat A + \mat e] - \Pr[\ans = 1 \mid \mat b \gets \F_2^m]} \\
            &\ge \frac{1}{\poly(\lambda)} \cdot \Pr[\Hyb_2 \text{ returns } 1].
        \end{align*}
        Therefore, if $\DSLPN$ holds (for our parameters), then \[\Pr[\Hyb_2 \text{ returns } 1] \le \poly(\lambda) \cdot \Adv^{\DSLPN}(\calB) \le \negl(\lambda).\]
    \end{itemize}
\end{proof}
\section{Lossy Trapdoor Functions}\label{sec:ltdf}

\subsection{Definition}\label{sec:ltdf-defn}

We define all-injective-but-one lossy trapdoor functions, which include lossy trapdoor functions as a special case of having two branches.

\begin{definition}\label{def:ltdf}
    An \emph{all-(injective)-but-one lossy trapdoor function} $\ABOLTDF$ with input size $n=n(\lambda)$, output size $m=m(\lambda)$, residual leakage $r=r(\lambda)$, and number of branches $B(\lambda)$, consists of a PPT algorithm $\Gen$ and a tuple of deterministic functions $(F,F^{-1})$ with the following syntax:
    \begin{itemize}
        \item $\Gen(1^\lambda,b^*) \to (\fk,\td)$. Given the security parameter $1^\lambda$ and a distinguished lossy branch $b^* \in [B(\lambda)]$, return a function key $\fk$ and a trapdoor $\td$.
        \item $F(\fk,b,x) \to y$. Given the function key $\fk$, a branch $b \in [B]$, and an input $x \in \F_2^{n}$, return an output $y \in \F_2^{m}$.
        \item $F^{-1}(\td,b,y) \to x$. Given the trapdoor $\td$, a branch $b \in [B]$, and an output $y \in \F_2^{m}$, return a preimage $x \in \F_2^{n}$.
    \end{itemize}
    and the following requirements:
    \begin{itemize}
        \item \textbf{Branch Indistinguishability.} For any two different branches $b^*_0 \ne b^*_1 \in [B]$, the following two distributions are indistinguishable
        \begin{align*}
            \left\{\fk \;\middle|\; (\fk,\td) \gets \Gen(1^\lambda,b^*_0)\right\}_{\lambda \in \N} \approx_c 
            \left\{\fk \;\middle|\; (\fk,\td) \gets \Gen(1^\lambda,b^*_1)\right\}_{\lambda \in \N}.
        \end{align*}
        \item \textbf{Correct Inversion for Injective Branch.} For any $\lambda \in \N$ and $b \ne b^* \in [B]$, we have
        \begin{align*}
            \Pr\left[ F^{-1}(\td,b,F(\fk,b,x)) = x \;\forall x \in \F_2^n \; \middle| \; (\fk,\td) \gets \Gen(1^\lambda,b^*) \right] \ge 1 - \negl(\lambda).
        \end{align*}
        \item \textbf{Lossyness for Lossy Branch.} For any $\lambda \in \N$, we have
        \begin{align*}
            \Pr\left[ \left\lvert \{F(\fk,b^*,x) \mid x \in \F_2^{n}\} \right\rvert \le 2^{r} \; \middle| \; (\fk,\td) \gets \Gen(1^\lambda,b^*) \right] \ge 1 - \negl(\lambda).
        \end{align*}
        We define the \emph{lossiness factor} to be $\Gamma = n/r$, meaning that in lossy mode, the output size is reduced by a factor of $\Gamma$ compared to the input size.
    \end{itemize}    
    A \emph{lossy trapdoor function} $\LTDF$ is an $\ABOLTDF$ with only two branches, i.e. $B(\lambda)=2$ for all $\lambda \in \N$.
\end{definition}

\paragraph{Other extension of LTDFs.} Over the years, several variants of LTDFs have been proposed with a goal toward increased functionality and more diverse applications. These include all-but-$N$ LTDFs~\cite{AC:HLOV11} (which can be built in a black-box way from LTDFs), all-but-many~\cite{EC:Hofheinz12}, and cumulative-all-lossy-but-one~\cite{PKC:ChaPraWic20} lossy trapdoor functions.
However, post-quantum constructions of the latter two variants~\cite{C:BoyLi17,C:LSSS17,EPRINT:LibNguPas22} rely on lattice techniques such as preimage sampling~\cite{STOC:GenPeiVai08,EC:MicPei12} and GSW-style homomorphic evaluation~\cite{C:GenSahWat13}. Since we do not know analogues of these techniques for code-based cryptography, we leave as future work the task of constructing these more advanced variants of LTDFs from code-based assumptions.

\subsection{Construction}\label{sec:ltdf-construction}

\begin{figure}[!ht]
    \centering
    \begin{construction}[ABO-LTDFs from Dense-Sparse LPN]

    \textbf{Parameters.} Let $k \in \N, k \ge 3$ be a constant, $\Gamma > 1$ be any desired lossiness factor.
    \begin{itemize}
        \item Let $D > \Gamma$ be the compression factor, and consider any $\delta \in (\delta_{(\ref{lem:compress})}(k,D),1)$.
        
        \item Let $n=\poly(\lambda)$, $m = m_{(\ref{lem:compress})}(n,k,D,\delta)$, $t=n^{\delta}$, and consider a \hyperref[def:good-dist]{good} distribution $\calM_{\sparse}$ over $\SpMat(n,m,k)$.

        \item Let $\calC=\{\mat C_{\kappa}\}_{\kappa \in \N}$ be an explicit family of asymptotically-good linear codes, where each $\mat C_{\kappa}$ has block length $L(\kappa)$, constant rate $\rho_{\calC}$, and admits an efficient algorithm $\Decode$ that can correct up to $\delta_{\calC} L$ errors (which exists by \Cref{lem:asymp-good-codes}).
        
        \item Let $D' = \frac{\Gamma D}{\Gamma - D}$ so that $\frac{1}{D} + \frac{1}{D'} = \frac{1}{\Gamma}$, and $\gamma = \min\left(\delta_{\calC},H^{-1}\left(\frac{\rho}{D'}\right)\right)$. Let $\alpha > 0$ be a constant such that $\frac{\alpha^2}{\alpha + 1} > \frac{\rho_{\calC}}{\gamma}$.
        
        \item Let $\ell = \frac{1}{\rho_{\calC}}\cdot t\log(\frac{m}{t})$ be the block length,\footnote{Without loss of generality, we may assume $\ell = L(\kappa)$ for some $\kappa \in \N$. Otherwise, letting $L(\kappa)$ be the nearest block size larger than $\ell$, we may scale $n$ (and thus $\ell$) by an appropriate polynomial amount so that $\ell = L(\kappa)$.} and $\epsilon = \frac{\gamma}{(\alpha+1)t}$ be the noise rate. We will abbreviate $\mat C = \mat C_{\ell} \in \F_2^{t\log(m/t) \times \ell}$.
    \end{itemize}

    \textbf{Construction.}
    \begin{itemize}
        \item $\Gen(1^\lambda,\tau^*) \to (\fk,\td)$. Given lossy branch $\tau^* \in \F_{2^{t\log(m/t)}}$, sample $\mat M \gets \calM_{\sparse}$, $\mat T \gets \F_2^{n/2 \times n}$, $\mat S \gets \F_2^{\ell \times n/2}$, and $\mat E \gets \Ber(\F_2,\epsilon)^{\ell \times m}$.
        
        Let $\mat A = \mat T \cdot \mat M \in \F_2^{n/2 \times m}$, and $\mat B = \mat S \cdot \mat A + \mat E + \mat C^\top \cdot \mat H_{\tau^*} \cdot \mat G_{m,t}^\top \in \F_2^{\ell \times m}$.

        Return $\fk = (\mat A, \mat B)$ and $\td = (\mat S,\tau^*)$.

        \item $F(\fk, \tau, \mat x) \to \mat y$. Given $\tau \in \F_{2^{t\log(m/t)}}$ and $\mat x \in \F_2^{t \log(m/t)}$, compute $\widetilde{\mat x} = \spfy_{m,t}(\mat x)$ and $\mat B_\tau = \mat B - \mat C^\top \cdot \mat H_{\tau} \cdot \mat G_{m,t}^\top$. 
        
        Return $\mat y = (\mat A \cdot \widetilde{\mat x}, \mat B_\tau \cdot \widetilde{\mat x}) \in \F_2^{n/2+\ell}$.

        \item $F^{-1}(\td,\tau,\mat y) \to \mat x$. Parse $\mat y = (\mat y_1 \in \F_2^{n/2}, \mat y_2 \in \F_2^\ell)$ and compute $\mat y' = \mat y_2 - \mat S \cdot \mat y_1$.
        
        Return $\mat x \gets \left(\mat H_{\tau^* - \tau}\right)^{-1} \cdot \Decode(\mat y')$.
    \end{itemize}
\end{construction}
    \caption{ABO-LTDF construction}
    \label{fig:LTDF}
\end{figure}

We give our construction of $\ABOLTDF$ in \Cref{fig:LTDF}. In the construction, we use the following family of matrices $\{\mat H_\tau\}_{\tau \in \F_{2^{n}}}$ for any $n \in \N$, where $\mat H_\tau \in \F_2^{n \times n}$ is the matrix corresponding to multiplication by $\tau$ in the field $\F_{2^n}$ (which is isomorphic as vector spaces to $\F_2^n$). It follows that $\mat H_{\tau} - \mat H_{\tau'} = \mat H_{\tau-\tau'}$ is invertible for all $\tau \ne \tau'$. Such a family has been used in previous works~\cite{EC:AgrBonBoy10,PKC:KilMasPie14}.

To analyze the security of our construction, we begin with the following result which bounds the noise growth.

\begin{lemma}\label{lem:ltdf-noise-bound}
    Consider the parameters $n,m,t,B,\epsilon,\ell$ as in \Cref{fig:LTDF}. Let $\mat E \gets \Ber(\F_2,\epsilon)^{\ell \times m}$. Then except with $\negl(\lambda)$ probability over the choice of $\mat E$, the vector $\mat E \cdot \spfy_{m,t}(\mat x) \in \F_2^{\ell}$ is at most $\gamma \ell$-sparse for all $\mat x \in \F_2^{t\log(m/t)}$.
\end{lemma}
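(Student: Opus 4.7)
The approach is a ``fix $\mat x$ and union bound'' argument that exploits the regularity of $\spfy_{m,t}(\mat x)$. I would first fix any $\mat x \in \F_2^{t\log(m/t)}$ and observe that $\spfy_{m,t}(\mat x)$ is a regular $t$-sparse vector by \Cref{def:gadget-matrix}, so each coordinate of $\mat E \cdot \spfy_{m,t}(\mat x)$ is the inner product of a distinct row of $\mat E$ with a fixed $t$-sparse vector, i.e., a sum of exactly $t$ independent $\Ber(\epsilon)$ bits. Because different rows of $\mat E$ are sampled independently, the $\ell$ coordinates of $\mat E \cdot \spfy_{m,t}(\mat x)$ are mutually independent. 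The Piling-Up Lemma (\Cref{lem:piling-up}) then bounds the probability that any single coordinate equals $1$ by $\epsilon t = \gamma/(\alpha+1)$.

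Next, I would apply the Chernoff--Hoeffding bound (\Cref{lem:chernoff-bound}) with mean bound $\gamma/(\alpha+1)$ and slack $\kappa = \alpha$, so that the Chernoff threshold $(1+\alpha)\cdot \gamma/(\alpha+1)$ is exactly $\gamma$. This yields, for each fixed $\mat x$, a tail bound of the form $\Pr[\wt(\mat E \cdot \spfy_{m,t}(\mat x)) > \gamma \ell] \le \exp(-2\alpha^2 \gamma \ell/(\alpha+1))$, which in base $2$ is $2^{-\Omega(\ell)}$.

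Finally, I would take a union bound over the $2^{t\log(m/t)} = 2^{\rho_{\calC} \ell}$ possible values of $\mat x$, using the identity $\ell = t\log(m/t)/\rho_{\calC}$ from the parameter setup. The union bound succeeds precisely when the Chernoff exponent exceeds $\rho_{\calC}$, i.e.\ when $2\alpha^2 \gamma / ((\alpha+1)\ln 2) > \rho_{\calC}$; this is guaranteed by the hypothesis $\alpha^2/(\alpha+1) > \rho_{\calC}/\gamma$ in \Cref{fig:LTDF} together with the constant $2/\ln 2 > 1$. The resulting failure probability is $2^{-\Omega(\ell)} = 2^{-\Omega(t\log(m/t))}$, which is $\negl(\lambda)$ since $n = \poly(\lambda)$ and $t = n^\delta$. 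The main obstacle here is purely bookkeeping: ensuring the Chernoff decay strictly outpaces the number of bad events in the union bound, which is precisely what the somewhat opaque parameter condition $\alpha^2/(\alpha+1) > \rho_{\calC}/\gamma$ in the construction is designed to encode.
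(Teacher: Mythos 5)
Your proof is correct and follows essentially the same route as the paper: fix $\mat x$, bound each coordinate of $\mat E\cdot\spfy_{m,t}(\mat x)$ by $\epsilon' < \epsilon t = \gamma/(\alpha+1)$ via Piling-Up, apply Chernoff with $\kappa=\alpha$ to get a $2^{-\Omega(\ell)}$ tail, and union-bound over the $2^{t\log(m/t)} = 2^{\rho_{\calC}\ell}$ inputs, with the parameter condition $\alpha^2/(\alpha+1) > \rho_{\calC}/\gamma$ ensuring the exponent wins. Your write-up is, if anything, slightly more explicit than the paper's about the independence of the $\ell$ coordinates and about where the constant $2/\ln 2 > 1$ absorbs the base change, but the argument is the same.
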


\begin{proof}
    By union bound, it suffices to show that for any fixed $\mat x \in \F_2^{t\log(m/t)}$, letting $\widetilde{\mat x} = \spfy_{m,t}(\mat x)$, we have:
    \[\Pr\left[ \wt(\mat E \cdot \widetilde{\mat x}) > \gamma\ell \;\middle|\; \mat E \gets \Ber(\F_2,\epsilon)^{\ell \times m}\right] \le 2^{-t\log(m/t)} \cdot \negl(\lambda).\]
    Let $\mat e = \mat E \cdot \widetilde{\mat x}$. We can see that for each $j \in [\ell]$, the entry $(\mat E \cdot \widetilde{\mat x})_j = \langle \mat E_j,\widetilde{\mat x} \rangle$, where $\mat E_j$ is the $j^{th}$ row of $\mat E$, is drawn from the Bernoulli distribution with noise 
    \begin{align*}
        \epsilon' = \frac{1 - (1-2\epsilon)^t}{2} < \epsilon t = \frac{\gamma}{\alpha+1},\qquad\text{by \Cref{lem:bias-bernoulli} and the choice of $\epsilon$}.
    \end{align*}
    We now apply Chernoff bound (\Cref{lem:chernoff-bound}) to get
    \begin{align*}
        \Pr\left[\wt(\mat E \cdot \widetilde{\mat x}) > \gamma \ell \;\middle|\; \mat E \gets \Ber(\F_2,\epsilon)^{\ell \times m}\right] 
        &\le e^{-2 \alpha^2\frac{\gamma}{\alpha+1} \ell} = e^{-2 \frac{\alpha^2}{\alpha+1}\frac{\gamma}{\rho_{\calC}} t\log(m/t)} < 2^{-2t\log(m/t)}.
    \end{align*}
    In the above, the last inequality follows from our choice of $\alpha$.
    Therefore, $\mat E \cdot \widetilde{\mat x}$ has Hamming weight at most $\gamma \ell$ for all $\mat x \in \F_2^{t\log(m/t)}$ with probability at least \[1 - 2^{t\log(m/t)} \cdot 2^{-2t\log(m/t)} = 1 - 2^{-t\log(m/t)} = 1 - \negl(\lambda).\]
\end{proof}

\begin{theorem}\label{thm:ltdf}
    Assuming the $(n,m,k,\calM_{\sparse},\epsilon)$-Dense-Sparse LPN assumption, where $n,m,k,\calM_{\sparse},\epsilon$ are chosen as in \Cref{fig:LTDF}, the construction in \Cref{fig:LTDF} is an $\ABOLTDF$ with input length $t\log(m/t)$, branches indexed by $\F_2^{t\log(m/t)}$, and lossiness factor $\Gamma > 1$.
\end{theorem}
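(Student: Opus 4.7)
The plan is to verify the three properties required by \Cref{def:ltdf}: branch indistinguishability, correct inversion for injective branches, and lossiness for the distinguished branch.

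For \emph{branch indistinguishability}, the observation is that $\mat B = \mat S \mat A + \mat E + \mat C^\top \mat H_{\tau^*} \mat G_{m,t}^\top$ is computationally indistinguishable from a uniform matrix given $\mat A$, regardless of the choice of $\tau^*$. I would apply Dense-Sparse LPN to each of the $\ell$ rows of $\mat B - \mat C^\top \mat H_{\tau^*} \mat G_{m,t}^\top = \mat S \mat A + \mat E$ via a standard hybrid, incurring a multiplicative loss of $\ell$ in the distinguishing advantage. Since the resulting distribution of $\fk = (\mat A, \mat B)$ is indistinguishable from $(\mat A, \mat U)$ for uniform $\mat U$, and the latter is independent of $\tau^*$, the distributions under any two branches $\tau^*_0 \ne \tau^*_1$ are computationally indistinguishable.

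For \emph{correct inversion} on a branch $\tau \ne \tau^*$, I would compute
\begin{align*}
\mat y' = \mat y_2 - \mat S\,\mat y_1 = (\mat B_\tau - \mat S\,\mat A)\,\widetilde{\mat x} = \mat E\,\widetilde{\mat x} + \mat C^\top\, \mat H_{\tau^* - \tau}\, \mat x,
\end{align*}
using $\mat G_{m,t}\,\widetilde{\mat x} = \mat x$ from \Cref{eqn:gadget-inversion} and the identity $\mat H_{\tau^*} - \mat H_\tau = \mat H_{\tau^* - \tau}$. Setting $\mat z = \mat H_{\tau^* - \tau}\, \mat x$, the vector $\mat y'$ is the codeword $\mat C^\top\, \mat z$ perturbed by noise $\mat E\,\widetilde{\mat x}$. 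By \Cref{lem:ltdf-noise-bound}, with overwhelming probability over $\mat E$, this noise has Hamming weight at most $\gamma \ell \le \delta_{\calC}\ell$ \emph{simultaneously} for every $\mat x$, so $\Decode$ recovers $\mat z$; invertibility of $\mat H_{\tau^* - \tau}$, which holds because $\tau \ne \tau^*$ by the full-rank-difference property of the matrix family, then yields $\mat x$.

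For \emph{lossiness} on the branch $\tau = \tau^*$, the function collapses to $F(\fk, \tau^*, \mat x) = (\mat T\, \mat M\, \widetilde{\mat x},\; \mat S\,(\mat T\, \mat M\, \widetilde{\mat x}) + \mat E\, \widetilde{\mat x})$. I will bound the image size by the product of the number of distinct values of $\mat y_1 = \mat T\, \mat M\, \widetilde{\mat x}$ and the number of distinct values of $\mat E\,\widetilde{\mat x}$ given $\mat y_1$. Since $\mat M\,\widetilde{\mat x} \in \F_2^n$ has weight at most $kt$, the first factor is bounded by $\abs{\Ball_{\le}(n, kt)} < \abs{\Ball_\reg(m,t)}^{1/D} = 2^{t\log(m/t)/D}$ via \Cref{lem:compress}. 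By \Cref{lem:ltdf-noise-bound}, $\mat E\,\widetilde{\mat x}$ has weight at most $\gamma \ell$ with high probability, giving at most $\abs{\Ball_{\le}(\ell, \gamma \ell)} \le 2^{H(\gamma)\ell + 1}$ possibilities. Using $\ell = t\log(m/t)/\rho_{\calC}$, $H(\gamma) \le \rho_{\calC}/D'$, and $1/D + 1/D' = 1/\Gamma$, the total image size is at most $2^{t\log(m/t)/\Gamma + O(1)}$, giving the desired lossiness factor $\Gamma$.

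The main technical point is reconciling correctness and lossiness through the noise rate $\epsilon$. The perturbation $\mat E\,\widetilde{\mat x}$ must stay below the decoding radius $\delta_{\calC}\ell$ of $\calC$ for every input $\mat x$ (not just one), yet also be large enough that its contribution $2^{H(\gamma)\ell}$ to the image count is dominated by $\abs{\Ball_\reg(m,t)}^{1/D'}$. The construction threads this via $\gamma = \min(\delta_{\calC},\, H^{-1}(\rho_{\calC}/D'))$ and $\epsilon = \gamma/((\alpha+1)t)$; the Chernoff bound inside \Cref{lem:ltdf-noise-bound} exploits the gap $\alpha^2/(\alpha+1) > \rho_{\calC}/\gamma$ to absorb the union bound over all $2^{t\log(m/t)}$ possible inputs and thus achieve the simultaneous noise bound needed for both correctness and lossiness.
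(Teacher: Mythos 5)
Your proposal matches the paper's proof in structure and substance: branch indistinguishability from Dense-Sparse LPN (the paper states it directly; your hybrid over rows makes the loss explicit but is the same argument), inversion via the FRD property plus the uniform noise bound of \Cref{lem:ltdf-noise-bound}, and lossiness by bounding the image as $\abs{\calY_1} \cdot \abs{\calB_{\le}(\ell,\gamma\ell)}$ using \Cref{lem:compress} and $H(\gamma) \le \rho_{\calC}/D'$. Your closing remark on how $\gamma$ and $\alpha$ are tuned so the Chernoff bound absorbs the union over all $2^{t\log(m/t)}$ inputs is exactly the content of \Cref{lem:ltdf-noise-bound} and the parameter choices in \Cref{fig:LTDF}.
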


\begin{proof}
    We argue each property separately as follows.
    
    \vspace{.5em}

    \emph{Mode indistinguishability.} This is clear from the Dense-Sparse LPN assumption, since in both cases, $\fk=(\mat A,\mat B)$ is indistinguishable from $(\mat A,\mat U)$ for a random $\mat U \gets \F_2^{\ell \times m}$.

    \vspace{.5em}
    
    \emph{Trapdoor inversion.} If $\Gen(1^\lambda,\inj) \to (\fk=(\mat A,\mat B),\td=(\mat S, \tau^*))$ and $F(\fk,\tau,\mat x) = (\mat y_1, \mat y_2)$ for some $\mat x \in \F_2^{t\log(m/t)}$, letting $\widetilde{\mat x} = \spfy_{m,t}(\mat x)$, we have that 
    \begin{align*}
        \mat y_1 = \mat A \cdot \widetilde{\mat x}, \quad
        \mat y_2 &= (\mat B - \mat C^\top \cdot \mat H_{\tau}^\top \cdot \mat G_{m,t}^\top) \cdot \widetilde{\mat x} \\
        &= \mat S \cdot \mat y_1 + \mat E \cdot \widetilde{\mat x} + \mat C^\top \cdot \mat H_{\tau^* - \tau} \cdot \mat x.
    \end{align*}
    The last equality follows from \Cref{eqn:gadget-inversion}, namely that $\mat G_{m,t}^\top \cdot \spfy_{m,t}(\mat x) = \mat x$. Hence for all $\tau \ne \tau^*$, with all but $\negl(\lambda)$ probability, we have:
    \begin{align*} 
        F^{-1}(\td,(\mat y_1,\mat y_2)) &= \left(\mat H_{\tau^* - \tau}\right)^{-1} \cdot\Decode\left(\mat y_2 - \mat S \cdot \mat y_1\right) \\
        &= \left(\mat H_{\tau^* - \tau}\right)^{-1} \cdot \Decode\left(\mat C^\top \cdot \mat H_{\tau^* - \tau} \cdot \mat x + \mat E \cdot \spfy_{m,t}(\mat x)\right)\\
        &= \left(\mat H_{\tau^* - \tau}\right)^{-1} \cdot \left(\mat H_{\tau^* - \tau} \cdot \mat x\right) \qquad\text{(holds with $1-\negl(\lambda)$ probability)}\\
        &= \mat x.
    \end{align*}
    Indeed, the third equality holds with probability $1-\negl(\lambda)$ because of the following. First, the fact that $\Decode$ can efficiently decode from any $\gamma \ell$ errors, i.e. $\Decode(\mat C^\top \cdot \mat x + \mat e) = \mat x$ for any $\mat e \in \calB_{\le}(\ell,\gamma \ell)$. Second, from \Cref{lem:ltdf-noise-bound} and the choice of $\gamma$, we know that $\wt(\mat E \cdot \spfy_{m,t}(\mat x)) \le \gamma \ell \le \delta_{\calC}\ell$ for any $\mat x \in \F_2^{t\log(m/t)}$, which holds with all but $\negl(\lambda)$ probability.

    \vspace{.5em}

    \emph{Lossyness.} For the lossy branch $\tau=\tau^*$, we will count the number of attainable values $\mat y=(\mat y_1,\mat y_2) \in \F_2^{n/2+\ell}$ in the range of $F$. Denote $\calX = \F_2^{t \log(m/t)}$ to be the domain, and $\calY = \{(\mat y_1,\mat y_2)\} \subset \F_2^{n/2+\ell}$ to be the range.
    
    The first part $\mat y_1 := \mat A \cdot \widetilde{\mat x} = \mat T \cdot (\mat M \cdot \widetilde{\mat x})$, is determined by the value $\mat M \cdot \widetilde{\mat x} \in \calB(n,\le kt)$, hence has cardinality at most $\abs{\Ball_{\le}(n,kt)}$. By our choice of parameters for $n,m,t$ so that \Cref{lem:compress} is satisfied with compression factor $D$, it follows that the set $\calY_1$ of possible values of $\mat y_1$ satisfies $\abs{\calY_1} < \abs{\calX}^{1/D} = 2^{t\log(m/t)/D}$.

    Next, for a fixed $\mat y_1$, using \Cref{lem:ltdf-noise-bound}, we have that 
    \[\mat y_2 = \mat B_{\tau} \cdot \widetilde{\mat x} = \mat S \cdot \mat y_1 + \mat E \cdot \widetilde{\mat x}\] is in a Hamming ball of radius $\gamma \ell$ around $\mat y_1$ except with $\negl(\lambda)$ probability. Thus we have that
    \begin{align*}
        \abs{\calY} &\le \abs{\calY_1} \cdot \abs{\calB_{\le}(\ell,\gamma \ell)}\\ &\le 2^{t\log(m/t) / D} \cdot 2^{H(\gamma) \ell} \\
        &\le \left(2^{t\log(m/t)}\right)^{1/D+1/D'}\\
        &= \abs{\calX}^{1/\Gamma},
    \end{align*}
    and thus the lossiness parameter is at least $\Gamma$. Here the second inequality is due to upper bounds on $\abs{\calY_1}$ and $\abs{\calB_{\le}(\ell,\gamma \ell)}$, the third inequality is due to our choice of $\gamma \le H^{-1}(\rho/D')$, which implies that $H(\gamma) \ell = \frac{H(\gamma)}{\rho} \cdot t\log(m/t) \le \frac{1}{D'} \cdot t\log(m/t)$, and the final equality is because $\frac{1}{D}+\frac{1}{D'}=\frac{1}{\Gamma}$.
\end{proof}
\section{Cryptanalysis on Dense-Sparse LPN}\label{sec:cryptanalysis}

Recent works, in the context of generating correlated pseudorandomness for MPC applications \cite{CCS:BCGI18,C:BCGIKS19,FOCS:BCGIKS20}, have proposed various novel variants of LPN with different matrix distributions~\cite{FOCS:BCGIKS20,C:CouRinRag21,C:BCGIKRS22,PKC:CouDuc23,C:RagRinTan23,C:BCCD23}. These recent progress also came with a \emph{systematization} of known attacks on LPN-style assumption. Namely, prior works observed that most attacks (Information-Set Decoding~\cite{prange-isd}, BKW~\cite{STOC:BluKalWas00}, Gaussian Elimination~\cite{C:EssKubMay17}, Statistical Decoding~\cite{jabri-stat-decoding}, etc.) on LPN-style assumptions can be captured by a unified framework for the security of LPN variants, namely the \emph{linear test framework}, which we will also use to analyze the security of our assumption. The linear test framework has also been used to extensively cryptanalyze Goldreich's PRGs~\cite{Gol00,CM01,FOCS:MosShpTre03,TCC:CEMT09,APPROX:BogQia09,STOC:AppBarWig10,TCC:AppBogRos12,BogdanovQ12,STOC:Applebaum12,SIAM:App13,CCC:OdoWit14,STOC:AppLov16,STOC:KMOW17,AC:CDMRR18,FOCS:AppKac19}, which is a related assumption to Sparse LPN.

In a linear test, an adversary takes as input a matrix $\mat{A}\in \Z^{n\times m}_{2}$ and outputs a vector $\mat{v}\in \Z^{m\times 1}_{2}$ such that $(\mat{s} \mat{A}+\mat{e}) \,\mat{v}$ is a biased random variable with an inverse polynomial bias towards $0$. Assuming $\mat{e}$ is chosen from Bernoulli distribution with probability $\epsilon$, a successful attacker must output $\mat{v}$ with hamming weight $O(\frac{\log n}{\epsilon})$, or else the bias is negligible.

\begin{definition}[Security against Linear Test, adapted from \cite{C:CouRinRag21}]\label{def:linear-test-framework}
    Let $n \in \N$ be the dimension, $m=m(n)$ be the number of samples, and $q=q(n)$ be a prime power. Given an efficiently sampleable distribution $\calM = \calM(n,m,\F_q)$ over matrices in $\F_q^{n \times m}$ and noise probability $\epsilon=\epsilon(n) \in (0,1)$, we say that the $(\calM,\epsilon)\text{-}\LPN$ assumption is \emph{$(T(n),\beta(n),\delta(n))$-computationally secure against linear tests} if for any adversary $\advA$ running in time at most $T(n)$, it holds that
    \[\Pr\left[\bias_{\mat v}(\calD_{\mat A}) \ge \beta \quad \middle| \quad \begin{aligned}
        &\mat A \gets \calM \\
        &\mat v \gets \advA(\mat A)
    \end{aligned}\right] \le \delta,\]
    where $\calD_{\mat A}=\left\{\mat s \mat A + \mat e \mid \mat s \sample \F_q^{1 \times n}, \mat e \gets \Ber(\F_q,\epsilon)^{1 \times m}\right\}$. (Recall that bias is defined in \Cref{def:bias}.)
\end{definition}







In other words, the Linear Test Hypothesis for the $(\calM,\epsilon)$-LPN assumption states that $(\calM,\epsilon)$-LPN is secure if it is computationally difficult to find $O(\frac{\log n}{\epsilon})$-sparse vectors in the kernel of $\mat A \gets \calM$. All known counter-examples to this hypothesis are for distributions $\calM$ of algebraically structured matrices (see~\cite{C:BCGIKS20} for a detailed discussion). In contrast, our Dense-Sparse matrix distribution $\mat A = \mat T \mat M$ has no apparent algebraic structure, and thus it is reasonable to conjecture that Dense-Sparse LPN is secure assuming it is secure against linear tests. 
Furthermore, if finding $O(\frac{\log n}{\epsilon})$-sparse vectors in the kernel is difficult for subexponential time adversaries, then the corresponding LPN assumption is also subexponentially secure.

Assuming the Linear Test Hypothesis, we now examine the hardness of finding sparse vectors in the kernel of $\mat A = \mat T \mat M$. Such vectors of sparsity $n^{\delta}$ for any $\delta \in (0,1)$ actually come from the kernel of $\mat M$, with all but negligible probability. This is because $\mat T$ is a random binary matrix of dimension $\frac{n}{2} \times n$, and so typically only has vectors in the kernel that are at least $\Omega(n/\log n)$-sparse (see~\cite{C:CouRinRag21} for a detailed calculation). Therefore, we will focus our attention on the size of short vectors in the kernel of the $k$-sparse matrix $\mat M$, and also on the overall security of Sparse LPN.

\subsection{Security of Sparse LPN}

Sparse LPN is a relatively well-understood assumption in the domain of refutations for random constraint satisfaction. When one samples a random matrix for the sparse LPN assumption $\mat{M}$ with sparsity parameter $k$, dimension $n$ and sample complexity $n^{1+(\frac{k}{2}-1)(1-\delta)}$, it can be proven that the dual distance of $\mat{M}$, chosen randomly from $\SpMat(n,m,k,\F)$ (the set of $k$-sparse matrices in $\F^{n \times m}$), is $t=\Theta(n^{\delta})$ with all but inverse polynomial probability.

\begin{lemma}[Folklore, see \cite{C:DIJL23}]\label{lem:large-dual-distance}
    For any finite field $\F$, given any $k=k(n) \ge 3$, any $0 < \delta < 1$, and $m=O\left(n^{1+(\frac{k}{2}-1)(1-\delta)}\right)$, there exists a constant $c > 0$ such that the following holds for large enough $n$:
    \begin{align*}
        \Pr\left[\dd(\mat M) \le c \cdot n^\delta \;\middle|\; \mat M \gets \mathsf{SpMat}(n,m,k,\F)\right] = \Theta\left( \left(\frac{k}{n^{\delta}}\right)^{k-2} \right).
    \end{align*}
\end{lemma}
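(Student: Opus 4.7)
The plan is a one-moment-up, two-moment-down argument that localizes the event $\{\dd(\mat M) \le cn^\delta\}$ around the appearance of a single weight-$2$ kernel vector, with all higher-weight contributions subdominant.

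For each $w$, let $X_w$ denote the number of weight-$w$ nonzero vectors $\mat x \in \F^m$ with $\mat M \mat x = \mat 0$. Since no column of $\mat M$ is the zero vector, $\dd(\mat M) \le c n^\delta$ is equivalent to $\sum_{w=2}^{cn^\delta} X_w \ge 1$. First I would compute $E[X_2]$ directly: a weight-$2$ vector with support $\{j_1,j_2\}$ and nonzero coefficients $(\alpha_1,\alpha_2)$ lies in the kernel iff $\mat M_{j_2} = -(\alpha_1/\alpha_2)\,\mat M_{j_1}$, which happens with probability $1/(\binom{n}{k}(|\F|-1)^{k-1})$ over the random column samples. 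Summing over the $\binom{m}{2}(|\F|-1)^2$ such vectors and plugging in $m = O(n^{1+(k/2-1)(1-\delta)})$, together with $\binom{n}{k} = \Theta(n^k/k!)$, yields $E[X_2] = \Theta((k/n^\delta)^{k-2})$.

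The main obstacle is to bound the higher-weight contributions $\sum_{w=3}^{cn^\delta} E[X_w]$. For a fixed support of size $w$ and fixed nonzero coefficients, the event $\mat M \mat x = \mat 0$ requires the $w$ random $k$-sparse columns to combine to zero; conditioning on the first $w-1$ columns forces the last column to equal a specific vector $\mat v \in \F^n$, which is possible only if $\mat v$ is itself $k$-sparse. The probability that an $\F$-linear combination of $w-1$ independent random $k$-sparse vectors lands on a $k$-sparse target decays sharply in $w$ (the sum has typical weight $\Theta(kw)$ and is rarely $k$-sparse), so that combined with the combinatorial enumeration $\binom{m}{w}(|\F|-1)^w$, one can verify, using the crucial assumption $m = O(n^{1+(k/2-1)(1-\delta)}) < n^{k/2}$, that $E[X_w]$ decays geometrically in $w$ on the interval $w \in [3, cn^\delta]$ once $c$ is chosen small enough. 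This gives $\sum_{w=3}^{cn^\delta} E[X_w] = o(E[X_2])$, and Markov's inequality delivers $\Pr[\dd(\mat M) \le cn^\delta] = O((k/n^\delta)^{k-2})$.

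For the matching lower bound I would apply the Paley--Zygmund inequality to $X_2$. Writing $X_2$ as a sum of indicators over pairs of columns (and pairs of nonzero coefficients), the diagonal terms of $E[X_2^2]$ contribute $E[X_2]$; off-diagonal pairs of weight-$2$ vectors sharing one common column column contribute at most $O(m^3/\binom{n}{k}^2) = o(E[X_2])$ via a short calculation; and disjoint pairs contribute $(1+o(1))\,E[X_2]^2$ by the independence of distinct column samples. Hence $E[X_2^2] = (1+o(1))(E[X_2]^2 + E[X_2])$, and Paley--Zygmund yields $\Pr[X_2 \ge 1] \ge E[X_2]^2/E[X_2^2] = \Theta(E[X_2])$ in the regime $E[X_2] = o(1)$, which holds here since $\delta > 0$. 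Since $\{X_2 \ge 1\} \subseteq \{\dd(\mat M) \le cn^\delta\}$ for all $n$ large enough that $c n^\delta \ge 2$, this completes the $\Omega$ direction and establishes the $\Theta$ bound in the lemma.
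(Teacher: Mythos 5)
The paper does not actually prove this lemma: it is stated as folklore, with a pointer to \cite{C:DIJL23} and no argument in the text, so there is no in-paper proof to compare against. Your first-moment/Paley--Zygmund strategy is the standard route to such a statement and is correct in outline; two points should be tightened before it is a full proof.

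First, a small slip in the first moment: for a \emph{fixed} nonzero pair $(\alpha_1,\alpha_2)$ the target column is one particular $k$-sparse vector, so the per-vector probability is $1/\bigl(\binom{n}{k}(|\F|-1)^{k}\bigr)$, not $1/\bigl(\binom{n}{k}(|\F|-1)^{k-1}\bigr)$; the discrepancy is a constant factor in $|\F|$ (and over $\F_2$ it is exactly right), but it should match the enumeration $\binom{m}{2}(|\F|-1)^2$. Second, and more importantly, the step ``$\E[X_w]$ decays geometrically for $w\le c n^\delta$ once $c$ is small, hence $\sum_{w\ge 3}\E[X_w]=o(\E[X_2])$'' is the crux of the upper bound and is asserted rather than shown; it would fail if, say, $m$ were allowed up to $n^{k/2}$. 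To close it: if $w$ columns sum to zero, every coordinate in the union $S$ of their supports has multiplicity at least two, so $|S|\le kw/2$, and hence the probability that $w$ i.i.d.\ $k$-sparse columns sum to zero is at most $\sum_{s\le kw/2}\binom{n}{s}\bigl(\binom{s}{k}/\binom{n}{k}\bigr)^w$, dominated by $s=kw/2$. Bounding binomials and multiplying by $\binom{m}{w}(|\F|-1)^w$ gives $\E[X_w]\le \bigl(C\,k^{k/2}\,m\,w^{k/2-1}/n^{k/2}\bigr)^w$ for an absolute constant $C$; substituting $m=O\bigl(n^{1+(k/2-1)(1-\delta)}\bigr)$ turns the base into $\Theta\bigl(k^{k/2}(w/n^\delta)^{k/2-1}\bigr)$, which is $<1$ for all $3\le w\le cn^\delta$ once $c$ is small enough, and the $w=3$ term alone is $o(\E[X_2])$ since $3(k/2-1)>k-2$ for $k\ge 3$. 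With this made explicit, your Markov upper bound and your second-moment lower bound (whose case split on $\E[X_2^2]$ is correct, again up to $|\F|$-dependent constants) together give the claimed two-sided $\Theta$ estimate.
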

Note that the dual distance of $t=O(n^{\delta})$ is also known in the stronger case of \emph{worst-case} $k$-sparse $\mat M$, whenever $m = \tilde{O}\left(n^{1+(\frac{k}{2}-1)(1-\delta)}\right)$~\cite{guruswami2022algorithms,hsieh2023simple}.

One could then consider Sparse LPN in two regimes. In the first regime, the error probability is  $\epsilon=\omega(\frac{\log n}{t})$. In this regime, conditioned on the event that there simply \emph{does not exist} vectors of sparsity $O(\frac{\log n}{\epsilon})$ in the kernel of $\mat M$, security against linear tests holds. Sparse LPN with this regime of error (even constant error probability) has found prior applications~\cite{STOC:IKOS08,C:ADINZ17,EC:AppKon23,EC:BCGIKRS23,C:DIJL23}.

Nevertheless, there are also applications of Sparse LPN where sparse vectors (in the kernel of $\mat M$) do exist, but seem computationally hard to find. One such example is the public-key encryption scheme of Applebaum, Barak and Wigderson~\cite{STOC:AppBarWig10} uses an inverse polynomial error probability $\epsilon = O(\frac{\log n}{t})$. In this case, certainly there exist lots of $t$-sparse vectors in the kernel of $\mat{M}$. To this date, we do not know any procedure that can efficiently find (even in subexponential time $2^{t^\rho}$ for any $0 < \rho < 1$) such $t$-sparse vectors in the kernel of $\mat{M}$. Therefore, the linear test framework correctly predicts the (subexponential) security of this assumption.

Our applications could have been based on Sparse LPN in its compression regime. This holds at a significantly lower error probability $\epsilon_{\mathsf{cps}}=o(\frac{1}{t_{\mathsf{cps}}})$ where $t_{\mathsf{cps}} \sim \left(\frac{n^{k}}{m}\right)^{\frac{1}{k-1}}$, which is polynomially larger than the dual distance $t \sim \left(\frac{n^{k/2}}{m}\right)^{\frac{1}{k/2-1}}$. We in fact show that it is \emph{easy} to find $t_{\mathsf{cps}}$-sparse solutions to $\mat{M}\,\mat{x}$ for randomly chosen $\mat{M}$.

\paragraph{Sparse LPN is Broken in its Compression Regime.}  Following our exposition in \Cref{sec:tech-overview}, we give a simple attack on Sparse LPN in the parameter regime needed for achieving compression (and hence lossy trapdoor functions).

\begin{theorem}
\label{thm:sparse-lpn-attack}
    Consider Sparse LPN with sparsity $k$, dimension $n$ and sample complexity $m$ in the regime in Lemma~\ref{lem:compress}, with error rate $\epsilon = O(1/t)$. Assuming the matrix is chosen by sampling distinct $k$-sparse random columns, this variant of Sparse LPN can be broken in polynomial time.
\end{theorem}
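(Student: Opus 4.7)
My plan is to formalize the attack sketched in Figure~\ref{fig:sparse-lpn-attack} of the technical overview and then convert a short dual-distance witness into a distinguisher via the Piling-Up Lemma. Concretely, fix $L = c\cdot t$ for a sufficiently large constant $c$, where $t$ satisfies the compression equation $t^{k-1} = \Theta(n^k/m)$. The attacker picks an arbitrary subset $\calS \subseteq [n]$ with $|\calS| = L$, scans all $m$ columns of $\mat M$, and keeps the set $\calC_\calS$ of those $k$-sparse columns whose support is contained in $\calS$. If $|\calC_\calS| \ge L+1$, Gaussian elimination on the submatrix of $\mat M$ restricted to rows $\calS$ and columns $\calC_\calS$ (an $L \times (L+1)$ matrix over $\F_2$) yields in polynomial time a nonzero $\mat v \in \F_2^m$ supported on $\calC_\calS$ with $\mat M \mat v = \mat 0$ and $\wt(\mat v) \le L+1 = O(t)$.

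Next I verify that with overwhelming probability over the choice of $\mat M$ (sampled as $m$ distinct uniformly random $k$-sparse columns), the set $\calC_\calS$ indeed has size at least $L+1$. The probability that a single random $k$-sparse column lies entirely in $\calS$ is $\binom{L}{k}/\binom{n}{k} = \Theta((L/n)^k)$, so by linearity $\E[|\calC_\calS|] = \Theta(m(L/n)^k)$. Plugging in the compression regime $m = \Omega(n^k/t^{k-1})$ and $L = ct$ gives $\E[|\calC_\calS|] = \Theta(c^k \cdot t)$, which exceeds $L+1 = ct+1$ for $c$ large enough; a Chernoff bound (applied to the essentially independent indicator variables for each column, with a mild correction for the ``distinctness'' conditioning, which only removes a $1-o(1)$ factor since the total number of $k$-sparse columns $\binom{n}{k}$ is much larger than $m$) then concentrates $|\calC_\calS|$ around this expectation with $1-\negl(n)$ probability.

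Having extracted a nonzero $\mat v \in \F_2^m$ with $\mat M \mat v = 0$ and $\wt(\mat v) = O(t)$, I convert it into a distinguisher exactly as in the CRHF argument of Theorem~\ref{thm:CRHF-secure}: on input a challenge $(\mat M, \mat b)$, output $\langle \mat b, \mat v\rangle$. If $\mat b = \mat s \mat M + \mat e$ then $\langle \mat b, \mat v\rangle = \langle \mat e, \mat v\rangle$, which by the Piling-Up Lemma has bias $(1-2\epsilon)^{O(t)} = 2^{-O(\epsilon t)} = \Omega(1)$ since $\epsilon = O(1/t)$; if $\mat b$ is uniform the output is unbiased. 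Thus the distinguishing advantage is constant, and the whole procedure runs in time $\poly(n,m)$, so Sparse LPN with these parameters is broken.

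\textbf{Main obstacle.} The only genuinely delicate step is the probabilistic claim that $|\calC_\calS| \ge L+1$ with overwhelming (rather than merely constant) probability: the column events are not quite independent because columns are sampled distinct, and a naive second-moment bound only gives constant success. I expect that choosing $c$ large enough to make the expectation a large multiple of $L$ and then applying a standard Chernoff or bounded-differences argument (e.g.\ treating the process as sampling without replacement from the set of $k$-sparse columns supported in $[n]$) is sufficient; alternatively, one can repeat the attack over $\omega(1)$ independently chosen subsets $\calS$ to amplify constant success to overwhelming.
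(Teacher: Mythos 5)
Your proposal is correct and takes the same route as the paper's proof of \Cref{lem:sparse-lpn-attack-analysis}: fix a small set $\calS$, count columns supported inside $\calS$, extract a sparse kernel vector by Gaussian elimination on the $O(t)$-by-$O(t)$ submatrix, and distinguish via the $\Omega(1)$ bias of $\langle \mat e, \mat v\rangle$ coming from the Piling-Up Lemma and $\epsilon = O(1/t)$. Regarding your stated ``main obstacle'': the paper's lemma only asserts $1-o(1)$ success (not $1-\negl(n)$), for which a second-moment bound already suffices once the expectation has a constant-factor margin over the threshold --- obtained either via $|\calS| = ct$ as you do, or via $m > c\, m_{(\ref{lem:compress})}$ as the lemma's hypothesis does --- so the full Chernoff-with-distinctness-correction or repetition amplification you describe would only be needed for a strictly stronger failure-probability guarantee than the statement requires.
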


The attack, given samples $(\mat M, \mat b)$ with $\mat b$ either from the Sparse LPN distribution, or the random distribution:
\begin{enumerate}
    \item Pick a random subset $S \subset [n]$ of size $t$. Initialize $T=\emptyset$.
    \item For each column $j \in [m]$ of $\mat M$, if all $k$ of the non-zero entries lie inside $S$, add $j$ to $T$.
    \item If $\abs{T} > t$, find a linear dependency (expressed as a vector $\mat x$) between the columns in $T$.
    \item Compute $d = \langle \mat b, \mat x\rangle$. If $d=0$, return ``Sparse LPN'', else return ``random''.
\end{enumerate}

\begin{lemma}\label{lem:sparse-lpn-attack-analysis}
    This attack succeeds with high probability $1-o(1)$ whenever $m > c \cdot m_{(\ref{lem:compress})}(k,D)$ for a sufficiently large constant $c$ and any compression factor $D>1$.
\end{lemma}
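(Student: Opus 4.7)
The plan is to analyze the three phases of the attack in turn: (i) the set $T$ of columns of $\mat M$ supported on $S$ has size exceeding $t$ with probability $1-o(1)$; (ii) from $T$ one can extract a nonzero $\mat x \in \F_2^m$ with $\mat M\mat x=\mat 0$ and $\wt(\mat x)\le t+1$; (iii) with such $\mat x$ in hand, the inner product $\langle \mat b, \mat x\rangle$ is noticeably biased toward $0$ on Sparse LPN samples and uniform on random samples, yielding distinguishing advantage $\Omega(1)$.

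For phase (i), I would first compute, for a single column drawn uniformly from the set of $k$-sparse vectors, the probability that its support lies inside a fixed $S$ of size $t$, which is $\binom{t}{k}/\binom{n}{k} = \Theta((t/n)^k)$ by standard binomial approximations. Modeling the $m$ columns as i.i.d., I would get $\E[|T|] = \Theta(m (t/n)^k)$. Substituting $t = n^\delta$ and $m \ge c \cdot n^{1+(Dk-1)(1-\delta)}$, this simplifies to $\E[|T|] = \Theta\bigl(c \cdot n^{\delta + (D-1)k(1-\delta)}\bigr)$, which exceeds $t = n^\delta$ by a polynomial factor $n^{(D-1)k(1-\delta)}$ whenever $D>1$ and $k\ge 3$. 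Taking $c$ sufficiently large and applying the Chernoff bound (\Cref{lem:chernoff-bound}) would then give $|T| > t$ except with probability $\exp(-\Omega(t)) = o(1)$.

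For phase (ii), the columns of $\mat M$ indexed by $T$ are supported entirely on the $t$-element set $S$, so viewed as vectors in $\F_2^t$ they admit a nontrivial linear dependency among any $t+1$ of them. Gaussian elimination restricted to the first $t+1$ columns of $T$ would produce a nonzero $\mat x \in \F_2^m$ with $\wt(\mat x) \le t+1$ and $\mat M \mat x = \mat 0$. For phase (iii), in the Sparse LPN case $\langle \mat b, \mat x\rangle = \langle \mat s\mat M + \mat e, \mat x\rangle = \langle \mat e, \mat x\rangle$, which by the Piling-Up Lemma (\Cref{lem:piling-up}) equals $0$ with probability $\tfrac{1}{2} + \tfrac{1}{2}(1-2\epsilon)^{\wt(\mat x)} \ge \tfrac{1}{2} + \Omega(1)$, using $\epsilon = O(1/t)$ and $\wt(\mat x) \le t+1$. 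For uniform $\mat b$ the inner product is uniform, so the attacker would distinguish with $\Omega(1)$ advantage.

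The main subtlety will be the i.i.d.~assumption in phase (i): the columns are sampled to be distinct, which introduces mild correlation. I would handle this via a coupling argument showing that distinct-sampling and i.i.d.~sampling are statistically $o(1)$-close whenever $m \ll \binom{n}{k}$, which holds since $m < n^{k/2}$; the Chernoff concentration then transfers. A minor point is that a generic kernel vector found in phase (ii) could use all of $T$ and be polynomially dense, but truncating to a $(t+1)$-subset of $T$ before elimination keeps $\wt(\mat x) \le t+1$, which is essential for the $(1-2\epsilon)^{\wt(\mat x)} = \Omega(1)$ bias estimate.
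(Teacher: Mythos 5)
Your argument is correct and takes the same route as the paper's: compute $\E[|T|] \approx m\,(t/n)^k$, verify it exceeds $t$ under the hypothesis on $m$, extract a sparse kernel vector, and bias the linear test. In fact, the paper's own proof is essentially a one-liner, stopping after the expectation calculation; you usefully supply the concentration step, the distinct-versus-i.i.d.\ coupling, and the final bias estimate that the paper leaves tacit. Two small adjustments are worth making. First, \Cref{lem:chernoff-bound} as stated is an upper-tail bound, whereas phase~(i) needs a \emph{lower}-tail Chernoff bound to control $\Pr[|T| < t]$ --- standard, and your polynomial margin $\E[|T|]/t = \Theta\bigl(c\, n^{(D-1)k(1-\delta)}\bigr)$, which you derive correctly, is exactly what makes it go through. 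Second, a single run of the linear test yields $\Omega(1)$ distinguishing advantage rather than the $1-o(1)$ success the lemma advertises; the latter follows by repeating over fresh choices of $S$ (or fresh LPN samples $\mat b$) and majority voting, which the lemma's phrasing implicitly assumes.
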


\begin{proof}
    For any fixed choice of $S \subset [n]$, we can compute the expectation of the number of columns found in step 2 as follows.
    \begin{align*}
        \E[\abs{T}] = m \cdot \Pr[\Supp(\mat a) \subset S \mid \mat a \gets \SpMat(n,1,k)] = m \cdot \frac{{t \choose k}}{{n \choose k}} \approx m \cdot \left(\frac{t}{n}\right)^k > t.
    \end{align*}
    The last inequality is due to the choice $m > m_{(\ref{lem:compress})}(k,1)$.
    
\end{proof}

\subsection{Dual-Distance of Dense-Sparse LPN}

We now examine the dual-distance of the matrices arising in our Dense-Sparse LPN assumption, and the computational hardness of finding sparse vectors $\vec x$ that are in the kernel of those matrices. Since our matrices are of the form $\mat{T}\mat{M}$ where $\mat{T} \in \Z^{n\times m}_{2}$ is randomly chosen matrix with high probability the dual distance is around $t=O(n^{\delta})$ where $m=n^{1+(\frac{k}{2}-1)(1-\delta)}$ where as the error probability $\epsilon <\frac{1}{t_{\mathsf{cps}}}$, where $t_{\mathsf{cps}} \sim \left(\frac{n^{k}}{m}\right)^{\frac{1}{k-1}}$ is the threshold for compression.

\paragraph{Sparse Vector in the Kernel of $\mat{T}\mat{M}$.} 
As described above, if we can efficiently find $t=O(n^{\delta})$ sparse vectors $\vec x\in \Z^{m\times 1}_{2}$ in the kernel of $\mat{T}\mat{M}$, that is enough to break our assumption with any error probability $O(\frac{\log n}{t})$. In our specific setting, we work with an error probability of about $\frac{1}{t_{\mathsf{cps}}}$ where $t_{\mathsf{cps}} \sim \left(\frac{n^{k}}{m}\right)^{\frac{1}{k-1}} \approx \frac{n}{m^{1/k}}$ (for large enough constant $k$). Setting $m=n^{1+(k/2-1)(1-\delta)}$, this turns out to be roughly $t_{\mathsf{cps}}\approx n^{\frac{1}{2}+\delta-\frac{1}{k}}$. On the other hand, much sparser vectors ($t=O(n^{\delta})$ sparse) in the kernel exist. Thus, if one can find kernel vectors of this sparsity, it could break our assumption. However, for arbitrary dense matrices we don't know a better way to find such $n^{\delta}$-sparse vectors than naive search, which is known to be subexponentially hard even given $\mat{M}$ in the clear (this very assumption was made by \cite{STOC:AppBarWig10}).

It is also true that one could also break our assumption by finding vectors in the kernel with much higher sparsity $t_{\mathsf{cps}}\approx n^{\frac{1}{2}+\delta-\frac{1}{k}}$. In fact, we showed how to do find such vectors in~\Cref{thm:sparse-lpn-attack}, but our attack crucially relies on seeing the \emph{sparsity pattern} of the matrix $\mat{M}$. This pattern is no longer apparent when we only give out $\mat A = \mat T \mat M$ with $\mat T$ being a random (dense) matrix, so the attack in~\Cref{thm:sparse-lpn-attack} no longer applies.

\paragraph{Pseudorandomness of $\mat{T}\mat{M}$.} We in fact postulate that the matrices $\mat{T}\mat{M}$ are computationally hard to distinguish from random matrices (which have dual distance of $O(\frac{n}{\log n})$ with overwhelming probability). A first observation is that since any $O(t)$ columns of $\mat{M}$ are linearly independent (as its dual distance is  $\Omega(t)$) with overwhelming probability (assuming $\mat{M}$ is chosen from a \hyperref[def:good-dist]{good} distribution), any $O(t)$ columns of $\mat{T}\mat{M}$ are distributed according to a random distribution over $\Z^{\frac{n}{2}\times 1}_{2}$. This ensures that matrix is from a $O(t)$-wise independent distribution.

\subsubsection{Searching for \texorpdfstring{$\mat{T}$}{T}}
Next, we examine the possibility of learning $\mat T$ from the matrix $\mat T \mat{M}$ and then finding $t_{\mathsf{cps}}$ sparse vectors in the kernel of $\mat{M}$. We provide an attack that works with $\mat{T}$ is an invertible matrix in $\Z^{n\times n}_{2}$. This attack does not apply when the matrix lives inside $\Z^{\beta n\times n}_{2}$ for any constant $\beta \in (0,1)$, which justifies our choice of parameters for Dense-Sparse LPN.

\paragraph{When $\mat T \gets \F_2^{n \times n}$ is a square matrix.}
We can generalize our previous attack to rule out the case where $\mat T$ is a random square matrix (and not just the identity). The idea is that this $\mat T$ may be ``unmasked'' by searching for its inverse $\mat Z$. In other words, given the Dense-Sparse matrix $\mat A = \mat T \cdot \mat M$, we want to find $\mat Z \in \F_2^{n \times n}$ such that $\mat Z \cdot \mat A = \mat M$ is a sparse matrix. Let a candidate row for $\mat{Z}$ be $\mat{z}$. Observe that $\mat{z}\mat{A}$ must be a sparse vector that corresponds to the row of $\mat{M}$. Since each column of $\mat{M}$ has some constant number $k$ of elements, with high probability (roughly $1-O(\frac{k}{n})$) each coordinate of $\mat{z}\mat{A}$ must be $0$. Then, we can essentially apply the same attack that breaks LPN assumption with error probability $O(\frac{1}{n})$ to learn $\mat{z}$.

The attack is as follows. Let the columns of $\mat{A}$ be $\mat{a}_1,\ldots,\mat{a}_m$. We randomly sample $n$ equations at random, say denoted by $I$,$\{\mat{a}_i\}_{i\in I}$. There is probability $(1-\frac{k}{n})^{n}=\Omega(1)$ chance that there exist $\vec z$ that satisfy $\langle \vec z, \vec a_i\rangle=0$. We can then find all such vectors $\vec z$ that will form the rows of $\vec Z$. This can be used derive $\mat{M}$ up to a permutation of rows, which is sufficient for carrying out the attack on Sparse LPN. 


\paragraph{When $\mat T \gets \F_2^{\beta n \times n}$ is a compressing matrix for $\beta<1$.} A natural way to extend the above attack would be to find an inverse $\mat{Z}$ for a square sub-matrix of $\mat T$. For example, we can hope to find $\mat{Z}\in \Z^{\beta n\times \beta n}_{2}$ that is the inverse of the submatrix formed by the first $\beta\cdot n$ sub-columns of $\mat{T}$, so that $\mat{Z}\mat{T}\mat{M}=[\mat{I}_{\beta n} \Vert \mat{T}']\cdot \mat{M}$. 
Note that $\mat{T}' \in \F_2^{\beta n \times (1-\beta) n}$ is randomly distributed.

Now, if one examines $[\mat{I}_{\beta n} \Vert \mat{T}']\cdot \mat{M}$ the following holds. For any $i$-th column $\mat{m}_i$ of $\mat{M}$, if $\mat{m}_i$ is supported over the first $\beta \cdot n$ variables, then the same $i$-th column in $[\mat{I}_{\beta n} \Vert \mat{T}']\cdot \mat{M}$ is precisely $\mat m_i$, and hence remains $k$-sparse. Otherwise, if $\mat m_i$ has a non-zero entry in the rest of the positions, then the randomness of $\mat{T}'$ would make the resulting column in $\mat Z \mat T \mat M$ random as well.

Let's calculate when the first event happens, so that we have hope of unmasking a column of $\mat{M}$. Since there is a constant chance that when choosing a column of $\mat{M}$, that column is supported inside the first $\beta\cdot n$ set of variables, the product $[\mat{I}_{\beta n} \Vert \mat{T}']\cdot \mat{M}$ will be so that each row has a constant fraction of non-zero element. As a consequence, to solve for $\mat{Z}$ we might have to rely on an LPN solver that works with constant probability noise which might be hard to do. In general, this attack should take (near-)exponential time as long $\mat{T}$ is randomly chosen from $\Z^{\beta n\times n}_{2}$ for any $\beta >0$.

\subsubsection{On Subexponential Time Attacks}

We have discussed two main attacks for finding sparse vectors in the kernel of the Dense-Sparse matrix $\mat T \mat M$, which by the linear test framework (\Cref{def:linear-test-framework}) is the best attack against Dense-Sparse LPN. In the fist attack, we attempt to directly find $O(n^{\delta})$ sparse vectors in the kernel of $\mat T \mat M$, treating it as an arbitrary dense matrix. In the second attack, we attempt to peel off $\mat{T}$, and then use our attack in~\Cref{thm:sparse-lpn-attack} to find sparse vectors in the kernel of $\mat M$. As argued above, the best known algorithms for the first takes $2^{\tilde{O}(n^{\delta})}$ time and the second to the best of our current knowledge should take almost exponential time assuming $\beta$ is some constant. Therefore, we may justify our conjecture that Dense-Sparse LPN is hard against subexponential time adversaries.
\pagebreak

{\small
\bibliographystyle{alpha}
\bibliography{Bibliography/abbrev3,Bibliography/crypto,Bibliography/custom}}

\newcommand{\etalchar}[1]{$^{#1}$}
\begin{thebibliography}{BCG{\etalchar{+}}20b}

\bibitem[ABB10]{EC:AgrBonBoy10}
Shweta Agrawal, Dan Boneh, and Xavier Boyen.
\newblock Efficient lattice {(H)IBE} in the standard model.
\newblock In Henri Gilbert, editor, {\em EUROCRYPT~2010}, volume 6110 of {\em
  {LNCS}}, pages 553--572. Springer, Heidelberg, May~/~June 2010.

\bibitem[ABR12]{TCC:AppBogRos12}
Benny Applebaum, Andrej Bogdanov, and Alon Rosen.
\newblock A dichotomy for local small-bias generators.
\newblock In Ronald Cramer, editor, {\em TCC~2012}, volume 7194 of {\em
  {LNCS}}, pages 600--617. Springer, Heidelberg, March 2012.

\bibitem[ABW10]{STOC:AppBarWig10}
Benny Applebaum, Boaz Barak, and Avi Wigderson.
\newblock Public-key cryptography from different assumptions.
\newblock In Leonard~J. Schulman, editor, {\em 42nd ACM STOC}, pages 171--180.
  {ACM} Press, June 2010.

\bibitem[ADI{\etalchar{+}}17]{C:ADINZ17}
Benny Applebaum, Ivan Damg{\r a}rd, Yuval Ishai, Michael Nielsen, and Lior
  Zichron.
\newblock Secure arithmetic computation with constant computational overhead.
\newblock In Jonathan Katz and Hovav Shacham, editors, {\em CRYPTO~2017,
  Part~I}, volume 10401 of {\em {LNCS}}, pages 223--254. Springer, Heidelberg,
  August 2017.

\bibitem[ADMP20]{AC:ADMP20}
Navid Alamati, Luca {De Feo}, Hart Montgomery, and Sikhar Patranabis.
\newblock Cryptographic group actions and applications.
\newblock In Shiho Moriai and Huaxiong Wang, editors, {\em ASIACRYPT~2020,
  Part~II}, volume 12492 of {\em {LNCS}}, pages 411--439. Springer, Heidelberg,
  December 2020.

\bibitem[AK19]{FOCS:AppKac19}
Benny Applebaum and Eliran Kachlon.
\newblock Sampling graphs without forbidden subgraphs and unbalanced expanders
  with negligible error.
\newblock In David Zuckerman, editor, {\em 60th FOCS}, pages 171--179. {IEEE}
  Computer Society Press, November 2019.

\bibitem[AK23]{EC:AppKon23}
Benny Applebaum and Niv Konstantini.
\newblock Actively secure arithmetic computation and {VOLE} with constant
  computational overhead.
\newblock In Carmit Hazay and Martijn Stam, editors, {\em EUROCRYPT~2023,
  Part~II}, volume 14005 of {\em {LNCS}}, pages 190--219. Springer, Heidelberg,
  April 2023.

\bibitem[AL16]{STOC:AppLov16}
Benny Applebaum and Shachar Lovett.
\newblock Algebraic attacks against random local functions and their
  countermeasures.
\newblock In Daniel Wichs and Yishay Mansour, editors, {\em 48th ACM STOC},
  pages 1087--1100. {ACM} Press, June 2016.

\bibitem[Ale03]{FOCS:Alekhnovich03}
Michael Alekhnovich.
\newblock More on average case vs approximation complexity.
\newblock In {\em 44th FOCS}, pages 298--307. {IEEE} Computer Society Press,
  October 2003.

\bibitem[App12]{STOC:Applebaum12}
Benny Applebaum.
\newblock Pseudorandom generators with long stretch and low locality from
  random local one-way functions.
\newblock In Howard~J. Karloff and Toniann Pitassi, editors, {\em 44th ACM
  STOC}, pages 805--816. {ACM} Press, May 2012.

\bibitem[App13]{SIAM:App13}
Benny Applebaum.
\newblock Pseudorandom generators with long stretch and low locality from
  random local one-way functions.
\newblock {\em {SIAM} J. Comput.}, 42(5):2008--2037, 2013.

\bibitem[BBC08]{SCN:BalBodChi08}
Marco Baldi, Marco Bodrato, and Franco Chiaraluce.
\newblock A new analysis of the {McEliece} cryptosystem based on {QC}-{LDPC}
  codes.
\newblock In Rafail Ostrovsky, Roberto~De Prisco, and Ivan Visconti, editors,
  {\em SCN 08}, volume 5229 of {\em {LNCS}}, pages 246--262. Springer,
  Heidelberg, September 2008.

\bibitem[BBN{\etalchar{+}}09]{AC:BBNRSS09}
Mihir Bellare, Zvika Brakerski, Moni Naor, Thomas Ristenpart, Gil Segev, Hovav
  Shacham, and Scott Yilek.
\newblock Hedged public-key encryption: How to protect against bad randomness.
\newblock In Mitsuru Matsui, editor, {\em ASIACRYPT~2009}, volume 5912 of {\em
  {LNCS}}, pages 232--249. Springer, Heidelberg, December 2009.

\bibitem[BC07]{BC07}
Marco Baldi and Franco Chiaraluce.
\newblock Cryptanalysis of a new instance of mceliece cryptosystem based on
  {QC-LDPC} codes.
\newblock In {\em {IEEE} International Symposium on Information Theory, {ISIT}
  2007, Nice, France, June 24-29, 2007}, pages 2591--2595. {IEEE}, 2007.

\bibitem[BCCD23]{C:BCCD23}
Maxime Bombar, Geoffroy Couteau, Alain Couvreur, and Cl{\'e}ment Ducros.
\newblock Correlated pseudorandomness from the hardness of quasi-abelian
  decoding.
\newblock In Helena Handschuh and Anna Lysyanskaya, editors, {\em CRYPTO~2023,
  Part~IV}, volume 14084 of {\em {LNCS}}, pages 567--601. Springer, Heidelberg,
  August 2023.

\bibitem[BCD{\etalchar{+}}16]{PQCRYPTO:BCDOT16}
Magali Bardet, Julia Chaulet, Vlad Dragoi, Ayoub Otmani, and Jean-Pierre
  Tillich.
\newblock Cryptanalysis of the {McEliece} public key cryptosystem based on
  polar codes.
\newblock In Tsuyoshi Takagi, editor, {\em Post-Quantum Cryptography - 7th
  International Workshop, PQCrypto 2016}, pages 118--143. Springer, Heidelberg,
  2016.

\bibitem[BCG{\etalchar{+}}19]{C:BCGIKS19}
Elette Boyle, Geoffroy Couteau, Niv Gilboa, Yuval Ishai, Lisa Kohl, and Peter
  Scholl.
\newblock Efficient pseudorandom correlation generators: Silent {OT} extension
  and more.
\newblock In Alexandra Boldyreva and Daniele Micciancio, editors, {\em
  CRYPTO~2019, Part~III}, volume 11694 of {\em {LNCS}}, pages 489--518.
  Springer, Heidelberg, August 2019.

\bibitem[BCG{\etalchar{+}}20a]{FOCS:BCGIKS20}
Elette Boyle, Geoffroy Couteau, Niv Gilboa, Yuval Ishai, Lisa Kohl, and Peter
  Scholl.
\newblock Correlated pseudorandom functions from variable-density {LPN}.
\newblock In {\em 61st FOCS}, pages 1069--1080. {IEEE} Computer Society Press,
  November 2020.

\bibitem[BCG{\etalchar{+}}20b]{C:BCGIKS20}
Elette Boyle, Geoffroy Couteau, Niv Gilboa, Yuval Ishai, Lisa Kohl, and Peter
  Scholl.
\newblock Efficient pseudorandom correlation generators from ring-{LPN}.
\newblock In Daniele Micciancio and Thomas Ristenpart, editors, {\em
  CRYPTO~2020, Part~II}, volume 12171 of {\em {LNCS}}, pages 387--416.
  Springer, Heidelberg, August 2020.

\bibitem[BCG{\etalchar{+}}22]{C:BCGIKRS22}
Elette Boyle, Geoffroy Couteau, Niv Gilboa, Yuval Ishai, Lisa Kohl, Nicolas
  Resch, and Peter Scholl.
\newblock Correlated pseudorandomness from expand-accumulate codes.
\newblock In Yevgeniy Dodis and Thomas Shrimpton, editors, {\em CRYPTO~2022,
  Part~II}, volume 13508 of {\em {LNCS}}, pages 603--633. Springer, Heidelberg,
  August 2022.

\bibitem[BCG{\etalchar{+}}23]{EC:BCGIKRS23}
Elette Boyle, Geoffroy Couteau, Niv Gilboa, Yuval Ishai, Lisa Kohl, Nicolas
  Resch, and Peter Scholl.
\newblock Oblivious transfer with constant computational overhead.
\newblock In Carmit Hazay and Martijn Stam, editors, {\em EUROCRYPT~2023,
  Part~I}, volume 14004 of {\em {LNCS}}, pages 271--302. Springer, Heidelberg,
  April 2023.

\bibitem[BCGI18]{CCS:BCGI18}
Elette Boyle, Geoffroy Couteau, Niv Gilboa, and Yuval Ishai.
\newblock Compressing vector {OLE}.
\newblock In David Lie, Mohammad Mannan, Michael Backes, and XiaoFeng Wang,
  editors, {\em ACM CCS 2018}, pages 896--912. {ACM} Press, October 2018.

\bibitem[BCM{\etalchar{+}}18]{FOCS:BCMVV18}
Zvika Brakerski, Paul Christiano, Urmila Mahadev, Umesh~V. Vazirani, and Thomas
  Vidick.
\newblock A cryptographic test of quantumness and certifiable randomness from a
  single quantum device.
\newblock In Mikkel Thorup, editor, {\em 59th FOCS}, pages 320--331. {IEEE}
  Computer Society Press, October 2018.

\bibitem[BF22]{C:BitFre22}
Nir Bitansky and Sapir Freizeit.
\newblock Statistically sender-private {OT} from {LPN} and derandomization.
\newblock In Yevgeniy Dodis and Thomas Shrimpton, editors, {\em CRYPTO~2022,
  Part~III}, volume 13509 of {\em {LNCS}}, pages 625--653. Springer,
  Heidelberg, August 2022.

\bibitem[BFKL94]{C:BFKL93}
Avrim Blum, Merrick~L. Furst, Michael~J. Kearns, and Richard~J. Lipton.
\newblock Cryptographic primitives based on hard learning problems.
\newblock In Douglas~R. Stinson, editor, {\em CRYPTO'93}, volume 773 of {\em
  {LNCS}}, pages 278--291. Springer, Heidelberg, August 1994.

\bibitem[BFO08]{C:BolFehOne08}
Alexandra Boldyreva, Serge Fehr, and Adam O'Neill.
\newblock On notions of security for deterministic encryption, and efficient
  constructions without random oracles.
\newblock In David Wagner, editor, {\em CRYPTO~2008}, volume 5157 of {\em
  {LNCS}}, pages 335--359. Springer, Heidelberg, August 2008.

\bibitem[BHK11]{ICS:BraHasKal11}
Mark Braverman, Avinatan Hassidim, and Yael~Tauman Kalai.
\newblock Leaky pseudo-entropy functions.
\newblock In Bernard Chazelle, editor, {\em Innovations in Computer Science -
  {ICS} 2011, Tsinghua University, Beijing, China, January 7-9, 2011.
  Proceedings}, pages 353--366. Tsinghua University Press, 2011.

\bibitem[BHY09]{EC:BelHofYil09}
Mihir Bellare, Dennis Hofheinz, and Scott Yilek.
\newblock Possibility and impossibility results for encryption and commitment
  secure under selective opening.
\newblock In Antoine Joux, editor, {\em EUROCRYPT~2009}, volume 5479 of {\em
  {LNCS}}, pages 1--35. Springer, Heidelberg, April 2009.

\bibitem[BKV19]{AC:BeuKleVer19}
Ward Beullens, Thorsten Kleinjung, and Frederik Vercauteren.
\newblock {CSI}-{FiSh}: Efficient isogeny based signatures through class group
  computations.
\newblock In Steven~D. Galbraith and Shiho Moriai, editors, {\em
  ASIACRYPT~2019, Part~I}, volume 11921 of {\em {LNCS}}, pages 227--247.
  Springer, Heidelberg, December 2019.

\bibitem[BKW00]{STOC:BluKalWas00}
Avrim Blum, Adam Kalai, and Hal Wasserman.
\newblock Noise-tolerant learning, the parity problem, and the statistical
  query model.
\newblock In {\em 32nd ACM STOC}, pages 435--440. {ACM} Press, May 2000.

\bibitem[BL05]{BL05}
Thierry~P. Berger and Pierre Loidreau.
\newblock How to mask the structure of codes for a cryptographic use.
\newblock {\em Des. Codes Cryptogr.}, 35(1):63--79, 2005.

\bibitem[BL17]{C:BoyLi17}
Xavier Boyen and Qinyi Li.
\newblock All-but-many lossy trapdoor functions from lattices and applications.
\newblock In Jonathan Katz and Hovav Shacham, editors, {\em CRYPTO~2017,
  Part~III}, volume 10403 of {\em {LNCS}}, pages 298--331. Springer,
  Heidelberg, August 2017.

\bibitem[BLP10]{wild-mceliece}
Daniel~J Bernstein, Tanja Lange, and Christiane Peters.
\newblock Wild mceliece.
\newblock In {\em International Workshop on Selected Areas in Cryptography},
  pages 143--158. Springer, 2010.

\bibitem[BLP11]{PQCRYPTO:BerLanPet11}
Daniel~J. Bernstein, Tanja Lange, and Christiane Peters.
\newblock Wild {McEliece} incognito.
\newblock In Bo-Yin Yang, editor, {\em Post-Quantum Cryptography - 4th
  International Workshop, PQCrypto 2011}, pages 244--254. Springer, Heidelberg,
  November~/~December 2011.

\bibitem[BLSV18]{EC:BLSV18}
Zvika Brakerski, Alex Lombardi, Gil Segev, and Vinod Vaikuntanathan.
\newblock Anonymous {IBE}, leakage resilience and circular security from new
  assumptions.
\newblock In Jesper~Buus Nielsen and Vincent Rijmen, editors, {\em
  EUROCRYPT~2018, Part~I}, volume 10820 of {\em {LNCS}}, pages 535--564.
  Springer, Heidelberg, April~/~May 2018.

\bibitem[BLVW19]{EC:BLVW19}
Zvika Brakerski, Vadim Lyubashevsky, Vinod Vaikuntanathan, and Daniel Wichs.
\newblock Worst-case hardness for {LPN} and cryptographic hashing via code
  smoothing.
\newblock In Yuval Ishai and Vincent Rijmen, editors, {\em EUROCRYPT~2019,
  Part~III}, volume 11478 of {\em {LNCS}}, pages 619--635. Springer,
  Heidelberg, May 2019.

\bibitem[BQ09]{APPROX:BogQia09}
Andrej Bogdanov and Youming Qiao.
\newblock On the security of goldreich's one-way function.
\newblock In Irit Dinur, Klaus Jansen, Joseph Naor, and Jos{\'{e}} D.~P. Rolim,
  editors, {\em Approximation, Randomization, and Combinatorial Optimization.
  Algorithms and Techniques, 12th International Workshop, {APPROX} 2009, and
  13th International Workshop, {RANDOM} 2009, Berkeley, CA, USA, August 21-23,
  2009. Proceedings}, volume 5687 of {\em Lecture Notes in Computer Science},
  pages 392--405. Springer, 2009.

\bibitem[BQ12]{BogdanovQ12}
Andrej Bogdanov and Youming Qiao.
\newblock On the security of goldreich's one-way function.
\newblock {\em Comput. Complex.}, 21(1):83--127, 2012.

\bibitem[Bra18]{C:Brakerski18}
Zvika Brakerski.
\newblock Quantum {FHE} (almost) as secure as classical.
\newblock In Hovav Shacham and Alexandra Boldyreva, editors, {\em CRYPTO~2018,
  Part~III}, volume 10993 of {\em {LNCS}}, pages 67--95. Springer, Heidelberg,
  August 2018.

\bibitem[BV11]{FOCS:BraVai11}
Zvika Brakerski and Vinod Vaikuntanathan.
\newblock Efficient fully homomorphic encryption from (standard) {LWE}.
\newblock In Rafail Ostrovsky, editor, {\em 52nd FOCS}, pages 97--106. {IEEE}
  Computer Society Press, October 2011.

\bibitem[BY91]{C:BraYun90}
Gilles Brassard and Moti Yung.
\newblock One-way group actions.
\newblock In Alfred~J. Menezes and Scott~A. Vanstone, editors, {\em CRYPTO'90},
  volume 537 of {\em {LNCS}}, pages 94--107. Springer, Heidelberg, August 1991.

\bibitem[CD23a]{EC:CasDec23}
Wouter Castryck and Thomas Decru.
\newblock An efficient key recovery attack on {SIDH}.
\newblock In Carmit Hazay and Martijn Stam, editors, {\em EUROCRYPT~2023,
  Part~V}, volume 14008 of {\em {LNCS}}, pages 423--447. Springer, Heidelberg,
  April 2023.

\bibitem[CD23b]{PKC:CouDuc23}
Geoffroy Couteau and Cl{\'e}ment Ducros.
\newblock Pseudorandom correlation functions from variable-density {LPN},
  revisited.
\newblock In Alexandra Boldyreva and Vladimir Kolesnikov, editors, {\em
  PKC~2023, Part~II}, volume 13941 of {\em {LNCS}}, pages 221--250. Springer,
  Heidelberg, May 2023.

\bibitem[CDG{\etalchar{+}}17]{C:CDGGMP17}
Chongwon Cho, Nico D{\"o}ttling, Sanjam Garg, Divya Gupta, Peihan Miao, and
  Antigoni Polychroniadou.
\newblock Laconic oblivious transfer and its applications.
\newblock In Jonathan Katz and Hovav Shacham, editors, {\em CRYPTO~2017,
  Part~II}, volume 10402 of {\em {LNCS}}, pages 33--65. Springer, Heidelberg,
  August 2017.

\bibitem[CDM{\etalchar{+}}18]{AC:CDMRR18}
Geoffroy Couteau, Aur{\'e}lien Dupin, Pierrick M{\'e}aux, M{\'e}lissa Rossi,
  and Yann Rotella.
\newblock On the concrete security of {Goldreich}'s pseudorandom generator.
\newblock In Thomas Peyrin and Steven Galbraith, editors, {\em ASIACRYPT~2018,
  Part~II}, volume 11273 of {\em {LNCS}}, pages 96--124. Springer, Heidelberg,
  December 2018.

\bibitem[CEMT09]{TCC:CEMT09}
James Cook, Omid Etesami, Rachel Miller, and Luca Trevisan.
\newblock Goldreich's one-way function candidate and myopic backtracking
  algorithms.
\newblock In Omer Reingold, editor, {\em TCC~2009}, volume 5444 of {\em
  {LNCS}}, pages 521--538. Springer, Heidelberg, March 2009.

\bibitem[CGKS95]{FOCS:CGKS95}
Benny Chor, Oded Goldreich, Eyal Kushilevitz, and Madhu Sudan.
\newblock Private information retrieval.
\newblock In {\em 36th FOCS}, pages 41--50. {IEEE} Computer Society Press,
  October 1995.

\bibitem[CJJ22]{FOCS:ChoJaiJin21}
Arka~Rai Choudhuri, Abhishek Jain, and Zhengzhong Jin.
\newblock {SNARGs} for {$\mathcal{P}$} from {LWE}.
\newblock In {\em 62nd FOCS}, pages 68--79. {IEEE} Computer Society Press,
  February 2022.

\bibitem[CLM{\etalchar{+}}18]{AC:CLMPR18}
Wouter Castryck, Tanja Lange, Chloe Martindale, Lorenz Panny, and Joost Renes.
\newblock {CSIDH}: An efficient post-quantum commutative group action.
\newblock In Thomas Peyrin and Steven Galbraith, editors, {\em ASIACRYPT~2018,
  Part~III}, volume 11274 of {\em {LNCS}}, pages 395--427. Springer,
  Heidelberg, December 2018.

\bibitem[CLRS22]{cormen2022introduction}
Thomas~H Cormen, Charles~E Leiserson, Ronald~L Rivest, and Clifford Stein.
\newblock {\em Introduction to algorithms}.
\newblock MIT press, 2022.

\bibitem[CM01]{CM01}
Mary Cryan and Peter~Bro Miltersen.
\newblock On pseudorandom generators in {NC}.
\newblock In Jiri Sgall, Ales Pultr, and Petr Kolman, editors, {\em
  Mathematical Foundations of Computer Science 2001, 26th International
  Symposium, {MFCS} 2001 Marianske Lazne, Czech Republic, August 27-31, 2001,
  Proceedings}, volume 2136 of {\em Lecture Notes in Computer Science}, pages
  272--284. Springer, 2001.

\bibitem[COT14]{EC:CouOtmTil14}
Alain Couvreur, Ayoub Otmani, and Jean-Pierre Tillich.
\newblock Polynomial time attack on wild {McEliece} over quadratic extensions.
\newblock In Phong~Q. Nguyen and Elisabeth Oswald, editors, {\em
  EUROCRYPT~2014}, volume 8441 of {\em {LNCS}}, pages 17--39. Springer,
  Heidelberg, May 2014.

\bibitem[Cou06]{EPRINT:Couveignes06}
Jean-Marc Couveignes.
\newblock Hard homogeneous spaces.
\newblock Cryptology ePrint Archive, Report 2006/291, 2006.
\newblock \url{https://eprint.iacr.org/2006/291}.

\bibitem[CPW20]{PKC:ChaPraWic20}
Suvradip Chakraborty, Manoj Prabhakaran, and Daniel Wichs.
\newblock Witness maps and applications.
\newblock In Aggelos Kiayias, Markulf Kohlweiss, Petros Wallden, and Vassilis
  Zikas, editors, {\em PKC~2020, Part~I}, volume 12110 of {\em {LNCS}}, pages
  220--246. Springer, Heidelberg, May 2020.

\bibitem[CRR21]{C:CouRinRag21}
Geoffroy Couteau, Peter Rindal, and Srinivasan Raghuraman.
\newblock Silver: Silent {VOLE} and oblivious transfer from hardness of
  decoding structured {LDPC} codes.
\newblock In Tal Malkin and Chris Peikert, editors, {\em CRYPTO~2021,
  Part~III}, volume 12827 of {\em {LNCS}}, pages 502--534, Virtual Event,
  August 2021. Springer, Heidelberg.

\bibitem[CS02]{EC:CraSho02}
Ronald Cramer and Victor Shoup.
\newblock Universal hash proofs and a paradigm for adaptive chosen ciphertext
  secure public-key encryption.
\newblock In Lars~R. Knudsen, editor, {\em EUROCRYPT~2002}, volume 2332 of {\em
  {LNCS}}, pages 45--64. Springer, Heidelberg, April~/~May 2002.

\bibitem[DGH{\etalchar{+}}20]{EC:DGHMW20}
Nico D{\"o}ttling, Sanjam Garg, Mohammad Hajiabadi, Daniel Masny, and Daniel
  Wichs.
\newblock Two-round oblivious transfer from {CDH} or {LPN}.
\newblock In Anne Canteaut and Yuval Ishai, editors, {\em EUROCRYPT~2020,
  Part~II}, volume 12106 of {\em {LNCS}}, pages 768--797. Springer, Heidelberg,
  May 2020.

\bibitem[DIJL23]{C:DIJL23}
Quang Dao, Yuval Ishai, Aayush Jain, and Huijia Lin.
\newblock Multi-party homomorphic secret sharing and sublinear {MPC} from
  sparse {LPN}.
\newblock In Helena Handschuh and Anna Lysyanskaya, editors, {\em CRYPTO~2023,
  Part~II}, volume 14082 of {\em {LNCS}}, pages 315--348. Springer, Heidelberg,
  August 2023.

\bibitem[DMN12]{AC:DotMulNas12}
Nico D{\"o}ttling, J{\"o}rn {M{\"u}ller-Quade}, and Anderson C.~A. Nascimento.
\newblock {IND}-{CCA} secure cryptography based on a variant of the {LPN}
  problem.
\newblock In Xiaoyun Wang and Kazue Sako, editors, {\em ASIACRYPT~2012}, volume
  7658 of {\em {LNCS}}, pages 485--503. Springer, Heidelberg, December 2012.

\bibitem[DVW20]{EC:DodVaiWic20}
Yevgeniy Dodis, Vinod Vaikuntanathan, and Daniel Wichs.
\newblock Extracting randomness from extractor-dependent sources.
\newblock In Anne Canteaut and Yuval Ishai, editors, {\em EUROCRYPT~2020,
  Part~I}, volume 12105 of {\em {LNCS}}, pages 313--342. Springer, Heidelberg,
  May 2020.

\bibitem[EKM17]{C:EssKubMay17}
Andre Esser, Robert K{\"u}bler, and Alexander May.
\newblock {LPN} decoded.
\newblock In Jonathan Katz and Hovav Shacham, editors, {\em CRYPTO~2017,
  Part~II}, volume 10402 of {\em {LNCS}}, pages 486--514. Springer, Heidelberg,
  August 2017.

\bibitem[Fei02]{STOC:Feige02}
Uriel Feige.
\newblock Relations between average case complexity and approximation
  complexity.
\newblock In {\em 34th ACM STOC}, pages 534--543. {ACM} Press, May 2002.

\bibitem[FKO06]{FOCS:FeiKimOfe06}
Uriel Feige, Jeong~Han Kim, and Eran Ofek.
\newblock Witnesses for non-satisfiability of dense random {3CNF} formulas.
\newblock In {\em 47th FOCS}, pages 497--508. {IEEE} Computer Society Press,
  October 2006.

\bibitem[FR23]{TCC:FisRoh23}
Marc Fischlin and Felix Rohrbach.
\newblock Searching for elfs in the cryptographic forest.
\newblock In {\em Theory of Cryptography Conference}, pages 207--236. Springer,
  2023.

\bibitem[Gen09]{STOC:Gentry09}
Craig Gentry.
\newblock Fully homomorphic encryption using ideal lattices.
\newblock In Michael Mitzenmacher, editor, {\em 41st ACM STOC}, pages 169--178.
  {ACM} Press, May~/~June 2009.

\bibitem[GKK20]{EC:GarKalKhu20}
Ankit Garg, Yael~Tauman Kalai, and Dakshita Khurana.
\newblock Low error efficient computational extractors in the {CRS} model.
\newblock In Anne Canteaut and Yuval Ishai, editors, {\em EUROCRYPT~2020,
  Part~I}, volume 12105 of {\em {LNCS}}, pages 373--402. Springer, Heidelberg,
  May 2020.

\bibitem[GKM22]{guruswami2022algorithms}
Venkatesan Guruswami, Pravesh~K Kothari, and Peter Manohar.
\newblock Algorithms and certificates for boolean csp refutation: smoothed is
  no harder than random.
\newblock In {\em Proceedings of the 54th Annual ACM SIGACT Symposium on Theory
  of Computing}, pages 678--689, 2022.

\bibitem[GKW17]{FOCS:GoyKopWat17}
Rishab Goyal, Venkata Koppula, and Brent Waters.
\newblock Lockable obfuscation.
\newblock In Chris Umans, editor, {\em 58th FOCS}, pages 612--621. {IEEE}
  Computer Society Press, October 2017.

\bibitem[GKW18]{STOC:GoyKopWat18}
Rishab Goyal, Venkata Koppula, and Brent Waters.
\newblock Collusion resistant traitor tracing from learning with errors.
\newblock In Ilias Diakonikolas, David Kempe, and Monika Henzinger, editors,
  {\em 50th ACM STOC}, pages 660--670. {ACM} Press, June 2018.

\bibitem[Gol00]{Gol00}
Oded Goldreich.
\newblock Candidate one-way functions based on expander graphs.
\newblock {\em Electronic Colloquium on Computational Complexity {(ECCC)}},
  7(90), 2000.

\bibitem[Gol11]{goldreich-prg}
Oded Goldreich.
\newblock Candidate one-way functions based on expander graphs.
\newblock {\em Studies in Complexity and Cryptography. Miscellanea on the
  Interplay between Randomness and Computation: In Collaboration with Lidor
  Avigad, Mihir Bellare, Zvika Brakerski, Shafi Goldwasser, Shai Halevi, Tali
  Kaufman, Leonid Levin, Noam Nisan, Dana Ron, Madhu Sudan, Luca Trevisan,
  Salil Vadhan, Avi Wigderson, David Zuckerman}, pages 76--87, 2011.

\bibitem[GPV08]{STOC:GenPeiVai08}
Craig Gentry, Chris Peikert, and Vinod Vaikuntanathan.
\newblock Trapdoors for hard lattices and new cryptographic constructions.
\newblock In Richard~E. Ladner and Cynthia Dwork, editors, {\em 40th ACM STOC},
  pages 197--206. {ACM} Press, May 2008.

\bibitem[GRS12]{web-coding-book}
Venkatesan Guruswami, Atri Rudra, and Madhu Sudan.
\newblock Essential coding theory.
\newblock {\em Draft available at http://www. cse. buffalo.
  edu/atri/courses/coding-theory/book}, 2(1), 2012.

\bibitem[GSW13]{C:GenSahWat13}
Craig Gentry, Amit Sahai, and Brent Waters.
\newblock Homomorphic encryption from learning with errors:
  Conceptually-simpler, asymptotically-faster, attribute-based.
\newblock In Ran Canetti and Juan~A. Garay, editors, {\em CRYPTO~2013, Part~I},
  volume 8042 of {\em {LNCS}}, pages 75--92. Springer, Heidelberg, August 2013.

\bibitem[GVW13]{STOC:GorVaiWee13}
Sergey Gorbunov, Vinod Vaikuntanathan, and Hoeteck Wee.
\newblock Attribute-based encryption for circuits.
\newblock In Dan Boneh, Tim Roughgarden, and Joan Feigenbaum, editors, {\em
  45th ACM STOC}, pages 545--554. {ACM} Press, June 2013.

\bibitem[GVW15]{C:GorVaiWee15}
Sergey Gorbunov, Vinod Vaikuntanathan, and Hoeteck Wee.
\newblock Predicate encryption for circuits from {LWE}.
\newblock In Rosario Gennaro and Matthew J.~B. Robshaw, editors, {\em
  CRYPTO~2015, Part~II}, volume 9216 of {\em {LNCS}}, pages 503--523. Springer,
  Heidelberg, August 2015.

\bibitem[Har82]{MQ-first-paper}
Juris Hartmanis.
\newblock Computers and intractability: a guide to the theory of
  np-completeness (michael r. garey and david s. johnson).
\newblock {\em Siam Review}, 24(1):90, 1982.

\bibitem[HK12]{JC:HalKal12}
Shai Halevi and Yael~Tauman Kalai.
\newblock Smooth projective hashing and two-message oblivious transfer.
\newblock {\em Journal of Cryptology}, 25(1):158--193, January 2012.

\bibitem[HKM23]{hsieh2023simple}
Jun-Ting Hsieh, Pravesh~K Kothari, and Sidhanth Mohanty.
\newblock A simple and sharper proof of the hypergraph moore bound.
\newblock In {\em Proceedings of the 2023 Annual ACM-SIAM Symposium on Discrete
  Algorithms (SODA)}, pages 2324--2344. SIAM, 2023.

\bibitem[HLOV11]{AC:HLOV11}
Brett Hemenway, Beno{\^i}t Libert, Rafail Ostrovsky, and Damien Vergnaud.
\newblock Lossy encryption: Constructions from general assumptions and
  efficient selective opening chosen ciphertext security.
\newblock In Dong~Hoon Lee and Xiaoyun Wang, editors, {\em ASIACRYPT~2011},
  volume 7073 of {\em {LNCS}}, pages 70--88. Springer, Heidelberg, December
  2011.

\bibitem[Hof12]{EC:Hofheinz12}
Dennis Hofheinz.
\newblock All-but-many lossy trapdoor functions.
\newblock In David Pointcheval and Thomas Johansson, editors, {\em
  EUROCRYPT~2012}, volume 7237 of {\em {LNCS}}, pages 209--227. Springer,
  Heidelberg, April 2012.

\bibitem[HPS98]{HofPipSil98}
Jeffrey Hoffstein, Jill Pipher, and Joseph~H. Silverman.
\newblock {NTRU:} {A} ring-based public key cryptosystem.
\newblock In {\em Third Algorithmic Number Theory Symposium (ANTS)}, volume
  1423 of {\em {LNCS}}, pages 267--288. Springer, Heidelberg, June 1998.

\bibitem[HR05]{C:HolRen05}
Thomas Holenstein and Renato Renner.
\newblock One-way secret-key agreement and applications to circuit polarization
  and immunization of public-key encryption.
\newblock In Victor Shoup, editor, {\em CRYPTO~2005}, volume 3621 of {\em
  {LNCS}}, pages 478--493. Springer, Heidelberg, August 2005.

\bibitem[HSEA14]{mceliece-polar-codes-2}
R.~Hooshmand, M.~Koochak Shooshtari, T.~Eghlidos, and M.~R. Aref.
\newblock Reducing the key length of mceliece cryptosystem using polar codes.
\newblock In {\em 2014 11th International ISC Conference on Information
  Security and Cryptology}, pages 104--108, 2014.

\bibitem[IKOS08]{STOC:IKOS08}
Yuval Ishai, Eyal Kushilevitz, Rafail Ostrovsky, and Amit Sahai.
\newblock Cryptography with constant computational overhead.
\newblock In Richard~E. Ladner and Cynthia Dwork, editors, {\em 40th ACM STOC},
  pages 433--442. {ACM} Press, May 2008.

\bibitem[Imp95]{impagliazzo-five-worlds}
Russell Impagliazzo.
\newblock A personal view of average-case complexity.
\newblock In {\em Proceedings of Structure in Complexity Theory. Tenth Annual
  IEEE Conference}, pages 134--147. IEEE, 1995.

\bibitem[Jab01]{jabri-stat-decoding}
A~Al Jabri.
\newblock A statistical decoding algorithm for general linear block codes.
\newblock In {\em Cryptography and Coding: 8th IMA International Conference
  Cirencester, UK, December 17--19, 2001 Proceedings 8}, pages 1--8. Springer,
  2001.

\bibitem[JLS21]{STOC:JaiLinSah21}
Aayush Jain, Huijia Lin, and Amit Sahai.
\newblock Indistinguishability obfuscation from well-founded assumptions.
\newblock In Samir Khuller and Virginia~Vassilevska Williams, editors, {\em
  53rd ACM STOC}, pages 60--73. {ACM} Press, June 2021.

\bibitem[JLS22]{EC:JaiLinSah22}
Aayush Jain, Huijia Lin, and Amit Sahai.
\newblock Indistinguishability obfuscation from {LPN} over {$\mathbb{F}_p$},
  {DLIN}, and {PRGs} in {${NC}^0$}.
\newblock In Orr Dunkelman and Stefan Dziembowski, editors, {\em
  EUROCRYPT~2022, Part~I}, volume 13275 of {\em {LNCS}}, pages 670--699.
  Springer, Heidelberg, May~/~June 2022.

\bibitem[JM96]{mceliece-ag-codes}
Heeralal Janwa and Oscar Moreno.
\newblock Mceliece public key cryptosystems using algebraic-geometric codes.
\newblock {\em Designs, Codes and Cryptography}, 8(3):293--307, 1996.

\bibitem[Jus72]{justesen-code}
J{\o}rn Justesen.
\newblock Class of constructive asymptotically good algebraic codes.
\newblock {\em IEEE Transactions on information theory}, 18(5):652--656, 1972.

\bibitem[KMOW17]{STOC:KMOW17}
Pravesh~K. Kothari, Ryuhei Mori, Ryan O'Donnell, and David Witmer.
\newblock Sum of squares lower bounds for refuting any {CSP}.
\newblock In Hamed Hatami, Pierre McKenzie, and Valerie King, editors, {\em
  49th ACM STOC}, pages 132--145. {ACM} Press, June 2017.

\bibitem[KMP14]{PKC:KilMasPie14}
Eike Kiltz, Daniel Masny, and Krzysztof Pietrzak.
\newblock Simple chosen-ciphertext security from low-noise {LPN}.
\newblock In Hugo Krawczyk, editor, {\em PKC~2014}, volume 8383 of {\em
  {LNCS}}, pages 1--18. Springer, Heidelberg, March 2014.

\bibitem[KO97]{FOCS:KusOst97}
Eyal Kushilevitz and Rafail Ostrovsky.
\newblock Replication is {NOT} needed: {SINGLE} database,
  computationally-private information retrieval.
\newblock In {\em 38th FOCS}, pages 364--373. {IEEE} Computer Society Press,
  October 1997.

\bibitem[KOS10]{C:KilOneSmi10}
Eike Kiltz, Adam O'Neill, and Adam Smith.
\newblock Instantiability of {RSA}-{OAEP} under chosen-plaintext attack.
\newblock In Tal Rabin, editor, {\em CRYPTO~2010}, volume 6223 of {\em {LNCS}},
  pages 295--313. Springer, Heidelberg, August 2010.

\bibitem[LJ12]{mceliece-convolutional-codes}
Carl L{\"o}ndahl and Thomas Johansson.
\newblock A new version of mceliece pkc based on convolutional codes.
\newblock In {\em Information and Communications Security: 14th International
  Conference, ICICS 2012, Hong Kong, China, October 29-31, 2012. Proceedings
  14}, pages 461--470. Springer, 2012.

\bibitem[LMW23]{STOC:LinMooWic23}
Wei-Kai Lin, Ethan Mook, and Daniel Wichs.
\newblock Doubly efficient private information retrieval and fully homomorphic
  ram computation from ring lwe.
\newblock In {\em Proceedings of the 55th Annual ACM Symposium on Theory of
  Computing}, pages 595--608, 2023.

\bibitem[LNP22]{EPRINT:LibNguPas22}
Beno{\^i}t Libert, Ky~Nguyen, and Alain Passel{\`e}gue.
\newblock Cumulatively all-lossy-but-one trapdoor functions from standard
  assumptions.
\newblock Cryptology ePrint Archive, Report 2022/1229, 2022.
\newblock \url{https://eprint.iacr.org/2022/1229}.

\bibitem[LPR10]{EC:LyuPeiReg10}
Vadim Lyubashevsky, Chris Peikert, and Oded Regev.
\newblock On ideal lattices and learning with errors over rings.
\newblock In Henri Gilbert, editor, {\em EUROCRYPT~2010}, volume 6110 of {\em
  {LNCS}}, pages 1--23. Springer, Heidelberg, May~/~June 2010.

\bibitem[LSSS17]{C:LSSS17}
Beno{\^i}t Libert, Amin Sakzad, Damien Stehl{\'e}, and Ron Steinfeld.
\newblock All-but-many lossy trapdoor functions and selective opening
  chosen-ciphertext security from {LWE}.
\newblock In Jonathan Katz and Hovav Shacham, editors, {\em CRYPTO~2017,
  Part~III}, volume 10403 of {\em {LNCS}}, pages 332--364. Springer,
  Heidelberg, August 2017.

\bibitem[LT13]{PQCRYPTO:LanTil13}
Gr{\'e}gory Landais and Jean-Pierre Tillich.
\newblock An efficient attack of a {McEliece} cryptosystem variant based on
  convolutional codes.
\newblock In Philippe Gaborit, editor, {\em Post-Quantum Cryptography - 5th
  International Workshop, PQCrypto 2013}, pages 102--117. Springer, Heidelberg,
  June 2013.

\bibitem[Mah18a]{FOCS:Mahadev18b}
Urmila Mahadev.
\newblock Classical homomorphic encryption for quantum circuits.
\newblock In Mikkel Thorup, editor, {\em 59th FOCS}, pages 332--338. {IEEE}
  Computer Society Press, October 2018.

\bibitem[Mah18b]{FOCS:Mahadev18a}
Urmila Mahadev.
\newblock Classical verification of quantum computations.
\newblock In Mikkel Thorup, editor, {\em 59th FOCS}, pages 259--267. {IEEE}
  Computer Society Press, October 2018.

\bibitem[{McE}78]{McEliece78}
Robert~J. {McEliece}.
\newblock A public-key cryptosystem based on algebraic coding theory.
\newblock The deep space network progress report 42-44, Jet Propulsion
  Laboratory, California Institute of Technology, January\slash February 1978.
\newblock \url{https://ipnpr.jpl.nasa.gov/progress_report2/42-44/44N.PDF}.

\bibitem[MMP{\etalchar{+}}23]{EC:MMPPW23}
Luciano Maino, Chloe Martindale, Lorenz Panny, Giacomo Pope, and Benjamin
  Wesolowski.
\newblock A direct key recovery attack on {SIDH}.
\newblock In Carmit Hazay and Martijn Stam, editors, {\em EUROCRYPT~2023,
  Part~V}, volume 14008 of {\em {LNCS}}, pages 448--471. Springer, Heidelberg,
  April 2023.

\bibitem[MP12]{EC:MicPei12}
Daniele Micciancio and Chris Peikert.
\newblock Trapdoors for lattices: Simpler, tighter, faster, smaller.
\newblock In David Pointcheval and Thomas Johansson, editors, {\em
  EUROCRYPT~2012}, volume 7237 of {\em {LNCS}}, pages 700--718. Springer,
  Heidelberg, April 2012.

\bibitem[MS07]{EC:MinSho07}
Lorenz Minder and Amin Shokrollahi.
\newblock Cryptanalysis of the sidelnikov cryptosystem.
\newblock In Moni Naor, editor, {\em EUROCRYPT~2007}, volume 4515 of {\em
  {LNCS}}, pages 347--360. Springer, Heidelberg, May 2007.

\bibitem[MST03]{FOCS:MosShpTre03}
Elchanan Mossel, Amir Shpilka, and Luca Trevisan.
\newblock On e-biased generators in {NC0}.
\newblock In {\em 44th FOCS}, pages 136--145. {IEEE} Computer Society Press,
  October 2003.

\bibitem[MTSB12]{EPRINT:MTSB12}
Rafael Misoczki, Jean-Pierre Tillich, Nicolas Sendrier, and Paulo S. L.~M.
  Barreto.
\newblock {MDPC-McEliece}: New {McEliece} variants from moderate density
  parity-check codes.
\newblock Cryptology ePrint Archive, Report 2012/409, 2012.
\newblock \url{https://eprint.iacr.org/2012/409}.

\bibitem[MVO91]{STOC:MenVanOka91}
Alfred Menezes, Scott~A. Vanstone, and Tatsuaki Okamoto.
\newblock Reducing elliptic curve logarithms to logarithms in a finite field.
\newblock In {\em 23rd ACM STOC}, pages 80--89. {ACM} Press, May 1991.

\bibitem[MW20]{C:MorWic20}
Tal Moran and Daniel Wichs.
\newblock Incompressible encodings.
\newblock In Daniele Micciancio and Thomas Ristenpart, editors, {\em
  CRYPTO~2020, Part~I}, volume 12170 of {\em {LNCS}}, pages 494--523. Springer,
  Heidelberg, August 2020.

\bibitem[Nie86]{Niederreiter86}
H.~Niederreiter.
\newblock Knapsack-type cryptosystems and algebraic coding theory.
\newblock {\em Problems of Control and Information Theory}, 15(2):159--166,
  1986.

\bibitem[NP01]{SODA:NaoPin01}
Moni Naor and Benny Pinkas.
\newblock Efficient oblivious transfer protocols.
\newblock In S.~Rao Kosaraju, editor, {\em 12th SODA}, pages 448--457.
  {ACM-SIAM}, January 2001.

\bibitem[OTD10]{attack-mceliece-qc-codes}
Ayoub Otmani, Jean-Pierre Tillich, and L{\'e}onard Dallot.
\newblock Cryptanalysis of two mceliece cryptosystems based on quasi-cyclic
  codes.
\newblock {\em Mathematics in Computer Science}, 3:129--140, 2010.

\bibitem[OW14]{CCC:OdoWit14}
Ryan O'Donnell and David Witmer.
\newblock Goldreich's {PRG:} evidence for near-optimal polynomial stretch.
\newblock In {\em {IEEE} 29th Conference on Computational Complexity, {CCC}
  2014, Vancouver, BC, Canada, June 11-13, 2014}, pages 1--12. {IEEE} Computer
  Society, 2014.

\bibitem[Pra62]{prange-isd}
Eugene Prange.
\newblock The use of information sets in decoding cyclic codes.
\newblock {\em IRE Transactions on Information Theory}, 8(5):5--9, 1962.

\bibitem[PRS12]{TCC:PieRosSeg12}
Krzysztof Pietrzak, Alon Rosen, and Gil Segev.
\newblock Lossy functions do not amplify well.
\newblock In Ronald Cramer, editor, {\em TCC~2012}, volume 7194 of {\em
  {LNCS}}, pages 458--475. Springer, Heidelberg, March 2012.

\bibitem[PS19]{C:PeiShi19}
Chris Peikert and Sina Shiehian.
\newblock Noninteractive zero knowledge for {NP} from (plain) learning with
  errors.
\newblock In Alexandra Boldyreva and Daniele Micciancio, editors, {\em
  CRYPTO~2019, Part~I}, volume 11692 of {\em {LNCS}}, pages 89--114. Springer,
  Heidelberg, August 2019.

\bibitem[PVW08]{C:PeiVaiWat08}
Chris Peikert, Vinod Vaikuntanathan, and Brent Waters.
\newblock A framework for efficient and composable oblivious transfer.
\newblock In David Wagner, editor, {\em CRYPTO~2008}, volume 5157 of {\em
  {LNCS}}, pages 554--571. Springer, Heidelberg, August 2008.

\bibitem[PW08]{STOC:PeiWat08}
Chris Peikert and Brent Waters.
\newblock Lossy trapdoor functions and their applications.
\newblock In Richard~E. Ladner and Cynthia Dwork, editors, {\em 40th ACM STOC},
  pages 187--196. {ACM} Press, May 2008.

\bibitem[Reg05]{STOC:Regev05}
Oded Regev.
\newblock On lattices, learning with errors, random linear codes, and
  cryptography.
\newblock In Harold~N. Gabow and Ronald Fagin, editors, {\em 37th ACM STOC},
  pages 84--93. {ACM} Press, May 2005.

\bibitem[Rob23]{EC:Robert23}
Damien Robert.
\newblock Breaking {SIDH} in polynomial time.
\newblock In Carmit Hazay and Martijn Stam, editors, {\em EUROCRYPT~2023,
  Part~V}, volume 14008 of {\em {LNCS}}, pages 472--503. Springer, Heidelberg,
  April 2023.

\bibitem[RRT23]{C:RagRinTan23}
Srinivasan Raghuraman, Peter Rindal, and Titouan Tanguy.
\newblock Expand-convolute codes for pseudorandom correlation generators from
  {LPN}.
\newblock In Helena Handschuh and Anna Lysyanskaya, editors, {\em CRYPTO~2023,
  Part~IV}, volume 14084 of {\em {LNCS}}, pages 602--632. Springer, Heidelberg,
  August 2023.

\bibitem[RS06]{EPRINT:RosSto06}
Alexander Rostovtsev and Anton Stolbunov.
\newblock {Public-Key} {Cryptosystem} {Based} {On} {Isogenies}.
\newblock Cryptology ePrint Archive, Report 2006/145, 2006.
\newblock \url{https://eprint.iacr.org/2006/145}.

\bibitem[Sid94]{mceliece-rm-codes}
Vladimir~Michilovich Sidelnikov.
\newblock A public-key cryptosystem based on binary reed-muller codes.
\newblock {\em Discrete Mathematics and Applications}, 1994.

\bibitem[SK14]{mceliece-polar-codes}
Sujan~Raj Shrestha and Young-Sik Kim.
\newblock New mceliece cryptosystem based on polar codes as a candidate for
  post-quantum cryptography.
\newblock In {\em 2014 14th International Symposium on Communications and
  Information Technologies (ISCIT)}, pages 368--372, 2014.

\bibitem[SS92]{SS92-attack-mceliece-grs}
V.~M. SIDELNIKOV and S.~O. SHESTAKOV.
\newblock On insecurity of cryptosystems based on generalized reed-solomon
  codes.
\newblock {\em Discrete Mathematics and Applications}, 2(4):439--444, 1992.

\bibitem[SV97]{FOCS:SahVad97}
Amit Sahai and Salil~P. Vadhan.
\newblock A complete promise problem for statistical zero-knowledge.
\newblock In {\em 38th FOCS}, pages 448--457. {IEEE} Computer Society Press,
  October 1997.

\bibitem[Wie10]{PQCRYPTO:Wieschebrink10}
Christian Wieschebrink.
\newblock Cryptanalysis of the niederreiter public key scheme based on {GRS}
  subcodes.
\newblock In Nicolas Sendrier, editor, {\em The Third International Workshop on
  Post-Quantum Cryptography, PQCRYPTO 2010}, pages 61--72. Springer,
  Heidelberg, May 2010.

\bibitem[WZ17]{FOCS:WicZir17}
Daniel Wichs and Giorgos Zirdelis.
\newblock Obfuscating compute-and-compare programs under {LWE}.
\newblock In Chris Umans, editor, {\em 58th FOCS}, pages 600--611. {IEEE}
  Computer Society Press, October 2017.

\bibitem[YZ16]{C:YuZha16}
Yu~Yu and Jiang Zhang.
\newblock Cryptography with auxiliary input and trapdoor from constant-noise
  {LPN}.
\newblock In Matthew Robshaw and Jonathan Katz, editors, {\em CRYPTO~2016,
  Part~I}, volume 9814 of {\em {LNCS}}, pages 214--243. Springer, Heidelberg,
  August 2016.

\bibitem[YZW{\etalchar{+}}19]{AC:YZWGL19}
Yu~Yu, Jiang Zhang, Jian Weng, Chun Guo, and Xiangxue Li.
\newblock Collision resistant hashing from sub-exponential learning parity with
  noise.
\newblock In Steven~D. Galbraith and Shiho Moriai, editors, {\em
  ASIACRYPT~2019, Part~II}, volume 11922 of {\em {LNCS}}, pages 3--24.
  Springer, Heidelberg, December 2019.

\bibitem[Zha16]{C:Zhandry16}
Mark Zhandry.
\newblock The magic of {ELFs}.
\newblock In Matthew Robshaw and Jonathan Katz, editors, {\em CRYPTO~2016,
  Part~I}, volume 9814 of {\em {LNCS}}, pages 479--508. Springer, Heidelberg,
  August 2016.

\bibitem[Zha22]{C:Zhandry22c}
Mark Zhandry.
\newblock New constructions of collapsing hashes.
\newblock In Yevgeniy Dodis and Thomas Shrimpton, editors, {\em CRYPTO~2022,
  Part~III}, volume 13509 of {\em {LNCS}}, pages 596--624. Springer,
  Heidelberg, August 2022.

\end{thebibliography}




\end{document}